
\ifdefined \issubmit \else \def \issubmit{0} \fi


\newif \ifsubmit {} \if \issubmit 0 \submitfalse \else \submittrue \fi

\ifsubmit
\documentclass[acmsmall]{acmart}
\else
\documentclass[acmsmall,review,nonacm]{acmart} 
\fi

\PassOptionsToPackage{prologue,dvipsnames}{xcolor}
\usepackage[T5,T1]{fontenc}
\usepackage{xspace}
\usepackage[frozencache=true,cachedir=.]{minted}
\setminted{fontsize=\footnotesize,
  escapeinside=\#\#,
  mathescape=true}
\setmintedinline{fontsize=\footnotesize,
  escapeinside=\&\&,
  mathescape=true}
\usepackage{microtype}                

\usepackage{float}                    
\usepackage{placeins}                 

\usepackage{hyperref}                 
\hypersetup{
    colorlinks,
    linkcolor={red!50!black},
    citecolor={blue!50!black},
    urlcolor={blue!80!black}
}

\usepackage{amsmath}                  
\usepackage{amsfonts}                 
\usepackage{amsthm}                   
\usepackage{mathtools}                

\usepackage{cleveref}

\newtheorem*{fact*}{Fact}                       
\newtheorem{definition}{Definition}[section]    
\newtheorem*{definition*}{Definition}           
\newtheorem*{proposition*}{Proposition}         
\newtheorem{theorem}{Theorem}[section]          
\newtheorem*{theorem*}{Theorem}                 
\newtheorem{lemma}[theorem]{Lemma}              
\newtheorem*{lemma*}{Lemma}                     
\newtheorem*{sublemma*}{Sublemma}               
\newtheorem*{corollary*}{Corollary}             

\usepackage[dvipsnames]{xcolor}       

\usepackage{stmaryrd}                 
\usepackage{pifont}                   
\usepackage{textcomp}                 

\usepackage{centernot}                

\usepackage{stackengine}              
\stackMath

\usepackage{nicefrac}                 

\usepackage{array}                    
\usepackage{tabularx}                 


\setlength\arraycolsep{0pt}           

\usepackage{framed}                   

\usepackage{lipsum}                   

\usepackage{enumitem}                 

\usepackage{mathpartir}               

\usepackage{galois}                   

\usepackage{tikz}

\usepackage{relsize}

\usepackage[normalem]{ulem}
\usepackage{comment}

\usepackage{colortbl}

\newcommand{\mtext}[1]{\ifmmode\operatorname{#1}\else\textnormal{#1}\fi}
\newcommand{\mtexttt}[1]{\ifmmode\operatorname{\mathtt{#1}}\else\textnormal{\texttt{#1}}\fi}
\newcommand{\mtextit}[1]{\ifmmode\operatorname{\mathit{#1}}\else\textnormal{\textit{#1}}\fi}
\newcommand{\mtextbf}[1]{\ifmmode\operatorname{\mathbf{#1}}\else\textnormal{\textbf{#1}}\fi}
\newcommand{\mtextsc}[1]{\ifmmode\operatorname{\textsc{\smaller #1}}\else\textnormal{\textsc{\smaller #1}}\fi}


\definecolor{cbsafeABright}{RGB}{8,72,145}
\definecolor{cbsafeADark}{RGB}{109,36,150}

\definecolor{cbsafeBBright}{RGB}{85,119,13}
\definecolor{cbsafeBDark}{RGB}{109,70,17}

\definecolor{cbsafeCBright}{RGB}{150,48,89}
\definecolor{cbsafeCDark}{RGB}{70,24,48}

\newcommand{\colorMATHA}{cbsafeABright}
\newcommand{\colorSYNTAXA}{cbsafeADark!80!black}
\newcommand{\colorMATHB}{cbsafeBBright}
\newcommand{\colorSYNTAXB}{cbsafeBDark}
\newcommand{\colorMATHC}{cbsafeCBright}
\newcommand{\colorSYNTAXC}{cbsafeCDark}

\newcommand{\colorTEXT}{black}

\newcommand{\colorMATH}{\colorMATHA}
\newcommand{\colorSYNTAX}{\colorSYNTAXA}

\renewcommand{\paragraph}[1]{\vspace{5pt}\noindent\textbf{#1}}

\definecolor{mygray}{gray}{0.97}
\setminted{bgcolor=white}
\usepackage{etoolbox}
\makeatletter
\patchcmd{\minted@colorbg}{\medskip}{}{}{}
\patchcmd{\endminted@colorbg}{\medskip}{}{}{}
\makeatother

\newcommand{\solo}{\textsc{Solo}\xspace}

\newcommand{\duet}{\textsc{Duet}\xspace}
\newcommand{\fuzz}{\textsc{Fuzz}\xspace}
\newcommand{\pinq}{\textsc{PINQ}\xspace}
\newcommand{\dduo}{\textsc{DDuo}\xspace}
\newcommand{\dfuzz}{\textsc{DFuzz}\xspace}
\newcommand{\fuzzi}{\textsc{Fuzzi}\xspace}
\newcommand{\dpella}{\textsc{DPella}\xspace}

\newcommand{\apRHL}{\textsc{apRHL}\xspace}
\newcommand{\apRHLplus}{\textsc{apRHL$^+$}\xspace}

\newcommand{\inline}[1]{\mintinline{haskell}{#1}}

\setlength{\abovedisplayskip}{5pt}
\setlength{\belowdisplayskip}{5pt}

\citestyle{acmauthoryear}

\author{Chike Abuah}
\email{cabuah@uvm.edu}
\affiliation{%
  \institution{University of Vermont}
}

\author{David Darais}
\email{darais@galois.com}
\affiliation{%
  \institution{Galois, Inc.}
}

\author{Joseph P. Near}
\email{jnear@uvm.edu}
\affiliation{%
  \institution{University of Vermont}
}

\title{\solo: A Lightweight Static Analysis for Differential Privacy}

\begin{abstract}
Existing approaches for statically enforcing differential privacy in higher
order languages use either linear or relational refinement types.
A barrier to adoption for these approaches is the lack of support for expressing
these ``fancy types'' in mainstream programming languages.
For example, no mainstream language supports relational refinement types, and
although Rust and modern versions of Haskell both employ some linear typing
techniques, they are inadequate for embedding enforcement of differential
privacy, which requires ``full'' linear types a la Girard.
We propose a new type system that enforces differential privacy, avoids the use
of linear and relational refinement types, and can be easily embedded in
mainstream richly typed programming languages such as Scala, OCaml and Haskell.
We demonstrate such an embedding in Haskell, demonstrate its expressiveness on
case studies, and prove soundness of our type-based enforcement of differential privacy.
\end{abstract}

\makeatletter
\let\@authorsaddresses\@empty
\makeatother

\begin{document}
\maketitle
\thispagestyle{plain}
\pagestyle{plain}

\section{Introduction}
\label{sec:intro}

Differential privacy has become the standard for protecting the privacy of individuals with formal guarantees of \emph{plausible deniability}. It has been adopted for use at several high-profile institutions such as Google \cite{rappor}, Facebook \cite{nayak_2020}, and the US Census Bureau \cite{abowd2018}. However, experience has shown that implementation mistakes are easy to make---and difficult to catch---in differentially private algorithms~\cite{lyu2017understanding}. Verifying that differentially private programs \emph{actually} ensure differential privacy is thus an important problem, given the sensitive nature of the data processed by these programs.

Recent work has made significant progress towards techniques for static verification of differentially private programs. Existing techniques typically define novel programming languages that incorporate  specialized static type systems (linear types~\cite{reed2010distance,near2019duet}, relational types~\cite{barthe2015higher}, dependent types~\cite{gaboardi2013linear}, etc.). However, there remains a major challenge in bringing these techniques to practice: the specialized features they rely on do not exist in mainstream programming languages.

We introduce \solo, a novel type system for static verification of differential privacy that does \emph{not} rely on linear types, and present a reference implementation \emph{as a Haskell library}. \solo is similar to \fuzz~\cite{reed2010distance} and its descendants in expressive power, but \solo can be implemented entirely in Haskell with no additional language extensions. In particular, \solo's sensitivity and privacy tracking mechanisms are compatible with higher-order functions, and leverage Haskell's type inference system to minimize the need for additional type annotations.

In differential privacy, the \emph{sensitivity} of a computation determines how much noise must be added to its result to achieve differential privacy. \fuzz-like languages track sensitivity relative to program variables, using a linear typing discipline. The key innovation in \solo is to track sensitivity relative to a set of global \emph{data sources} instead, which eliminates the need for linear types. Compared to prior work on static verification of differential privacy, our system can be embedded in existing programming languages without support for linear types, and supports advanced variants of differential privacy like {{\color{\colorMATH}\ensuremath{(\epsilon , \delta )}}}-differential privacy and R\'enyi differential privacy.

We describe our approach using the Haskell implementation of \solo, and demonstrate its use to verify differential privacy for practical algorithms in four case studies. We formalize a subset of \solo's sensitivity analysis and prove \emph{metric preservation}, the soundness property for this analysis.

\paragraph{Contributions.}
In summary, we make the following contributions:

\begin{itemize}[topsep=1mm,leftmargin=6mm]
\item We introduce \solo, a novel type system for the static verification of differential privacy without linear types (\S\ref{sec:solo-example}).
\item We present a reference implementation of \solo as a Haskell library, which retains support for type inference and does not require additional language extensions (\S\ref{sec:sensitivity}, \S\ref{sec:privacy}).
\item We formalize a subset of \solo's type system and prove its soundness (\S\ref{sec:formalism}).
\item We demonstrate the applicability of the \solo library in four case studies (\S\ref{sec:case}).
\end{itemize}

\section{Background}
\label{sec:background}

This section provides a summary of the fundamentals of differential privacy.
Differential privacy~\cite{dwork2006calibrating} affords a notion of \emph{plausible deniability} at the individual level to participants in aggregate data analysis queries. In principle, a differentially private algorithm {{\color{\colorMATH}\ensuremath{{\mathcal{K}}}}} over several individuals must include enough random noise to make the participation (removal/addition) of any one individual statistically unrecognizable. While this guarantee is typically in terms of a \emph{symmetric difference} of one individual, formally a distance metric between datasets {{\color{\colorMATH}\ensuremath{{\textit{d}}}}} is specified.
\begin{definition}[Differential privacy]
  For a distance metric {{\color{\colorMATH}\ensuremath{{\textit{d}}_{A} \in  A \times  A \rightarrow  {\mathbb{R}}}}}, a randomized \emph{mechanism} {{\color{\colorMATH}\ensuremath{{\mathcal{K}} \in  A \rightarrow  B }}} is ({{\color{\colorMATH}\ensuremath{\epsilon , \delta }}})-differentially private if {{\color{\colorMATH}\ensuremath{ \hspace*{0.33em}\forall  x, x^{\prime} \in  A }}} s.t. {{\color{\colorMATH}\ensuremath{{\textit{d}}_{A}(x, x^{\prime}) \leq  1}}}, considering any set S of possible outcomes, we have that: {{\color{\colorMATH}\ensuremath{ {\mtext{Pr}}[{\mathcal{K}}(x) \in  S] \leq  e^{\epsilon } {\mtext{Pr}}[{\mathcal{K}}(x^{\prime}) \in  S] + \delta  }}}.
\end{definition}
We say that two inputs {{\color{\colorMATH}\ensuremath{x}}} and {{\color{\colorMATH}\ensuremath{x^{\prime}}}} are \emph{neighbors} when {{\color{\colorMATH}\ensuremath{{\textit{d}}_{A}(x, x^{\prime}) = 1}}}. To provide meaningful privacy protection, two neighboring inputs are normally considered to differ in the data of a single individual. Thus, the definition of differential privacy ensures that the probably distribution over {{\color{\colorMATH}\ensuremath{{\mathcal{K}}}}}'s outputs will be roughly the same, whether or not the data of a single individual is included in the input. The strength of the guarantee is parameterized by the \emph{privacy parameters} {{\color{\colorMATH}\ensuremath{\epsilon }}} and {{\color{\colorMATH}\ensuremath{\delta }}}. The case when {{\color{\colorMATH}\ensuremath{\delta =0}}} is often called \emph{pure} {{\color{\colorMATH}\ensuremath{\epsilon }}}-differential privacy; the case when {{\color{\colorMATH}\ensuremath{\delta  > 0}}} is often called \emph{approximate} or {{\color{\colorMATH}\ensuremath{(\epsilon , \delta )}}}-differential privacy. When {{\color{\colorMATH}\ensuremath{\delta  > 0}}}, the {{\color{\colorMATH}\ensuremath{\delta }}} parameter can be thought of as a \emph{failure probability}: with probability {{\color{\colorMATH}\ensuremath{1-\delta }}}, the mechanism achieves pure {{\color{\colorMATH}\ensuremath{\epsilon }}}-differential privacy, but with probability {{\color{\colorMATH}\ensuremath{\delta }}}, the mechanism makes no guarantee at all (and may violate privacy arbitrarily). The {{\color{\colorMATH}\ensuremath{\delta }}} parameter is therefore set very small---values on the order of {{\color{\colorMATH}\ensuremath{10^{-5}}}} are often used. Typical values for {{\color{\colorMATH}\ensuremath{\epsilon }}} are in the range of {{\color{\colorMATH}\ensuremath{0.1}}} to {{\color{\colorMATH}\ensuremath{1}}}.

\paragraph{Sensitivity.}
The core mechanisms for differential privacy (described below) rely on the notion of \emph{sensitivity}~\cite{dwork2006calibrating} to determine how much noise is needed to achieve differential privacy. Intuitively, function sensitivity describes the rate of change of a function's output relative to its inputs, and is a scalar value that bounds this rate, in terms of some notion of distance. Formally:
\begin{definition}[Global Sensitivity]
\label{def:sensitivity}
  Given distance metrics {{\color{\colorMATH}\ensuremath{{\textit{d}}_{A}}}} and {{\color{\colorMATH}\ensuremath{{\textit{d}}_{B}}}}, a function {{\color{\colorMATH}\ensuremath{f \in  A \rightarrow  B }}} is said to be {{\color{\colorMATH}\ensuremath{s}}}\emph{-sensitive} if {{\color{\colorMATH}\ensuremath{\forall  s^{\prime} \in  {\mathbb{R}},\hspace*{0.33em} (x,y) \in  A.\hspace*{0.33em} {\textit{d}}_{A}(x,y) \leq  s^{\prime} \implies  {\textit{d}}_{B}(f(x),f(y)) \leq  s^{\prime}\mathord{\cdotp }s}}}.
\end{definition}
For example, the function {{\color{\colorMATH}\ensuremath{\lambda  x \mathrel{:} {\mathbb{R}} .\hspace*{0.33em} x + x}}} is {{\color{\colorMATH}\ensuremath{2}}}-sensitive, because its output is twice its input. Determining tight bounds on sensitivity is often the key challenge in ensuring differential privacy for complex algorithms.

\paragraph{Core Mechanisms.}
The core mechanisms that are often utilized to achieve differential privacy are the  \emph{Laplace mechanism}~\cite{dwork2014algorithmic} and the \emph{Gaussian mechanism}~\cite{dwork2014algorithmic}. Both mechanisms are defined for scalar values as well as vectors; the Laplace mechanism requires the use of the {{\color{\colorMATH}\ensuremath{L_{1}}}} distance metric and satisfies {{\color{\colorMATH}\ensuremath{\epsilon }}}-differential privacy, while the Gaussian mechanism requires the use of the {{\color{\colorMATH}\ensuremath{L_{2}}}} distance metric (which is often much smaller than {{\color{\colorMATH}\ensuremath{L_{1}}}} distance) and satisfies {{\color{\colorMATH}\ensuremath{(\epsilon , \delta )}}}-differential privacy (with {{\color{\colorMATH}\ensuremath{\delta  > 0}}}).
\begin{definition}[Laplace Mechanism]
\label{def:laplace}
Given a function {{\color{\colorMATH}\ensuremath{f \mathrel{:} A \rightarrow  {\mathbb{R}}^{d}}}} which is {{\color{\colorMATH}\ensuremath{s}}}-sensitive under the {{\color{\colorMATH}\ensuremath{L_{1}}}} distance metric {{\color{\colorMATH}\ensuremath{{\textit{d}}_{{\mathbb{R}}}(x, x^{\prime}) = \mathrel{\| }x - x^{\prime}\mathrel{\| }_{1}}}} on the function's output, the Laplace mechanism releases {{\color{\colorMATH}\ensuremath{ f(x) + Y_{1}, \ldots , Y_{d} }}}, where each of the values {{\color{\colorMATH}\ensuremath{Y_{1}, \ldots , Y_{d}}}} is drawn iid from the Laplace distribution centered at {{\color{\colorMATH}\ensuremath{0}}} with scale {{\color{\colorMATH}\ensuremath{\frac{s}{\epsilon }}}}; it satisfies {{\color{\colorMATH}\ensuremath{\epsilon }}}-differential privacy.
\end{definition}
\begin{definition}[Gaussian Mechanism]
Given a function {{\color{\colorMATH}\ensuremath{f \mathrel{:} A \rightarrow  {\mathbb{R}}^{d}}}} which is {{\color{\colorMATH}\ensuremath{s}}}-sensitive under the {{\color{\colorMATH}\ensuremath{L_{2}}}} distance metric {{\color{\colorMATH}\ensuremath{{\textit{d}}_{{\mathbb{R}}}(x, x^{\prime}) = \mathrel{\| }x - x^{\prime}\mathrel{\| }_{2}}}} on the function's output, the Gaussian mechanism releases {{\color{\colorMATH}\ensuremath{ f(x) + Y_{1}, \ldots , Y_{d} }}}, where each of the values {{\color{\colorMATH}\ensuremath{Y_{1}, \ldots , Y_{d}}}} is drawn iid from the Gaussian distribution centered at {{\color{\colorMATH}\ensuremath{0}}} with variance {{\color{\colorMATH}\ensuremath{\sigma ^{2} = \frac{2s ^{2}\ln (1.25/\delta )}{\epsilon ^{2}}}}}; it satisfies ({{\color{\colorMATH}\ensuremath{\epsilon ,\delta }}})-differential privacy for {{\color{\colorMATH}\ensuremath{\delta  > 0}}}.
\end{definition}

\paragraph{Composition.}
Multiple invocations of a privacy mechanism on the same data degrade in an additive or compositional manner. For example, the law of \emph{sequential composition} states that:
\begin{theorem}[Sequential Composition]\label{thm:sequential-composition}
If two mechanisms {{\color{\colorMATH}\ensuremath{{\mathcal{K}}_{1}}}} and {{\color{\colorMATH}\ensuremath{{\mathcal{K}}_{2}}}} with privacy costs of {{\color{\colorMATH}\ensuremath{(\epsilon _{1}, \delta _{1})}}} and {{\color{\colorMATH}\ensuremath{(\epsilon _{2}, \delta _{2})}}} respectively are executed on the same data, the total privacy cost of running both mechanisms is {{\color{\colorMATH}\ensuremath{(\epsilon _{1}+\epsilon _{2}, \delta _{1}+\delta _{2})}}}.
\end{theorem}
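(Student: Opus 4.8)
The plan is to realize the combined mechanism explicitly as the joint release $\mathcal{K}(x) := (\mathcal{K}_1(x), \mathcal{K}_2(x))$, running the two mechanisms on the same input $x$ with independent internal randomness; the same argument also covers the adaptive case, in which $\mathcal{K}_2$ additionally receives the output of $\mathcal{K}_1$, at no extra cost. It then suffices to fix arbitrary neighbors $x, x'$ with ${\textit{d}}_A(x,x') \le 1$ and an arbitrary set $S$ of outcomes in the product output space, and to show $\Pr[\mathcal{K}(x) \in S] \le e^{\epsilon_1+\epsilon_2}\,\Pr[\mathcal{K}(x') \in S] + (\delta_1 + \delta_2)$. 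I would slice $S$ along its first coordinate, writing $S_r := \{\, t \mid (r,t) \in S \,\}$, so that $\Pr[\mathcal{K}(x) \in S] = \sum_r \Pr[\mathcal{K}_1(x) = r]\cdot\Pr[\mathcal{K}_2(x) \in S_r]$ (an integral against the appropriate base measure in the continuous case), and likewise for $x'$.

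As a warm-up, consider the pure case $\delta_1 = \delta_2 = 0$. Here the two privacy guarantees give the pointwise bounds $\Pr[\mathcal{K}_1(x) = r] \le e^{\epsilon_1}\,\Pr[\mathcal{K}_1(x') = r]$ and $\Pr[\mathcal{K}_2(x) \in S_r] \le e^{\epsilon_2}\,\Pr[\mathcal{K}_2(x') \in S_r]$ for every $r$; multiplying them slice by slice and summing over $r$ yields $\Pr[\mathcal{K}(x) \in S] \le e^{\epsilon_1+\epsilon_2}\,\Pr[\mathcal{K}(x') \in S]$, which is the claim. Iterating this gives the $k$-fold statement by induction on the number of mechanisms.

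For the approximate case the additive terms must be tracked explicitly. I would partition the first-coordinate outcomes into the ``well-behaved'' set $G := \{\, r \mid \Pr[\mathcal{K}_1(x) = r] \le e^{\epsilon_1}\,\Pr[\mathcal{K}_1(x') = r] \,\}$ and its complement $G^c$. On $G$, bound the first factor pointwise by $e^{\epsilon_1}$ and apply the $(\epsilon_2,\delta_2)$-guarantee of $\mathcal{K}_2$ to each slice $S_r$, using $\sum_r \Pr[\mathcal{K}_1(x') = r] \le 1$ to absorb the accumulated $\delta_2$ contributions; on $G^c$, bound the contribution crudely by $\Pr[\mathcal{K}_1(x) \in G^c]$ and relate this to $\delta_1$ using the $(\epsilon_1,\delta_1)$-guarantee of $\mathcal{K}_1$. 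I expect the main obstacle to be precisely this last accounting: naive pointwise bookkeeping produces an additive term that is multiplicatively loose (of the form $\delta_1 + e^{\epsilon_1}\delta_2$), and recovering the tight bound $\delta_1 + \delta_2$ stated here requires the standard characterization of $(\epsilon,\delta)$-differential privacy in terms of a high-probability ``good event'' for the privacy-loss random variable (equivalently, a bound on the hockey-stick / approximate max-divergence between $\mathcal{K}(x)$ and $\mathcal{K}(x')$), rather than the raw definitional inequality. Since this is a classical result, an alternative is to cite it and reuse it as a black box in the development that follows, reserving the detailed soundness argument for the sensitivity analysis of \S\ref{sec:formalism}.
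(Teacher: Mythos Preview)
The paper does not prove this statement at all: Theorem~\ref{thm:sequential-composition} appears in the Background section as a well-known fact imported from the differential privacy literature, and in the metatheory proof (Appendix, the {{\color{\colorSYNTAX}\mtexttt{bind}}} case of Metric Preservation) it is invoked as a black box with a citation to~\cite{privacybook}. So there is nothing to compare against; your final suggestion---to cite the result and reuse it---is exactly what the paper does.

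That said, your sketch is a correct and standard outline. The pure-$\epsilon$ argument is exactly right, and you correctly flag the real issue in the approximate case: the naive slice-and-bound gives $\delta_1 + e^{\epsilon_1}\delta_2$, and tightening to $\delta_1+\delta_2$ needs the privacy-loss ``good event'' decomposition (Lemma~3.17 in Dwork--Roth). Since the paper never attempts this and simply imports the theorem, your proposal already exceeds what is required here.
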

For iterative algorithms, \emph{advanced composition}~\cite{dwork2014algorithmic} can yield tighter bounds on total privacy cost. Advanced variants of differential privacy, like R\'enyi differential privacy~\cite{mironov17} and zero-concentrated differential privacy~\cite{bun2016concentrated}, provide even tighter bounds on composition. We discuss composition in detail in Section~\ref{sec:privacy}.

\paragraph{Type Systems for Differential Privacy.}
The first static approach for verifying differential privacy in the context of higher-order programming constructs was \fuzz~\cite{reed2010distance}. \fuzz uses linear types to verify both sensitivity and privacy properties of programs, even in the context of higher-order functions. Conceptual descendents of \fuzz include \dfuzz~\cite{gaboardi2013linear}, Adaptive \fuzz~\cite{Winograd-CortHR17}, \fuzzi~\cite{zhang2019fuzzi}, \duet~\cite{near2019duet}, and the system due to Azevedo de Amorim et al.~\cite{de2019probabilistic}. Approaches based on linear types combine a high degree of automation with support for higher-order programming, but require the host language to support linear types, so none has yet been implemented in a mainstream programming language.

Our work is closest to \dpella~\cite{lobo2020programming}, a Haskell library that uses the Haskell type system for sensitivity analysis. \dpella implements a custom dynamic analysis of programs to compute privacy and accuracy information. \solo goes beyond \dpella by supporting calculation of privacy costs using Haskell's type system, in addition to sensitivity information, and we have formalized its soundness.
See Section~\ref{sec:related} for a complete discussion of related work.

\section{Overview of \solo}
\label{sec:overview}


\solo is a static analysis for differential privacy, which can be implemented as a library in Haskell. Its analysis is completely static, and it does not impose any runtime overhead. \solo requires special type annotations, but in many cases these types can be inferred, and typechecking is aided by the flexibilty of parametric polymorphism in Haskell. \solo retains many of the strengths of linear typing approaches to differential privacy, while taking a light-weight approach capable of being embedded in mainstream functional languages. Specifically, \solo:
\begin{enumerate}
  \item is capable of sensitivity analysis for general-purpose programs in the context of higher order programming.
  \item implements a privacy verification approach with separate privacy cost analysis for multiple program inputs using ideas from \duet.
  \item leverages type-level dependency on values via Haskell singleton types, allowing verification of private programs with types that reference symbolic parameters
  \item features verification of several recent variants of differential privacy including {{\color{\colorMATH}\ensuremath{(\epsilon , \delta )}}} and R\'enyi differential privacy.
\end{enumerate}
However, \solo is not intended for the verification of low-level privacy mechanisms such as the core mechanisms described previously, the exponential mechanism~\cite{dwork2014algorithmic}, or the sparse vector technique~\cite{dwork2014algorithmic}.

\paragraph{A Departure From Linear Types.}
Linear types have previously been used to track the consumption of finite resources, such as memory,
in computer programs. They have also seen popular use in differential privacy analysis to track program
sensitivity and the privacy budget expenditure. Linear types are attractive for such applications because
they provide a strategy rooted in type theory and linear logic for tracking resources throughout the
semantics of a core lambda calculus. However, while linear types are a natural fit for differential
privacy analysis, implementations of linear type systems are not commonly available in mainstream
programming languages, and when available are usually not sophisticated enough to support differential privacy analysis.
In order to facilitate an approach to static language-based privacy analysis in mainstream programming
languages, we have chosen to depart from a linear types based strategy, instead favoring an approach similar to static taint analysis.

This design decision has one huge advantage: it \textbf{enables verifying differential privacy in languages without linear types}, such as Haskell. It also brings several drawbacks, outlined below and detailed later in the paper:

\begin{itemize}[leftmargin=5mm, itemsep=4pt]
\item \emph{Functions}: Linear typing provides an explicit type for sensitive functions, indicating
the resource expenditure incurred if the function is called with certain arguments. Without linear
types baked into a programming language, it is usually impossible to annotate function types in
the required manner. However, as we will see later on, it is possible to bypass this limitation using
polymorphism (see Section~\ref{sec:functions}).

\item \emph{Recursion}: In addition to resource tracking for function introduction, linear type systems also provide a
strategy for tracking resource usage during function elimination while accounting for self-referential
functions (recursion). One example of this is a verified implementation of the \inline{map} function.
However, without linear types we must rely on trusted primitives in order to perform looping
on our private programs (see Section~\ref{sec:recursion}).

\item \emph{Decisions \& Branching}: Programs with linear type systems use annotated sum
types and modified typing rules for \inline{case} branching in order to preserve soundness. While working
from the outside, building an analysis system as a library on top of a mainstream language, we are unable
to modify the typing of \inline{case} statements, and instead impose constraints on branching. Specifically,
we disallow branching on sensitive information (which does not restrict the set of private programs
we can write) and a case analysis which returns sensitive information (or a non-deterministic value due to invocation of a privacy mechanism) must have the same sensitivity (or privacy cost) in each \inline{case} alternative (see Section~\ref{sec:conditionals}).

\end{itemize}





\paragraph{The Challenge of Sensitivity Analysis without Linear Types.}
Linear type systems track resources by attaching resource usages to individual program variables in type derivations. Without linear types, program variables are not typically available in function types---so without linear types, \emph{where do we attach sensitivities?}
Previous \emph{dynamic} sensitivity analyses~\cite{mcsherry2009, ebadi2015featherweight, zhang2018ektelo, abuah2021dduo} have attached sensitivities to \emph{values}. This approach works extremely well in a dynamic analysis, where functions are effectively inlined, so higher-order programming is easy to support.

Our static setting is more complicated. We embed sensitivities in base types---the static equivalent of the dynamic strategy of attaching sensitivities to values This approach stands in contrast to the linear-types strategy of embedding sensitivities in function types. A naive implementation of our approach effectively prevents higher-order programming, since it is impossible to give sufficiently general types to sensitive functions. Our solution involves a careful combination of type system features in the implementation language, including:
\begin{enumerate}
\item Type-level parameters to represent sensitivities symbolically
\item Type-level computation to compute symbolic sensitivity expressions
\item Parametric polymorphism to generalize types over sensitivity parameters
\end{enumerate}
Fortunately, recent versions of Haskell support all of these; our approach is also possible in other languages with sufficiently expressive type systems.

\paragraph{Threat Model.}
The threat model for \solo is ``honest but fallible''---that is, we assume the programmer \emph{intends} to write a differentially private program, but may make mistakes. \solo is intended as a tool to help the programmer implement correct differentially private programs in this context. Our approach implements a sound analysis for sensitivity and privacy, but its embedding in a larger system (Haskell) may result in weak points that a malicious programmer could exploit to subvert \solo's guarantees (unsoundness in Haskell's type system, for example). The \solo library can be used with Safe Haskell~\cite{terei2012} to address this issue; \solo exports only a set of safe primitives which are designed to enforce privacy preserving invariants that adhere to our metatheory. However, \solo's protection against malicious programmers are only as strong as the guarantees made by Safe Haskell. Our guarantees against malicious programmers are therefore similar to those provided by language-based information flow control libraries that also utilize Safe Haskell (e.g. \cite{russo2008}).

\paragraph{Soundness.}
We formalize our privacy analysis in terms of a metric preservation metatheory and prove its soundness in Section \ref{sec:formalism} via a step-indexed logical relation {w.r.t.} a step-indexed big-step semantics relation. A consequence of metric preservation is that well-typed pure functions are semantically \emph{sensitive} functions, and that well-typed monadic functions are semantically \emph{differentially private} functions. Our model includes two variants of pair and list type connectives---one sensitive and the other non-sensitive---as well as recursive functions.

\section{Avoiding Linear Types: from \fuzz to \solo}
\label{sec:solo-example}
This section introduces the usage of \solo based on code examples written in our Haskell reference implementation, and compares \solo to related techniques based on linear types.

\paragraph{Sensitivity Analysis.}
Consider the function {{\color{\colorMATH}\ensuremath{\lambda  x \mathrel{:} {\mathbb{R}} .\hspace*{0.33em} x + x}}} from Section~\ref{sec:background}, which is 2-sensitive in its argument {{\color{\colorMATH}\ensuremath{x}}}. The \fuzz language gives this function the type {{\color{\colorMATH}\ensuremath{{\mathbb{R}} \multimap _{2} {\mathbb{R}}}}}, which encodes its sensitivity directly via an annotation on the linear function connective {{\color{\colorMATH}\ensuremath{\multimap }}}. The linear type systems of \fuzz, \dfuzz, \fuzzi, \duet, and Amorim et al. contain typing rules like the following:
\begingroup\color{\colorMATH}\begin{gather*}
\inferrule*[lab={\mtextsc{ t-var}}
]{ s \geq  1
   }{
   \Gamma , x \mathrel{:} _{s} \tau  \vdash  x \mathrel{:} \tau 
}
\hfill\hspace{0pt}\hspace*{1.00em}\hspace*{1.00em}
\inferrule*[lab={\mtextsc{ t-splus}}
]{ \Gamma _{1} \vdash  e_{1} \mathrel{:} {\mathbb{R}}
\\ \Gamma _{2} \vdash  e_{2} \mathrel{:} {\mathbb{R}}
   }{
   \Gamma _{1} + \Gamma _{2} \vdash  e_{1} + e_{2} \mathrel{:} {\mathbb{R}}
}
\hfill\hspace{0pt}\hspace*{1.00em}\hspace*{1.00em}
\inferrule*[lab={\mtextsc{ t-lam}}
]{ \Gamma , x \mathrel{:}_{s} \tau _{1} \vdash  e \mathrel{:} \tau _{2}
   }{
   \Gamma  \vdash  \lambda  x \mathrel{:} \tau _{1} .\hspace*{0.33em} e \mathrel{:} \tau _{1} \multimap _{s} \tau _{2}
}
\end{gather*}\endgroup
The {\mtextsc{ t-var}} rule says that each \emph{use} of a program variable incurs a ``cost'' of 1 to total sensitivity, and the {\mtextsc{ t-lam}} rule translates the sensitivity analysis results on the function's body into a sensitivity annotation on the function type. Here, the context {{\color{\colorMATH}\ensuremath{\Gamma }}} maps program variables to types \emph{and sensitivities}. In linear type systems for differential privacy, the context {{\color{\colorMATH}\ensuremath{\Gamma }}} acts as both a type environment and a \emph{sensitivity environment}. Rules like {\mtextsc{ t-splus}} add together the sensitivity environments of their subexpressions---an operation that sums each variable's sensitivities pointwise (so {{\color{\colorMATH}\ensuremath{\{ x \mathrel{:}_{1} {\mathbb{R}}\}  + \{ x \mathrel{:}_{1} {\mathbb{R}}\}  = \{ x \mathrel{:}_{2} {\mathbb{R}}\} }}}). Using these rules, we can write down the following derivation for the function {{\color{\colorMATH}\ensuremath{\lambda  x \mathrel{:} {\mathbb{R}} .\hspace*{0.33em} x + x}}}:
\begingroup\color{\colorMATH}\begin{gather*}
\inferrule*[lab=
]{ \inferrule*[lab=
   ]{ \{  x\mathrel{:}_{1} {\mathbb{R}} \}  \vdash  x \mathrel{:} {\mathbb{R}}
   \\ \{  x\mathrel{:}_{1} {\mathbb{R}} \}  \vdash  x \mathrel{:} {\mathbb{R}}
      }{
      \{  x\mathrel{:}_{2} {\mathbb{R}} \}  \vdash  x + x \mathrel{:} {\mathbb{R}}
   }
   }{
   \{ \}  \vdash  \lambda  x \mathrel{:} {\mathbb{R}} .\hspace*{0.33em} x + x \mathrel{:} {\mathbb{R}} \multimap _{2} {\mathbb{R}}
}
\end{gather*}\endgroup
Sensitivity tracking in a linear type system is fundamentally linked to sensitivity environments mapping program variables to sensitivities, and is modeled as a co-effect (i.e. sensitivity environments are part of the context {{\color{\colorMATH}\ensuremath{\Gamma }}}).

\paragraph{Sensitivity in \solo.}
In \solo, we instead attach sensitivity environments to \emph{base types}. Our sensitivity environments associate sensitivities with \emph{data sources} (a set of global variables specified by the programmer, detailed in Section~\ref{sec:sens_environments}). For example, we can define a function that doubles its argument as follows:
\begin{minted}{haskell}
dbl :: SDouble 'Diff senv -> SDouble 'Diff (Plus senv senv)
dbl x = x <+> x
\end{minted}
We define \emph{sensitive base types}, like \inline{SDouble}, which augment base types with a distance metric (\S\ref{sec:sens_environments}) and a sensitivity environment (\inline{senv}).
The sensitivity environment \inline{senv} in the type of \inline{dbl} plays the same role as the sensitivity annotations in the context {{\color{\colorMATH}\ensuremath{\Gamma }}} in the linear typing rules above. As in the linear typing rules, the \inline{<+>} function adds the sensitivity environments of its arguments together (\inline{(Plus senv senv)}) at the type level. The \inline{<+>} function is built into \solo, and its type mirrors the {\mtextsc{ t-splus}} rule above:
\begin{minted}{haskell}
(<+>) :: SDouble 'Diff senv1 -> SDouble 'Diff senv2 -> SDouble 'Diff (Plus senv1 senv2)
\end{minted}
Functions in \solo have regular function types ({{\color{\colorMATH}\ensuremath{\tau _{1} \rightarrow  \tau _{2}}}}), and sensitivity environments are attached only to base types. In the absence of polymorphism, this difference leads directly to a significant loss of expressive power: without polymorphism, the type of \inline{dbl} would need to specify exactly what data sources are defined for the program, and how sensitive the function's \emph{input} is with respect to each one. Even \emph{with} polymorphism, there are some functions (like {{\color{\colorMATH}\ensuremath{{{\color{\colorSYNTAX}\mtexttt{map}}}}}}) for which linear-type-based approaches provide more general types. We detail the interaction between polymorphism and sensitivity environments in Section~\ref{sec:functions}.



\paragraph{Privacy Analysis.}
Sensitivity tells us how much noise we need to add to a particular value to achieve the definition of differential privacy. To determine the total privacy cost of a complete program, we need to use the sequential composition property of differential privacy. Languages based on linear types include a language fragment for operations on differentially private values (often in the form of a \emph{privacy monad}), with typing rules like the following:
\begingroup\color{\colorMATH}\begin{gather*}
\inferrule*[lab={\mtextsc{ t-laplace}}
]{ \Gamma  \vdash  e \mathrel{:} \tau 
\\ \Gamma  \sqsubseteq  {}\rceil \Gamma \lceil {}^{s}
   }{
   {}\rceil \Gamma \lceil {}^{\epsilon } \vdash  {{\color{\colorSYNTAX}\mtexttt{laplace}}}[s, \epsilon ](e) \mathrel{:} {\scriptstyle \bigcirc } \tau 
}
\hfill\hspace{0pt}\hspace*{1.00em}\hspace*{1.00em}
\inferrule*[lab={\mtextsc{ t-bind}}
]{ \Gamma _{1} \vdash  e_{1} \mathrel{:} {\scriptstyle \bigcirc } \tau _{1}
\\ \Gamma _{2}, x \mathrel{:}_{\infty } \tau _{1} \vdash  e_{2} \mathrel{:} {\scriptstyle \bigcirc } \tau _{2}
   }{
   \Gamma _{1} + \Gamma _{2} \vdash  x \leftarrow  e_{1} \mathrel{;} e_{2} \mathrel{:} {\scriptstyle \bigcirc } \tau _{2}
}
\end{gather*}\endgroup
The notation {{\color{\colorMATH}\ensuremath{{\scriptstyle \bigcirc }\tau }}} denotes differentially private values. The {\mtextsc{ t-laplace}} rule says that the {{\color{\colorMATH}\ensuremath{{{\color{\colorSYNTAX}\mtexttt{laplace}}}}}} function (\S\ref{sec:background}, Definition~\ref{def:laplace}) satisfies {{\color{\colorMATH}\ensuremath{\epsilon }}}-differential privacy, and returns a differentially private value. {{\color{\colorMATH}\ensuremath{{}\rceil \Gamma \lceil {}^{s}}}} denotes the \emph{truncation} of the context {{\color{\colorMATH}\ensuremath{\Gamma }}} to the sensitivity {{\color{\colorMATH}\ensuremath{s}}} (i.e. replacing every sensitivity in {{\color{\colorMATH}\ensuremath{\Gamma }}} with {{\color{\colorMATH}\ensuremath{s}}}). {{\color{\colorMATH}\ensuremath{\Gamma  \sqsubseteq  {}\rceil \Gamma \lceil {}^{s}}}} encodes the requirement from Definition~\ref{def:laplace} that the argument to the Laplace mechanism must be at most {{\color{\colorMATH}\ensuremath{s}}}-sensitive, and {{\color{\colorMATH}\ensuremath{{}\rceil \Gamma \lceil {}^{\epsilon }}}} replaces each sensitivity in the context with the privacy cost {{\color{\colorMATH}\ensuremath{\epsilon }}}. The {\mtextsc{ t-bind}} rule encodes sequential composition (\S\ref{sec:background}, Theorem~\ref{thm:sequential-composition}), adding up the privacy costs of both computations. For example, the rules above can show that the program {{\color{\colorMATH}\ensuremath{{{\color{\colorSYNTAX}\mtexttt{laplace}}}[2, \epsilon ](x + x)}}} satisfies {{\color{\colorMATH}\ensuremath{\epsilon }}}-differential privacy:
\begingroup\color{\colorMATH}\begin{gather*}
\inferrule*[lab=
]{ \inferrule*[lab=
   ]{ \{  x\mathrel{:}_{1} {\mathbb{R}} \}  \vdash  x \mathrel{:} {\mathbb{R}}
   \\ \{  x\mathrel{:}_{1} {\mathbb{R}} \}  \vdash  x \mathrel{:} {\mathbb{R}}
      }{
      \{  x\mathrel{:}_{2} {\mathbb{R}} \}  \vdash  x + x \mathrel{:} {\mathbb{R}}
   }
   }{
   \{ x \mathrel{:}_{\epsilon } {\mathbb{R}}\}  \vdash  {{\color{\colorSYNTAX}\mtexttt{laplace}}}[2, \epsilon ](x + x) \mathrel{:} {\scriptstyle \bigcirc } {\mathbb{R}}
}
\end{gather*}\endgroup
In \fuzz's privacy monad, the context {{\color{\colorMATH}\ensuremath{\Gamma }}} associates \emph{privacy costs} (rather than sensitivities) with program variables. In our example, the contexts in the first and second rows of the derivation contain sensitivities, while the context in the bottom row contains privacy costs. Linear function types can also encode privacy costs; the Laplace mechanism, for example, can be given the type {{\color{\colorMATH}\ensuremath{{\mathbb{R}} \multimap _{\epsilon } {\scriptstyle \bigcirc }{\mathbb{R}}}}}. Conflating sensitivity and privacy this way works well for pure {{\color{\colorMATH}\ensuremath{\epsilon }}}-differential privacy, but does not work for variants like {{\color{\colorMATH}\ensuremath{(\epsilon ,\delta )}}}-differential privacy; recent linear type systems that support these variants (e.g.~\cite{near2019duet, de2019probabilistic}) are more complex as a result.

\paragraph{Privacy in \solo.}
In \solo, we take the same approach to avoiding linear types for privacy as we did for sensitivity. We attach privacy costs (in the form of \emph{privacy environments}) to monadic values. We define a privacy monad in Haskell (\inline{EpsPrivacyMonad}, detailed in Section~\ref{sec:privacy}) for which the \inline{bind} operator adds privacy environments in the same way as the {\mtextsc{ t-bind}} rule above. We give {{\color{\colorMATH}\ensuremath{{{\color{\colorSYNTAX}\mtexttt{laplace}}}}}} the following type:
\begin{minted}{haskell}
laplace :: forall eps senv m. (TL.KnownNat (MaxSens senv),TL.KnownNat eps) =>
  Proxy eps -> SDouble senv m -> EpsPrivacyMonad (TruncateSens eps senv) Double
\end{minted}
Here, \inline{MaxSens} is a type-level operation corresponding to {{\color{\colorMATH}\ensuremath{\Gamma  \sqsubseteq  {}\rceil \Gamma \lceil {}^{s}}}} in the {\mtextsc{ t-laplace}} rule above (i.e. it ensures the maximum sensitivity of the mechanism's input is {{\color{\colorMATH}\ensuremath{s}}}), and \inline{TruncateSens} is a type-level operation corresponding to {{\color{\colorMATH}\ensuremath{{}\rceil \Gamma \lceil {}^{\epsilon }}}} (i.e. it converts the sensitivity environment \inline{senv} to a privacy environment).
The following function takes a \inline{SDouble} as input, doubles it, and applies the Laplace mechanism:
\begin{minted}{haskell}
simplePrivacyFunction :: SDouble 'Diff '[ '(o, 1) ] -> EpsPrivacyMonad '[ '(o, 2) ] Double
simplePrivacyFunction x = laplace @2 Proxy (dbl x)
\end{minted}
The type \inline{EpsPrivacyMonad '[ '(o, 2) ] Double} indicates that the function satisfies {{\color{\colorMATH}\ensuremath{\epsilon }}}-differential privacy for {{\color{\colorMATH}\ensuremath{\epsilon  = 2}}}, where \inline{o} is a type-level symbol representing a source of sensitive data. As in the previous example, Haskell is able to infer the type if the annotation is left off. Note that the maximum sensitivity of the argument to the Laplace mechanism is automatically calculated (using \inline{MaxSens}), and does not need to be specified by the programmer.

As with sensitivity analysis, we rely heavily on polymorphism to produce general types for functions that guarantee differential privacy (e.g. the \inline{laplace} function). Absent polymorphism, the type for the Laplace mechanism would need to specify an exact set of data sources and a concrete privacy cost associated with each one.


\section{Sensitivity Analysis}
\label{sec:sensitivity}

\begin{figure}

\begin{framed}
\begin{minted}{haskell}
import qualified GHC.TypeLits as TL

-- $\color{black}{\mbox{\textbf{Sources \& Sensitivity Environments} (\S\ref{sec:sens_environments})}}$
type Source = TL.Symbol                               -- sensitive data sources
data Sensitivity = InfSens | NatSens TL.Nat           -- sensitivity values
type SEnv = [(Source, Sensitivity)]                   -- sensitivity environments

-- $\color{black}{\mbox{\textbf{Distance Metrics} (\S\ref{sec:sens_environments})}}$
data NMetric = Diff | Disc                            -- distance metrics for numeric types
SDouble :: NMetric -> SEnv -> *                       -- sensitive doubles

-- $\color{black}{\mbox{\textbf{Pairs} (\S\ref{sec:pairs-lists})}}$
data CMetric = L1 | L2 | LInf                         -- metrics for compound types
SPair    :: CMetric -> (SEnv -> *) -> (SEnv -> *) -> SEnv -> *
L1Pair   = SPair L1                                   -- $\otimes $-pairs in Fuzz
L2Pair   = SPair L2                                   -- Not in Fuzz
LInfPair = SPair LInf                                 -- $\&$-pairs in Fuzz

-- $\color{black}{\mbox{\textbf{Lists} (\S\ref{sec:pairs-lists})}}$
SList    :: CMetric -> (SEnv -> *) -> SEnv -> *       -- sensitive lists
L1List   = SList L1                                   -- $\tau \hspace*{0.33em}{\mtext{list}}$ in Fuzz
L2List   = SList L2                                   -- Not in Fuzz
LInfList = SList LInf                                 -- $\tau \hspace*{0.33em}{\mtext{alist}}$ in Fuzz
\end{minted}
\vspace*{-1em}
\end{framed}

\caption{Sensitivity Types in \solo.}
\label{fig:formal_syntax_types}
\end{figure}

Prior type-based analyses for sensitivity analysis~\cite{reed2010distance, gaboardi2013linear, Winograd-CortHR17, near2019duet} focus on \emph{function sensitivity} with respect to \emph{program variables}.
%
\solo's type system, in contrast, associates sensitivity with \emph{base types} (not functions), and these sensitivities are determined with respect to \emph{data sources} (not program variables).
This difference represents a significant departure from previous systems, and is the key design feature that enables embedding \solo's type system in a language (like Haskell) without linear types.
Figure~\ref{fig:formal_syntax_types} presents the types for the sensitivity analysis in the \solo system. The rest of this section describes types in \solo and how they can be used to describe the sensitivity of a program.
We describe the privacy analysis in Section~\ref{sec:privacy}, and we formalize both analyses in Section~\ref{sec:formalism}.

\subsection{Types, Metrics, and Environments}
\label{sec:sens_environments}

This section describes Figure~\ref{fig:formal_syntax_types} in detail.
We begin with sources (written {{\color{\colorMATH}\ensuremath{o}}}), environments ({{\color{\colorMATH}\ensuremath{\Sigma }}}), metrics ({{\color{\colorMATH}\ensuremath{m}}} and {{\color{\colorMATH}\ensuremath{w}}}), types ({{\color{\colorMATH}\ensuremath{\tau }}}), and sensitive types ({{\color{\colorMATH}\ensuremath{\sigma }}}).




\paragraph{Sources \& Environments.}
Our approach makes use of the idea that a static privacy analysis of a program can be centered around a global set of sensitive \emph{data sources} which the analyst wants to preserve privacy for. Data sources are represented by type-level symbols, each of which represents a single sensitive program input (e.g. raw numeric data, a file or an IO stream). In the most common case, when data is read from a file, the source is identified by the data's filename. \solo's data sources are inspired by ideas from static taint analysis---we ``taint'' the program's data sources with sensitivity annotations that are tracked and modified throughout type-checking.
\solo tracks sensitivity \emph{relative} to data sources (i.e. \solo assumes that data sources have an ``absolute sensitivity'' of 1). In \solo, like in \fuzz, \emph{sensitivities} can be either a number or {{\color{\colorMATH}\ensuremath{\infty }}}. In \solo, numeric sensitivities are represented using type-level natural numbers.
A \emph{sensitivity environment} \inline{SEnv} is an association list of data sources and their sensitivities, and corresponds to the same concept in \fuzz.



\paragraph{Distance Metrics \& Metric-Carrying Types.}
Interpreting sensitivity requires describing how to measure distances between values (as described in Definition~\ref{def:sensitivity}); different metrics for this measurement produce different privacy properties. \solo provides support for several distance metrics including those commonly used in differentially private algorithms. The \emph{base metrics} listed in Figure~\ref{fig:formal_syntax_types} (\inline{BMetric}) are distance metrics for base types. The \emph{sensitive base types} (\inline{SBase}) are metric-carrying base types (i.e. every sensitive type must have a distance metric). For example, the type of a sensitive \inline{Double} would be \inline{SBase Double m}, where \inline{m} is a metric.
The base metrics are \inline{Diff}, the \emph{absolute difference metric}  ({{\color{\colorMATH}\ensuremath{d(x, y) = |x - y|}}}), and \inline{Disc}, the \emph{discrete metric} ({{\color{\colorMATH}\ensuremath{d(x, y) = 0\hspace*{0.33em}{\mtextit{if}}\hspace*{0.33em}x = y; 1\hspace*{0.33em}{\mtextit{otherwise}}}}}). Thus the types \inline{SBase Double Diff} and \inline{SBase Double Disc} mean very different things when interpreting sensitivity. The distance between two values {{\color{\colorMATH}\ensuremath{v_{1}, v_{2} \mathrel{:} \inline{SBase Double Diff}}}} is {{\color{\colorMATH}\ensuremath{|v_{1} - v_{2}|}}}, but the distance between two values {{\color{\colorMATH}\ensuremath{v_{3}, v_{4} \mathrel{:} \inline{SBase Double Disc}}}} is at most 1 (when {{\color{\colorMATH}\ensuremath{v_{3} \neq  v_{4}}}}).

Both of these metrics are useful in writing differentially private programs; basic mechanisms for differential privacy (like the Laplace mechanism) typically require their inputs to use the \inline{Diff} metric, while the distance between program inputs is often described using the \inline{Disc} metric. For example, we might consider a ``database'' of real numbers, each contributed by one individual; two neighboring databases in this setting will differ in exactly one of those numbers, but the change to the number itself may be unbounded. In this case, each number in the database would have the type \inline{SBase Double Disc}. \fuzz fixes the distance metric for numbers to be the absolute difference metric; \duet provides two separate types for real numbers, each with its own distance metric.

%

\paragraph{Types.}
A sensitive type in \solo carries both a metric and a sensitivity environment (e.g. \inline{SBase} has kind \inline{* -> BMetric -> SEnv -> *}). Thus, sensitivities are associated with \emph{values}, rather than with \emph{program variables} (as in \fuzz). For example, the type \inline{SDouble '[ '("sensitive_input", 1) ] 'Diff} from Section~\ref{sec:solo-example} is the type of a double value that is 1-sensitive with respect to the data source \emph{input} under the absolute difference metric. Adding such a value to itself results in the type \inline{SDouble '[ '("sensitive_input", 2) ] 'Diff}---encoding the fact that the sensitivity has doubled. In \fuzz, the same information is encoded by the sensitivities recorded in the context; but with respect to program variables rather than data sources.
Note that it is not possible to attach a sensitivity environment to a function type---\emph{only} the metric-carrying sensitive types may have associated sensitivity environments. \solo does not provide a ``sensitive function'' type connective (like \fuzz's {{\color{\colorMATH}\ensuremath{\multimap }}}); in \solo, function sensitivity must be stated in terms of the sensitivity of the function's arguments with respect to the program's data sources (more in Section~\ref{sec:functions}).

\paragraph{Operations on Sensitivity Environments.}
Section~\ref{sec:solo-example} describes several type-level functions on sensitivity environments in \solo, including \inline{Plus}, \inline{MaxSens}, and \inline{TruncateSens}. We implement these functions in \solo as Haskell type families. For example, the definition of \inline{MaxSens} appears below.
\begin{minted}{haskell}
type family MaxSens (s :: SEnv) :: TL.Nat where
  MaxSens '[] = 0
  MaxSens ('(_,n)':s) = MaxNat n (MaxSens s)
\end{minted}
The other operations are similarly defined as simple recursive functions at the type level, which mimic the mathematical definitions used earlier and in our formalism (\S\ref{sec:formalism}). The definition of \inline{Plus} is slightly more complicated, because \inline{Plus} must find matching sources in its two input environments and add their sensitivities. To make this possible, we ensure that sensitivity environments are ordered by their keys (the symbols representing data sources), and define operations like \inline{Plus} to maintain that ordering. These functions appear in Appendix~\ref{sec:type_level_defs} in the supplemental material.






\subsection{Pairs and Lists}
\label{sec:pairs-lists}

The \fuzz system contains two connectives for pairs, {{\color{\colorMATH}\ensuremath{\otimes }}} and {{\color{\colorMATH}\ensuremath{\&}}}, which differ in their metrics. The distance between two {{\color{\colorMATH}\ensuremath{\otimes }}} pairs is the sum of the distances between their elements, while the distance between two {{\color{\colorMATH}\ensuremath{\&}}} pairs is the maximum of distances between their elements.
\solo provides a single pair type, \inline{SPair}, that can express both types by specifying a \emph{compound metric} \inline{CMetric}.

\paragraph{Compound Metrics.}
In \solo, metrics for compound types are derived from standard vector-space distance metrics. For example, a sensitive pair has the type \inline{SPair w} where \inline{w} is one of the compound metrics in Figure~\ref{fig:formal_syntax_types} (\inline{L1}, the {{\color{\colorMATH}\ensuremath{L_{1}}}} (or \emph{Manhattan}) distance; \inline{L2}, the {{\color{\colorMATH}\ensuremath{L_{2}}}} (or \emph{Euclidian}) distance; or \inline{LInf}, the {{\color{\colorMATH}\ensuremath{L_{\infty }}}} distance).
%
%
Thus we can represent \fuzz's {{\color{\colorMATH}\ensuremath{\otimes }}} pairs in \solo using the \inline{SPair L1} type constructor, and \fuzz's {{\color{\colorMATH}\ensuremath{\&}}} pairs using \inline{SPair LInf}. We can construct pairs from sensitive values using the following two functions:
\begin{minted}{haskell}
makeL1Pair :: a m s#$_{1}$# -> b m s#$_{2}$# -> SPair L1 a b (Plus s#$_{1}$# s#$_{2}$#)        -- Fuzz's $\otimes $-pair
makeLInfPair :: a m s#$_{1}$# -> b m s#$_{2}$# -> SPair LInf a b (Join s#$_{1}$# s#$_{2}$#)    -- Fuzz's &-pair
\end{minted}
Here, the \inline{Plus} operator for sensitivity environments performs elementwise addition on sensitivities, and the \inline{Join} operator performs elementwise maximum.

\paragraph{Lists.}
\fuzz defines the list type {{\color{\colorMATH}\ensuremath{\tau \hspace*{0.33em}{{\color{\colorSYNTAX}\mtexttt{list}}}}}}, and gives types to standard operators over lists reflecting their sensitivities. In \solo, we define the \inline{SList} type to represent sensitive lists. Sensitive lists in \solo carry a metric, in the same way as sensitive pairs, and can only contain metric-carrying types. The type of a sensitive list of doubles with the {{\color{\colorMATH}\ensuremath{L_{1}}}} distance metric, for example, is \inline{SList L1 SDouble}; this type corresponds to \fuzz's {{\color{\colorMATH}\ensuremath{\otimes }}}-lists. The type \inline{SList LInf SDouble} corresponds to \fuzz's {{\color{\colorMATH}\ensuremath{\&}}}-lists. \fuzz does not provide the equivalent of \inline{SList L2 SDouble}, which uses the {{\color{\colorMATH}\ensuremath{L_{2}}}} distance metric.

The distance metrics available in \solo are useful for writing practical differentially private programs. For example, we might want to sum up a list of sensitive numbers drawn from a database.
The typical definition of neighboring databases tells us that the distance between two such lists is equal to the number of elements which differ---and those elements may differ by any amount. As a result, their sums may also differ by any amount, and the sensitivity of the computation is unbounded. To address this problem, differentially private programs often \emph{clip} (or ``top-code'') the input data, which enforces an upper bound on input values and results in bounded sensitivity. We can implement this process in a \solo program:
\begin{minted}{haskell}
db   :: L1List (SDouble Disc) '[ '( "input_db", 1 ) ]
clip :: L1List (SDouble Disc) senv -> L1List (SDouble Diff) senv
sum  :: L1List (SDouble Diff) senv -> SDouble Diff senv

summationFunction :: L1List (SDouble Disc) senv -> SDouble Diff senv
summationFunction = sum . clip

summationResult :: SDouble Diff '[ '( "input_db", 1 ) ]
summationResult = summationFunction db
\end{minted}
Here, the \inline{clip} function limits each element of the list to lie between 0 and 1, which allows changing the metric on the underlying \inline{SDouble} from the discrete metric to the absolute difference metric (which is the metric required by the \inline{sum} function). Without the use of \inline{clip} in \inline{summationFunction}, the metrics would not match, and the program would not be well-typed.



\subsection{Function Sensitivity \& Higher-Order Functions}
\label{sec:functions}

In \fuzz, an {{\color{\colorMATH}\ensuremath{s}}}-sensitive function is given the type {{\color{\colorMATH}\ensuremath{\tau _{1} \multimap _{s} \tau _{2}}}}. \solo does not have sensitive function types, but we have already seen examples of the approach used in \solo to bound function sensitivity: we write function types that are polymorphic over sensitivity environments.
In general, we can recover the notion of an {{\color{\colorMATH}\ensuremath{s}}}-sensitive function in \solo by writing a Haskell function type that scales the sensitivity environment of its input by a scalar $s$:
\begin{minted}{haskell}
s_sensitive :: SDouble senv m -> SDouble (ScaleSens senv s) m   -- An s-sensitive function
\end{minted}
Here, \inline{ScaleSens} is implemented as a type family that scales the sensitivity environment \inline{senv} by \inline{s}: for each mapping {{\color{\colorMATH}\ensuremath{o \mapsto  s_{1}}}} in \inline{senv}, the scaled sensitivity environment contains the mapping {{\color{\colorMATH}\ensuremath{o \mapsto  s\mathord{\cdotp }s_{1}}}}.
The common case of a {{\color{\colorMATH}\ensuremath{1}}}-sensitive (or linear) function can be represented by keeping the input's sensitivity environment unchanged (as in \inline{clip} and \inline{sum} in the previous section):
\begin{minted}{haskell}
one_sensitive :: SDouble senv m -> SDouble senv m               -- A 1-sensitive function
\end{minted}
%







\paragraph{Sensitive Higher-Order Operations}
An important goal in the design of \solo is support for sensitivity analysis for higher-order, general-purpose programs. For example, prior systems such as \fuzz and \duet encode the type for the higher-order {{\color{\colorMATH}\ensuremath{{\mtext{map}}}}} function as follows:
\begingroup\color{\colorMATH}\begin{gather*}
\begin{array}{rcl
} {\mtext{map}} &{}\mathrel{:}{}& (\tau _{1} \multimap _{s} \tau _{2}) \multimap _{\infty } {\mtext{list}}\hspace*{0.33em}\tau _{1} \multimap _{s} {\mtext{list}}\hspace*{0.33em}\tau _{2}
\end{array}
\end{gather*}\endgroup
This {{\color{\colorMATH}\ensuremath{{\mtext{map}}}}} function describes a computation that accepts as inputs: an {{\color{\colorMATH}\ensuremath{s}}}-sensitive unary function from values of type {{\color{\colorMATH}\ensuremath{\tau _{1}}}} to values of type {{\color{\colorMATH}\ensuremath{\tau _{2}}}}  ({{\color{\colorMATH}\ensuremath{{\mtext{map}}}}} is allowed to apply this function an unlimited number of times), and a list of values of type {{\color{\colorMATH}\ensuremath{\tau _{1}}}}. {{\color{\colorMATH}\ensuremath{{\mtext{map}}}}} returns a list of values of type {{\color{\colorMATH}\ensuremath{\tau _{2}}}} which is {{\color{\colorMATH}\ensuremath{s}}}-sensitive in the former list.
We can give an equivalent type to \inline{map} in \solo as follows, by explicitly scaling the appropriate sensitivity environments using type-level arithmetic:
\begin{minted}{haskell}
map :: #$\forall $# m s s1 a b. (#$\forall $# s'. a s' -> b (s * s')) -> SList m a s1 -> SList m b (s * s1)
\end{minted}

\paragraph{Polymorphism for Sensitive Function Types.}
Special care is needed for functions that close over sensitive values, especially in the context of higher-order functions like \inline{map}. Consider the following example:
\begin{minted}{haskell}
dangerousMap :: SDouble m1 s1 -> SList m2 (SDouble m1) s2 -> _
dangerousMap x ls = let f y = x in map f ls
\end{minted}
Note that \inline{f} is \emph{not} a function that is {{\color{\colorMATH}\ensuremath{s}}}-sensitive with respect to its input---instead, it is {{\color{\colorMATH}\ensuremath{s_{1}}}}-sensitive with respect to the closed-over value of \inline{x}. This use of \inline{map} is dangerous, because it may apply \inline{f} many times, creating duplicate copies of \inline{x} without accounting for the sensitivity effect of this operation. \fuzz assigns an infinite sensitivity for \inline{x} in this program.
\solo rejects this program as not well-typed. The type of \inline{f} is \inline{SDouble m1 s3 -> SDouble m1 s1}, but \inline{map} requires it to have the type \inline{(&$\forall $& s'. a s' -> b (s * s'))}---and these two types do not unify. Specifically, the scope of the sensitivity environment \inline{s'} is limited to \inline{f}'s type---but in the situation above, the environment \inline{s1} comes from \emph{outside} of that scope.

The use of parametric polymorphism to limit the ability of higher-order functions to close over sensitive values is key to our ability to support this kind of programming. Without it, we would not be able to give a type for \inline{map} that ensures soundness of the sensitivity analysis. The use of parametric polymorphism to aid in information flow analysis has been previously noted~\cite{10.1145/2784731.2784733}, and is also key to the treatment of sensitivity in \dpella~\cite{lobo2020programming}.

\subsection{Recursion}
\label{sec:recursion}

In \fuzz, it is possible not only to write the type of {{\color{\colorMATH}\ensuremath{map}}}, but to \emph{infer} it from the definition. The \fuzz type system contains general recursive datatypes, and its typing rules admit recursive programs over those datatypes (like {{\color{\colorMATH}\ensuremath{map}}}).

\solo's sensitive list types are less powerful than \fuzz's. In \solo, it is possible to give types to recursive functions over lists (like \inline{map}, as seen in Section~\ref{sec:functions}). However, it is not possible to typecheck the \emph{implementations} of these functions using \solo's types, since the structure of a sensitive list is opaque to programs written using the \solo library. Hence \inline{map} is provided as a trusted primitive with sound typing.
This restriction is shared by systems like \duet and \dpella; as demonstrated by our case studies in Section~\ref{sec:case}, it is not typically a barrier to writing differentially private programs in practice.


\subsection{Conditionals}
\label{sec:conditionals}

Sensitivity analysis for conditionals requires care, whether or not linear types are used. The primary challenge is that branching on a sensitive value typically implies \emph{infinite} sensitivity with respect to its variables, since the resulting control flow reveals information about the condition. Systems like \fuzz handle this problem using a {\mtextsc{ case}} rule that scales the sensitivity environment used to typecheck condition by the number of times the result is used in the two branches. In practice, this approach often disallows branching on sensitive values, since the distance metric for boolean values says that true and false are infinitely far apart.

In \solo, as in many systems for static information flow control~\cite{myers1999}, we disallow branching on sensitive values altogether. This restriction is implemented implicitly by the opacity of sensitive types (e.g. it is not possible to compare two \inline{SDouble} values and obtain a boolean, so branching on \inline{SDouble} values is impossible). This restriction does not significantly limit the set of differentially private programs that we can write with \solo; except for special mechanisms like the Sparse Vector Technique~\cite{dwork2014algorithmic}, differentially private algorithms generally do not branch on sensitive values anyway, because doing so would violate privacy.

It \emph{is} possible (and desirable) to allow branching on differentially private values (i.e. noisy values). The basic mechanisms in \solo (e.g. \inline{laplace}) return regular Haskell values (e.g. \inline{Double} instead of \inline{SDouble}), and branching on these values can be done in the usual way.

\section{Privacy Analysis}
\label{sec:privacy}

The goal of static privacy analysis is to check that (1) the program adds the correct amount of noise for the sensitivity of underlying computations (i.e. that core mechanisms are used correctly), and (2) the program composes privacy-preserving computations correctly (i.e. the total privacy cost of the program is correct, according to differential privacy's composition properties). A well-typed program should satisfy both conditions. As described earlier, sensitivity analysis often supports privacy analysis, especially in systems based on linear types.

Previous work has taken several approaches to static privacy analysis; we provide a summary in the next section. \solo provides a \emph{privacy monad} that encodes privacy as an effect. As in our sensitivity analysis, the primary difference between \solo and previous work is that our privacy monad tracks privacy cost with respect to data sources, rather than program variables. This distinction allows the implementation of \solo's privacy monad in Haskell, and additionally enables our approach to describe variants of differential privacy without linear group privacy (e.g. {{\color{\colorMATH}\ensuremath{(\epsilon , \delta )}}}-differential privacy).

\subsection{Existing Approaches for Privacy Analysis}

The \fuzz language pioneered static verification of {{\color{\colorMATH}\ensuremath{\epsilon }}}-differential privacy, using a linear type system to track sensitivity of data transformations. In this approach, the linear function space can be interpreted as a space of {{\color{\colorMATH}\ensuremath{\epsilon }}}-differentially private functions by lifting into the probability monad.  However, more advanced variants of differential privacy such as {{\color{\colorMATH}\ensuremath{(\epsilon , \delta )}}} differential privacy do not satisfy the restrictions placed on the interpretation of the linear function space in this approach, and \fuzz cannot be easily extended to support these variants. Azevedo de Amorim et al.~\cite{de2019probabilistic} provide an extensive discussion of this challenge.

More recently, Lobo-Vesga et al. in DPella present an approach in Haskell which tracks sensitivity via data types which are indexed with their \emph{accumulated stability} i.e. sensitivity. Typically in privacy analysis we consider sensitivity to be a property of functions, however as they show, we can also represent sensitivity via the arguments to these functions. Their approach represents private computations via a monad value and monadic operations, similar to the approach in \fuzz. However, in the absence of true linear types, their approach relies on dynamic taint analysis and runtime symbolic execution.

The technique of separating sensitivity composition from privacy composition has been seen before, subsequent to \fuzz, in order to facilitate {{\color{\colorMATH}\ensuremath{(\epsilon , \delta )}}}-differential privacy. Azevedo de Amorim et al.~\cite{de2019probabilistic} introduce a \emph{path construction} technique which performs a \emph{parameterized comonadic lifting} of a metric space layer \`a la \fuzz to a separate relational space layer for {{\color{\colorMATH}\ensuremath{(\epsilon , \delta )}}} differential privacy. The \duet system~\cite{near2019duet} uses a dual type system, with dedicated systems for sensitive composition and privacy composition. In principle, this follows a combined effect/co-effect system approach~\cite{coeffects-thesis}, where one type system tracks the co-effect (in this case sensitivity) and another tracks the effect which is randomness due to privacy.

Our approach embodies the spirit of \duet and simulates coeffectful program behavior by embedding the co-effect (i.e. the entire sensitivity environment) as an index in comonadic base data types. We then track privacy composition via a special monadic type as an effect. As in \duet, the core privacy mechanisms such as Laplace and Gauss police the boundary between the two. Due to the nature of our co-effect oriented approach in which we track the full sensitivity context, our solution can be embedded in Haskell completely statically, without the need for runtime dynamic symbolic execution. We are also able to verify advanced privacy variants such as {{\color{\colorMATH}\ensuremath{(\epsilon , \delta )}}} and state-of-the-art composition theorems such as advanced composition and the moments accountant via a family of higher-order primitives.

\paragraph{Monads \& Effect Systems.}
Effect systems are known for providing more detailed static type information than possible with monadic typing. They are the topic of a variety of research on enhancing monadic types with program effect information, in order to provide stricter static guarantees. Orchard et al \cite{orchard2014}, following up on initial work by Wadler and Thiemann \cite{wadler2003}, provide a denotational semantics which unify effect systems with a monadic-style semantics as an \emph{parametric effect monad}, establishing an isomorphism between indices of the denotations and the effect annotations of traditional effect systems. They present a formulation of parametric effect monads which generalize monads to include annotation of an effect with a strict monoidal structure. Below typing rules of the general parametric effect monad are shown:
\begingroup\color{\colorMATH}\begin{gather*}
\begin{tabularx}{\linewidth}{>{\centering\arraybackslash\(}X<{\)}}
  \hfill\hspace{0pt}
  \inferrule*[lab={\mtextsc{ Bind}}
  ]{ f \mathrel{:} a \rightarrow  {\begingroup\renewcommand\colorMATH{\colorMATHB}\renewcommand\colorSYNTAX{\colorSYNTAXB}{{\color{\colorMATH}\ensuremath{M}}}\endgroup }\hspace*{0.33em}{\begingroup\renewcommand\colorMATH{\colorMATHA}\renewcommand\colorSYNTAX{\colorSYNTAXA}{{\color{\colorMATH}\ensuremath{r}}}\endgroup }\hspace*{0.33em}b
  \\ g \mathrel{:} b \rightarrow  {\begingroup\renewcommand\colorMATH{\colorMATHB}\renewcommand\colorSYNTAX{\colorSYNTAXB}{{\color{\colorMATH}\ensuremath{M}}}\endgroup }\hspace*{0.33em}{\begingroup\renewcommand\colorMATH{\colorMATHA}\renewcommand\colorSYNTAX{\colorSYNTAXA}{{\color{\colorMATH}\ensuremath{s}}}\endgroup }\hspace*{0.33em}c
     }{
     \lambda x.\hspace*{0.33em}f\hspace*{0.33em}x >>= g \mathrel{:} a \rightarrow  {\begingroup\renewcommand\colorMATH{\colorMATHB}\renewcommand\colorSYNTAX{\colorSYNTAXB}{{\color{\colorMATH}\ensuremath{M}}}\endgroup }\hspace*{0.33em}({\begingroup\renewcommand\colorMATH{\colorMATHA}\renewcommand\colorSYNTAX{\colorSYNTAXA}{{\color{\colorMATH}\ensuremath{r}}}\endgroup }\otimes {\begingroup\renewcommand\colorMATH{\colorMATHA}\renewcommand\colorSYNTAX{\colorSYNTAXA}{{\color{\colorMATH}\ensuremath{s}}}\endgroup })\hspace*{0.33em}c
  }
  \hfill\hspace{0pt}
  \inferrule*[lab={\mtextsc{ Return}}
  ]{
     }{
     return \mathrel{:} a \rightarrow  {\begingroup\renewcommand\colorMATH{\colorMATHB}\renewcommand\colorSYNTAX{\colorSYNTAXB}{{\color{\colorMATH}\ensuremath{M}}}\endgroup }\hspace*{0.33em}\varnothing \hspace*{0.33em}a
  }
  \hfill\hspace{0pt}
\end{tabularx}
\end{gather*}\endgroup
\noindent These typing rules describe a formulation of parametric effect monads $M$ which accept an effect index as their first argument. This effect index of some arbitrary type E is a monoid $(E,\otimes ,\varnothing )$.

\subsection{\solo's Privacy Monad}

\solo defines \emph{privacy environments} in the same way as sensitivity environments; instead of tracking a sensitivity with respect to each of the program's data sources, however, a privacy environment tracks a \emph{privacy cost} associated with each data source. Privacy environments for pure {{\color{\colorMATH}\ensuremath{\epsilon }}}-differential privacy are defined as follows:
\begin{minted}{haskell}
-- $\color{black}{\mbox{\textbf{Privacy Environments}}}$
data EpsPrivacyCost = InfEps | EpsCost TLRat    -- values for $\epsilon $
type EpsPrivEnv = [(Source, EpsPrivacyCost)]    -- privacy environments, $\epsilon $-differential privacy
\end{minted}
\inline{TLRat} is a type-level encoding of positive rational numbers by a pair of the numerator and denominator as natural numbers in GCD-reduced form.

The \emph{sequential composition} theorem for differential privacy (Theorem \ref{thm:sequential-composition}) says that when sequencing {{\color{\colorMATH}\ensuremath{\epsilon }}}-differentially private computations, we can add up their privacy costs. This theorem provides the basis for the definition of a privacy monad.
We observe that our privacy environments have a monoidal structure \inline{(EpsPrivEnv, EpsSeqComp, '[])}, where \inline{EpsSeqComp} is a type family implementing the sequential composition theorem. We derive a privacy monad which is indexed by our privacy environments, in the same style as a notion of effectful monads or \emph{parametric effect monads} given separately by Orchard \cite{orchard2014, orchard2014A} and Katsumata \cite{katsumata2014}. Computations of type \inline{PrivacyMonad} are constructing via these core functions:
\begin{minted}{haskell}
-- $\color{black}{\mbox{\textbf{Privacy Monad for {{\color{\colorMATH}\ensuremath{\epsilon }}}-differential privacy}}}$
return :: a -> EpsPrivacyMonad '[] a
(>>=) :: EpsPrivacyMonad p#$_{1}$# a -> (a -> EpsPrivacyMonad p#$_{2}$# b) -> EpsPrivacyMonad (EpsSeqComp p#$_{1}$# p#$_{2}$#) b
\end{minted}
The \inline{return} operation accepts some value and embeds it in the \inline{PrivacyMonad} without causing any side-effects. The \inline{(>>=)} (\inline{bind}) operation allows us to sequence private computations using differential privacy's sequential composition property, encoded here as the type family \inline{EpsSeqComp}. The implementation of \inline{EpsSeqComp} performs elementwise summation of two privacy environments. In the computation \inline{f>>=g} we execute the private computation \inline{f} for some polymorphic privacy cost \inline{p&$_{1}$&}, pass its result to the private computation \inline{g}, and output the result of \inline{g} at a total privacy cost of the degradation of the \inline{p&$_{1}$&} and \inline{p&$_{2}$&} privacy environments combined according to sequential composition. Note that while \inline{PrivacyMonad} is not a regular monad in Haskell (due to the extra index in its type) we may still make use of \inline{do}-notation in our examples by using Haskell's \inline{RebindableSyntax} language extension.

Note that \inline{return} in \solo's privacy monad is very different from the same operator in \fuzz. The typing rule for {{\color{\colorMATH}\ensuremath{{{\color{\colorSYNTAX}\mtexttt{return}}}}}} in \fuzz scales the sensitivities in the context by {{\color{\colorMATH}\ensuremath{\infty }}}---reflecting the idea that {{\color{\colorMATH}\ensuremath{{{\color{\colorSYNTAX}\mtexttt{return}}}}}}'s argument is revealed with no added noise, incurring infinite privacy cost. However, this definition of {{\color{\colorMATH}\ensuremath{{{\color{\colorSYNTAX}\mtexttt{return}}}}}} does not satisfy the monad laws. The \fuzz privacy monad is described as ``monad-like,'' and is intentionally designed not to satisfy the laws. For example, in \fuzz, {{\color{\colorMATH}\ensuremath{{{\color{\colorSYNTAX}\mtexttt{return}}}\hspace*{0.33em}x \gg = {{\color{\colorSYNTAX}\mtexttt{laplace}}} \neq  {{\color{\colorSYNTAX}\mtexttt{laplace}}}\hspace*{0.33em}x}}}.
%
%
In \solo, the \inline{return} operator attaches an empty privacy environment to the returned value, and does satisfy the monad laws. If a sensitive value is given as the argument to \inline{return}, then \emph{it remains sensitive}, rather than being revealed (as in \fuzz)---so there is no need to assign the value an infinite privacy cost. This approach is not feasible in \fuzz because privacy costs are associated with program variables rather than with values. We can recover \fuzz's {{\color{\colorMATH}\ensuremath{{{\color{\colorSYNTAX}\mtexttt{return}}}}}} behavior (revealing a value without noise, and scaling its privacy cost by infinity) using a \inline{reveal} function with the following type:
\begin{minted}{haskell}
reveal :: SDouble m senv -> EpsPrivacyMonad (ScaleToInfinity senv) Double
\end{minted}

\paragraph{Core Privacy Mechanisms.}
We can define core privacy mechanisms like the Laplace mechanism (described in Section~\ref{sec:background}), which satisfies {{\color{\colorMATH}\ensuremath{\epsilon }}}-differential privacy:
\begin{minted}{haskell}
laplace     :: Proxy #$\epsilon $# -> SDouble s Diff -> EpsPrivacyMonad (TruncateSens #$\epsilon $# s) Double
listLaplace :: Proxy #$\epsilon $# -> L1List (SDouble Diff) s -> EpsPrivacyMonad (TruncateSens #$\epsilon $# s) [Double]
\end{minted}
The first argument to \inline{laplace} is the privacy parameter {{\color{\colorMATH}\ensuremath{\epsilon }}} (as a type-level natural). The second argument is the value we would like to add noise to; it must be a sensitive number with the \inline{Diff} metric. The function's result is a regular Haskell \inline{Double}, in the privacy monad. The \inline{TruncateSens} type family transforms a sensitivity environment into a privacy environment by replacing each sensitivity with the privacy parameter {{\color{\colorMATH}\ensuremath{\epsilon }}}. The function's implementation follows the definition of the Laplace mechanism; it determines the scale of the noise to add using the maximum sensitivity in the sensitivity environment \inline{s} and the privacy parameter {{\color{\colorMATH}\ensuremath{\epsilon }}}.

The \inline{listLaplace} function implements the \emph{vector-valued Laplace mechanism}, which adds noise to each element of a vector based on the vector's {{\color{\colorMATH}\ensuremath{L_{1}}}} sensitivity. Its argument is required to be a \inline{L1List} of sensitive doubles with the \inline{Diff} metric, and its output is a list of Haskell doubles in the privacy monad.
As a simple example, the following function adds noise to its input twice, once with {{\color{\colorMATH}\ensuremath{\epsilon  = 2}}} and once with {{\color{\colorMATH}\ensuremath{\epsilon  = 3}}}, for a total privacy cost of {{\color{\colorMATH}\ensuremath{\epsilon  = 5}}}. If the type annotation is left off, Haskell infers this type.
\begin{minted}{haskell}
addNoiseTwice :: TL.KnownNat (MaxSens s) =>
  SDouble s Diff -> EpsPrivacyMonad (Plus (TruncateSens 2 s) (TruncateSens 3 s)) Double
addNoiseTwice x = do
  y#$_{1}$# <- laplace @2 Proxy x
  y#$_{2}$# <- laplace @3 Proxy x
  return $ y#$_{1}$# + y#$_{2}$#
\end{minted}

\subsection{{{\color{\colorMATH}\ensuremath{(\epsilon , \delta )}}}-Differential Privacy \& Advanced Composition}

The \emph{advanced composition theorem} for differential privacy~\cite{dwork2014algorithmic} provides tighter bounds on the privacy cost of iterative algorithms, but requires the use of {{\color{\colorMATH}\ensuremath{(\epsilon , \delta )}}}-differential privacy.
\begin{theorem}[Advanced composition]
  For {{\color{\colorMATH}\ensuremath{0 < \epsilon ^{\prime} < 1}}} and {{\color{\colorMATH}\ensuremath{\delta ^{\prime} > 0}}}, the class of {{\color{\colorMATH}\ensuremath{(\epsilon , \delta )}}}-differentially private mechanisms satisfies {{\color{\colorMATH}\ensuremath{(\epsilon ^{\prime}, k\delta  + \delta ^{\prime})}}}-differential privacy under {{\color{\colorMATH}\ensuremath{k}}}-fold adaptive composition for:
\begingroup\color{\colorMATH}\begin{gather*} \epsilon ^{\prime} = 2 \epsilon  \sqrt {2k\ln (1/\delta ^{\prime})}
\end{gather*}\endgroup
\end{theorem}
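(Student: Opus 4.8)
The plan is to follow the standard \emph{privacy-loss random variable} argument (due to Dwork, Rothblum and Vadhan, presented as Theorem~3.20 in~\cite{dwork2014algorithmic}); the statement above is its simplified form, valid in the regime where the displayed closed form for $\epsilon'$ dominates the lower-order term and where $\epsilon' < 1$. Fix neighbouring inputs $x, x'$ and an adaptive adversary that, having observed outputs $y_1,\dots,y_{i-1}$, selects the $i$-th $(\epsilon,\delta)$-DP mechanism $\mathcal{K}_i$ together with an input for it; let $(Y_1,\dots,Y_k)$ be the outputs of the composed mechanism run on $x$, let $\mathcal{F}_i$ be the $\sigma$-algebra generated by $Y_1,\dots,Y_i$, and define the per-step privacy loss $L_i = \ln\frac{\Pr[\mathcal{K}_i = Y_i \mid \mathcal{F}_{i-1},\, x]}{\Pr[\mathcal{K}_i = Y_i \mid \mathcal{F}_{i-1},\, x']}$ and the total loss $L = \sum_{i=1}^k L_i$. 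The privacy-loss characterisation of differential privacy reduces the goal to a tail bound: if $\Pr[L > \epsilon'] \le \delta'$, modulo an additive $k\delta$ coming from the $\delta$ parameters (handled below), then for every outcome set $S$ we get $\Pr[M(x)\in S] = \Pr[M(x)\in S,\, L\le\epsilon'] + \Pr[M(x)\in S,\, L>\epsilon'] \le e^{\epsilon'}\Pr[M(x')\in S] + \delta' + k\delta$, which is exactly $(\epsilon', k\delta + \delta')$-differential privacy.

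The tail bound comes in two steps. First I would dispose of the $\delta$ parameters: the standard lemma relating approximate DP to a ``bad event'' lets us write each $\mathcal{K}_i$, conditioned on any realisation of $\mathcal{F}_{i-1}$, as behaving like an $(\epsilon,0)$-DP mechanism except on an event of probability at most $\delta$; a union bound over $i = 1,\dots,k$ caps the total bad-event probability at $k\delta$, so conditioned on avoiding all of them we have reduced to the pure case. Second, for the $(\epsilon,0)$-DP composition I would set up a martingale and apply Azuma's inequality. Because each $\mathcal{K}_i$ is $(\epsilon,0)$-DP \emph{for every value of its input}, we get $|L_i| \le \epsilon$ almost surely and $\mathbb{E}[L_i \mid \mathcal{F}_{i-1}] = D_{\mathrm{KL}}\!\big(\mathcal{K}_i(\cdot\mid x) \,\|\, \mathcal{K}_i(\cdot\mid x')\big) \le \epsilon(e^\epsilon - 1) \le 2\epsilon^2$ (for $\epsilon \le 1$), both \emph{uniformly over the adversary's adaptive choices}. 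Hence $Z_j = \sum_{i\le j}\big(L_i - \mathbb{E}[L_i \mid \mathcal{F}_{i-1}]\big)$ is a bounded-difference martingale with increments $O(\epsilon)$, so Azuma gives $\Pr[Z_k > s] \le \exp\!\big(-s^2/(2k\cdot O(\epsilon)^2)\big)$; taking $s \asymp \epsilon\sqrt{2k\ln(1/\delta')}$ makes this at most $\delta'$, and then $L = Z_k + \sum_i \mathbb{E}[L_i \mid \mathcal{F}_{i-1}] \le O(\epsilon)\sqrt{2k\ln(1/\delta')} + 2k\epsilon^2 \le \epsilon'$ outside a probability-$\delta'$ event, the factor in $\epsilon' = 2\epsilon\sqrt{2k\ln(1/\delta')}$ and the hypothesis $\epsilon' < 1$ absorbing the Azuma constant and the $2k\epsilon^2$ term. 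Combining the two steps yields $(\epsilon', k\delta + \delta')$-DP.

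I expect the main obstacle to be the adaptivity bookkeeping: one must argue that the per-step bounds $|L_i| \le \epsilon$ and $\mathbb{E}[L_i \mid \mathcal{F}_{i-1}] \le 2\epsilon^2$ hold no matter how $\mathcal{K}_i$ and its input are chosen as measurable functions of $Y_1,\dots,Y_{i-1}$, so that $Z_j$ is genuinely a martingale with respect to the output filtration $\mathcal{F}_j$ — this is precisely what makes adaptive composition no worse than non-adaptive composition, and it is the one place the argument is subtle rather than routine. The secondary fiddly point is the $\delta$-reduction lemma and checking that the exceptional events compose cleanly under adaptivity to contribute exactly the $k\delta$ slack. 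The remaining ingredients — the privacy-loss characterisation of DP, the KL bound $\epsilon(e^\epsilon-1)$ for $\epsilon$-bounded log-likelihood ratios, and the application of Azuma's inequality — are standard; in a paper of this kind it would also be entirely reasonable simply to cite~\cite{dwork2014algorithmic} for the theorem and note that the stated constants are a convenient simplification.
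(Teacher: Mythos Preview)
The paper does not prove this theorem at all: it is stated as background, with a citation to~\cite{dwork2014algorithmic}, and is used only to justify the type of the \texttt{advloop} combinator. Your closing remark anticipated exactly this --- in a systems paper of this kind the theorem is simply quoted from the literature.

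That said, your sketch is the standard Dwork--Rothblum--Vadhan argument from the cited reference and is correct in outline: privacy-loss random variable, reduction from $(\epsilon,\delta)$ to $(\epsilon,0)$ via a bad event of mass $\le k\delta$, the KL bound on the conditional expectation of each $L_i$, and Azuma on the centered martingale. The only quibble is constants: the usual bound is $\mathbb{E}[L_i\mid\mathcal{F}_{i-1}]\le \epsilon(e^\epsilon-1)$ (often sharpened to $\epsilon(e^\epsilon-1)/2$ or $2\epsilon^2$ for small $\epsilon$), and the paper's displayed form $\epsilon' = 2\epsilon\sqrt{2k\ln(1/\delta')}$ drops the additive $k\epsilon(e^\epsilon-1)$ term that appears in Theorem~3.20 of~\cite{dwork2014algorithmic}; you handle this by absorbing it into the leading constant under the hypothesis $\epsilon'<1$, which is fine as a simplification but worth flagging explicitly since the paper's statement is not literally the textbook one.
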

\vspace*{-2mm}
\noindent To support advanced composition in \solo, we first define privacy environments and a privacy monad for {{\color{\colorMATH}\ensuremath{(\epsilon , \delta )}}}-differential privacy as follows:
%
\begin{minted}{haskell}
data EDPrivacyCost = InfED | EDCost TLReal TLReal
type EDEnv = [(TL.Symbol, EDPrivacyCost)]
return :: a -> EDPrivacyMonad '[] a
(>>=) :: EDPrivacyMonad p#$_{1}$# a -> (a -> EDPrivacyMonad p#$_{2}$# b) -> EDPrivacyMonad (EDSeqComp p#$_{1}$# p#$_{2}$#) b
\end{minted}

\noindent where \inline{EDSeqComp} is a type family that implements sequential composition for {{\color{\colorMATH}\ensuremath{(\epsilon , \delta )}}}-differential privacy (Theorem \ref{thm:sequential-composition}) via elementwise summation of both {{\color{\colorMATH}\ensuremath{\epsilon }}} and {{\color{\colorMATH}\ensuremath{\delta }}} values.
Rational numbers were sufficient to represent privacy costs in pure {{\color{\colorMATH}\ensuremath{\epsilon }}}-differential privacy, but we use a type-level representation of real numbers (\inline{TLReal}) for {{\color{\colorMATH}\ensuremath{(\epsilon , \delta )}}}-differential privacy. For advanced composition, we will need operations like square root and natural logarithm. Haskell avoids supporting doubles at the type level, because equality for doubles does not interact well with the notion of equality required for typing. We therefore implement \inline{TLReal} by building type-level expressions that represent real-valued computations, and interpret those expressions using Haskell's standard double type at the value level.
We can now write the type of a looping combinator for advanced composition:
\begin{minted}{haskell}
advloop :: NatS k -> a -> (a -> EDPrivacyMonad p a) -> EDPrivacyMonad (AdvComp k #$\delta ^{\prime}$# p) a
\end{minted}
%
%
The looping combinator \inline{advloop} is designed to run an {{\color{\colorMATH}\ensuremath{(\epsilon , \delta )}}}-differentially private mechanism {{\color{\colorMATH}\ensuremath{k}}} times, and satisfies {{\color{\colorMATH}\ensuremath{(2\epsilon \sqrt {2k\ln ({1}/{\delta ^{\prime}})}, \delta ^{\prime}{+}k\delta )}}}-differential privacy---which is significantly lower than the standard composition theorem when {{\color{\colorMATH}\ensuremath{k}}} is large. The first argument \inline{k} is the statically known number of iterations. The type family \inline{AdvComp} is a helper to statically compute the appropriate total privacy cost given the privacy parameters of the private function passed as the penultimate parameter to the primitive which satisfies {{\color{\colorMATH}\ensuremath{(\epsilon ,\delta )}}}-differential privacy. \inline{AdvComp} builds a type-level expression containing square roots and logarithms, as described earlier.



\paragraph{The Gaussian Mechanism.}
The Gaussian mechanism (described in Section~\ref{sec:background}) adds Gaussian noise instead of Laplace noise, and ensures {{\color{\colorMATH}\ensuremath{(\epsilon , \delta )}}}-differential privacy (with {{\color{\colorMATH}\ensuremath{\delta  > 0}}}). The primary advantage of the Gaussian mechanism is in the vector setting: the Gaussian mechanism uses {{\color{\colorMATH}\ensuremath{L_{2}}}} sensitivity, which is typically much lower than the {{\color{\colorMATH}\ensuremath{L_{1}}}} sensitivity used by the Laplace mechanism. This requirement is reflected in the type of the Gaussian mechanism in \solo:
\begin{minted}{haskell}
gauss     :: Proxy #$\epsilon $# -> Proxy #$\delta $# -> SDouble s Diff -> EDPrivacyMonad (TruncateSensED #$\epsilon $# #$\delta $# s) Double
listGauss :: Proxy #$\epsilon $# -> Proxy #$\delta $# -> L2List (SDouble Diff) s
                         -> EDPrivacyMonad (TruncateSensED #$\epsilon $# #$\delta $# s) [Double]
\end{minted}

\subsection{Additional Variants \& Converting Between Variants}

\solo provides a type class of privacy monads instantiated for each supported variant of differential privacy. For each privacy variant, the corresponding privacy monad is indexed with a privacy environment that tracks the appropriate privacy parameters, and the bind operation enforces the appropriate form of sequential composition. Conversion operations are provided between variants to enable variant-mixing in programs. For example, the following function converts an {{\color{\colorMATH}\ensuremath{\epsilon }}}-differentially private computation into an {{\color{\colorMATH}\ensuremath{(\epsilon , \delta )}}}-differentially private one, setting {{\color{\colorMATH}\ensuremath{\delta =0}}}.
\begin{minted}{haskell}
conv_eps_to_ed :: EpsPrivacyMonad p1 a -> EDPrivacyMonad (ConvEpstoED p1) a
\end{minted}
%
%
\solo currently supports {{\color{\colorMATH}\ensuremath{\epsilon }}}-differential privacy, {{\color{\colorMATH}\ensuremath{(\epsilon , \delta )}}}-differential privacy, and R\'enyi differential privacy (RDP)~\cite{mironov17}. Conversions are possible from {{\color{\colorMATH}\ensuremath{\epsilon }}}-DP to {{\color{\colorMATH}\ensuremath{(\epsilon , \delta )}}}-DP and RDP, and from RDP to {{\color{\colorMATH}\ensuremath{(\epsilon , \delta )}}}-DP. Conversions are not possible from {{\color{\colorMATH}\ensuremath{(\epsilon , \delta )}}}-DP to {{\color{\colorMATH}\ensuremath{\epsilon }}}-DP or RDP.

\section{Formalism}
\label{sec:formalism}
In \solo, we implement a novel static analysis for function sensitivity and differential privacy. Our approach can be seen as a type-and-effect system, which may be embedded in statically typed functional languages with support for monads and type-level arithmetic.

\paragraph{Program Syntax.}
Figure~\ref{fig:syntax} shows a core subset of the syntax for our analysis system.
Our language model includes arithmetic operations ({{\color{\colorMATH}\ensuremath{e \odot  e}}}), pairs ({{\color{\colorMATH}\ensuremath{\langle e,e\rangle }}} and {{\color{\colorMATH}\ensuremath{\pi _{i}(e)}}}), conditionals ({{\color{\colorMATH}\ensuremath{{{\color{\colorSYNTAX}\mtexttt{if0}}}(e)\{ e\} \{ e\} }}}), and functions ({{\color{\colorMATH}\ensuremath{\lambda _{x}x.\hspace*{0.33em}e}}} and {{\color{\colorMATH}\ensuremath{e(e)}}}). Types {{\color{\colorMATH}\ensuremath{\tau }}} presented in the formalism include: base numeric types {{\color{\colorMATH}\ensuremath{{{\color{\colorSYNTAX}\mtexttt{real}}}}}}, singleton numeric types with a known runtime value at compile-time {{\color{\colorMATH}\ensuremath{{{\color{\colorSYNTAX}\mtexttt{real}}}[r]}}}, booleans {{\color{\colorMATH}\ensuremath{{{\color{\colorSYNTAX}\mtexttt{bool}}}}}}, functions {{\color{\colorMATH}\ensuremath{\tau  \rightarrow  \tau }}}, pairs {{\color{\colorMATH}\ensuremath{\tau \times \tau }}}, and the privacy monad {{\color{\colorMATH}\ensuremath{{\scriptstyle \bigcirc }_{\Sigma }(\tau )}}}. Regular types {{\color{\colorMATH}\ensuremath{\tau }}} are accompanied by sensitive types {{\color{\colorMATH}\ensuremath{\sigma }}} which are essentially regular types annotated with static sensitivity analysis information {{\color{\colorMATH}\ensuremath{\Sigma }}}---which is the sensitivity analysis (or sensitivity environment) for the expression which was typed as {{\color{\colorMATH}\ensuremath{\tau }}}.
Senstitive types shown in our formalism include sensitive numeric types {{\color{\colorMATH}\ensuremath{{{\color{\colorSYNTAX}\mtexttt{sreal}}}}}}, sensititve pairs {{\color{\colorMATH}\ensuremath{\sigma  \otimes  \sigma }}}, and sensitive lists {{\color{\colorMATH}\ensuremath{{{\color{\colorSYNTAX}\mtexttt{slist}}}(\sigma )}}}. A metric-carrying singleton numeric type is unnecessary since its value is fixed and cannot vary.
{{\color{\colorMATH}\ensuremath{\Sigma }}}---the {\mtextit{sensitivity/privacy environment}}---is defined as a mapping from sensitive sources {{\color{\colorMATH}\ensuremath{o \in  {\mtext{source}}}}} to scalar values which represent the sensitivity/privacy of the resulting value with respect to that source.

Types/values with standard treatment are not shown in our formalism, but included in our implementation with both regular and metric-carrying versions, include vectors and matrices which have known dimensions at compile-time via singleton natural number indices. Single natural numbers are also used to execute loops with statically known number of iterations and to help contruct sensitivity and privacy quantities.

\begingroup
\renewcommand\b[1]{\colorbox{blue!10}{{{\color{\colorMATH}\ensuremath{#1}}}}}
\begin{figure}
{\small
\begin{framed}
\begingroup\color{\colorMATH}\begin{gather*}
\begin{tabularx}{\linewidth}{>{\centering\arraybackslash\(}X<{\)}}\hfill\hspace{0pt} b \in  {\mathbb{B}} \hfill\hspace{0pt} r \in  {\mathbb{R}} \hfill\hspace{0pt} \dot r \in  {\mathbb{R}} \mathrel{\Coloneqq } r \mathrel{|} \infty  \hfill\hspace{0pt} x,z \in  {\mtext{var}} \hfill\hspace{0pt} \b{o \in  {\mtext{source}}} \hfill\hspace{0pt}
\cr 
\cr \hfill\hspace{0pt}
  \begin{array}{rclcl@{\hspace*{1.00em}}l
  } \Sigma    &{}\in {}& {\mtext{spenv}}  &{}\triangleq {}& {\mtext{source}} \rightharpoonup  \dot {\mathbb{R}}                               & {{\color{\colorTEXT}\textnormal{sensitivity/privacy environment}}}
  \cr  \tau    &{}\in {}& {\mtext{type}}  &{}\mathrel{\Coloneqq }{}& {{\color{\colorSYNTAX}\mtexttt{bool}}} \mathrel{|}{{\color{\colorSYNTAX}\mtexttt{real}}} \mathrel{|} {{\color{\colorSYNTAX}\mtexttt{real}}}[r]                  & {{\color{\colorTEXT}\textnormal{base and singleton types}}}
  \cr      &{} {}&         &{}\mathrel{|}{}& \tau \times \tau  \mathrel{|} {{\color{\colorSYNTAX}\mtexttt{list}}}(\tau ) \mathrel{|} \tau \rightarrow \tau                        & {{\color{\colorTEXT}\textnormal{connectives}}}
  \cr      &{} {}&         &{}\mathrel{|}{}& {\scriptstyle \bigcirc }_{\Sigma }(\tau ) \mathrel{|} \b{\sigma @\Sigma }                           & {{\color{\colorTEXT}\textnormal{privacy monad and \b{{{\color{\colorTEXT}\textnormal{sensitive types}}}}}}}
  \cr  \b{\sigma } &{}\in {}& \b{{\mtext{stype}}} &{}\mathrel{\Coloneqq }{}& \b{{{\color{\colorSYNTAX}\mtexttt{sreal}}} \mathrel{|} \sigma  \otimes  \sigma  \mathrel{|} {{\color{\colorSYNTAX}\mtexttt{slist}}}(\sigma )}      & \b{{{\color{\colorTEXT}\textnormal{sensitive types}}}}
  \cr  {\odot } &{}\in {}& {\mtext{binop}}       &{}\mathrel{\Coloneqq }{}& {+} \mathrel{|} {\times } \mathrel{|} {\rtimes }                       & {{\color{\colorTEXT}\textnormal{operations}}}
  \cr  e   &{}\in {}& {\mtext{expr}}        &{}\mathrel{\Coloneqq }{}& x \mathrel{|} b \mathrel{|} r \mathrel{|} {{\color{\colorSYNTAX}\mtexttt{sing}}}(r)                 & {{\color{\colorTEXT}\textnormal{variables and literals}}}
  \cr      &{} {}&               &{}\mathrel{|}{}& e \odot  e \mathrel{|} {{\color{\colorSYNTAX}\mtexttt{if}}}(e)\{ e\} \{ e\}                  & {{\color{\colorTEXT}\textnormal{binary operations and conditionals}}}
  \cr      &{} {}&               &{}\mathrel{|}{}& \langle e,e\rangle  \mathrel{|} \pi _{i}(e)                         & {{\color{\colorTEXT}\textnormal{pair creation and access}}}
  \cr      &{} {}&               &{}\mathrel{|}{}& [] \mathrel{|} e\mathrel{:: }e                              & {{\color{\colorTEXT}\textnormal{list creation}}}
  \cr      &{} {}&               &{}\mathrel{|}{}& {{\color{\colorSYNTAX}\mtexttt{case}}}(e)\{ [].e\} \{ x\mathrel{:: }x.e\}                 & {{\color{\colorTEXT}\textnormal{list destruction}}}
  \cr      &{} {}&               &{}\mathrel{|}{}& \lambda _{x}x.\hspace*{0.33em}e \mathrel{|} e(e)                        & {{\color{\colorTEXT}\textnormal{recursive functions}}}
  \cr      &{} {}&               &{}\mathrel{|}{}& \b{{{\color{\colorSYNTAX}\mtexttt{reveal}}}(e)} \mathrel{|} {{\color{\colorSYNTAX}\mtexttt{laplace}}}[e,e](e)   & {{\color{\colorTEXT}\textnormal{privacy operations}}}
  \cr      &{} {}&               &{}\mathrel{|}{}& {{\color{\colorSYNTAX}\mtexttt{return}}}(e) \mathrel{|} x \leftarrow  e \mathrel{;} e               & {{\color{\colorTEXT}\textnormal{privacy monad}}}
  \cr      &{} {}&               &{}\mathrel{|}{}& \b{\hat \langle e,e\hat \rangle  \mathrel{|} \hat \pi _{i}(e)}                     & \b{ {{\color{\colorTEXT}\textnormal{sensitive pair creation and access}}} }
  \cr      &{} {}&               &{}\mathrel{|}{}& \b{\hat {[}\hat {]} \mathrel{|} e\mathrel{\hat {\mathrel{:: }}}e}                          & \b{ {{\color{\colorTEXT}\textnormal{sensitive list creation}}} }
  \cr      &{} {}&               &{}\mathrel{|}{}& \b{{{\color{\colorSYNTAX}\mtexttt{case}}}(e)\{ \hat {[}\hat {]}.e\} \{ x\mathrel{\hat {\mathrel{:: }}}x.e\} }            & \b{ {{\color{\colorTEXT}\textnormal{sensitive list destruction}}} }
  \cr  \gamma    &{}\in {}& {\mtext{venv}}        &{}\triangleq {}& {\mtext{var}} \rightharpoonup  {\mtext{value}}                       & {{\color{\colorTEXT}\textnormal{evaluation environment}}}
  \cr  \rho    &{}\in {}& {\mtext{ddist}}       &{}\triangleq {}& \left\{ f \in  {\mtext{value}} \rightarrow  {\mathbb{R}} \mathrel{|} \sum \limits_{v} f(v) = 1\right\}      & {{\color{\colorTEXT}\textnormal{discrete distributions (PMF)}}}
  \cr  v   &{}\in {}& {\mtext{value}}       &{}\mathrel{\Coloneqq }{}& b \mathrel{|} r                                 & {{\color{\colorTEXT}\textnormal{literals}}}
  \cr      &{} {}&               &{}\mathrel{|}{}& \langle v,v\rangle                                  & {{\color{\colorTEXT}\textnormal{pairs}}}
  \cr      &{} {}&               &{}\mathrel{|}{}& [] \mathrel{|} v\mathrel{:: }v                              & {{\color{\colorTEXT}\textnormal{lists}}}
  \cr      &{} {}&               &{}\mathrel{|}{}& \langle \lambda _{x}x.\hspace*{0.33em}e\mathrel{|}\gamma \rangle                            & {{\color{\colorTEXT}\textnormal{recursive closures}}}
  \cr      &{} {}&               &{}\mathrel{|}{}& \rho                                      & {{\color{\colorTEXT}\textnormal{distributions of values}}}
  \end{array}
\end{tabularx}
\end{gather*}\endgroup
\end{framed}
}
\caption{
Syntax for types, expressions and values. {\color{blue!10}$\blacksquare $} =
sensitivity sources, types and expressions unique to \solo.
}
\label{fig:syntax}
\end{figure}
\endgroup

\begingroup
\renewcommand\b[1]{\colorbox{blue!10}{{{\color{\colorMATH}\ensuremath{#1}}}}}
\begin{figure}
\smaller
\begin{framed}
\begingroup\color{\colorMATH}\begin{gather*}
\begin{tabularx}{\linewidth}{>{\centering\arraybackslash\(}X<{\)}}\hfill\hspace{0pt} \Gamma  \in  {\mtext{tenv}} \triangleq  {\mtext{var}} \rightharpoonup  {\mtext{type}}
  \hfill\hspace{0pt} {}\rceil \Sigma \lceil {}^{s}(o) \triangleq  {}\rceil \Sigma (o)\lceil {}^{s}
  \hfill\hspace{0pt} {}\rceil s\lceil {}^{s^{\prime}} \triangleq  \left\{ \begin{array}{l@{\hspace*{1.00em}}c@{\hspace*{1.00em}}l } 0 &{}{{\color{\colorTEXT}\textnormal{if}}}{}& s \triangleq  0 \cr  s^{\prime} &{}{{\color{\colorTEXT}\textnormal{if}}}{}& s \neq  0 \end{array}\right.
  \hfill\hspace{0pt}
\cr \hfill\hspace{0pt} {\mathcal{R}}({{\color{\colorSYNTAX}\mtexttt{sreal}}}) \triangleq  {{\color{\colorSYNTAX}\mtexttt{real}}} \hfill\hspace{0pt} {\mathcal{R}}(\sigma  \otimes  \sigma ) \triangleq  {\mathcal{R}}(\sigma ) \times  {\mathcal{R}}(\sigma ) \hfill\hspace{0pt} {\mathcal{R}}({{\color{\colorSYNTAX}\mtexttt{slist}}}(\sigma )) \triangleq  {{\color{\colorSYNTAX}\mtexttt{list}}}({\mathcal{R}}(\sigma )) \hfill\hspace{0pt}
\end{tabularx}
\end{gather*}\endgroup
\hrule
\begingroup\color{\colorMATH}\begin{gather*}
\begin{tabularx}{\linewidth}{>{\centering\arraybackslash\(}X<{\)}} \hfill\hspace{0pt} \begingroup\color{\colorTEXT}\boxed{\begingroup\color{\colorMATH} \Gamma  \vdash  e \mathrel{:} \tau  \endgroup}\endgroup
\end{tabularx}
\end{gather*}\endgroup
\vspace{-4ex}
\begingroup\color{\colorMATH}\begin{mathpar} \inferrule*[lab={\mtextsc{ t-var}}
   ]{ \Gamma (x) = \tau 
      }{
      \Gamma  \vdash  x \mathrel{:} \tau 
   }
\and \inferrule*[lab={\mtextsc{ t-blit}}
   ]{ }{
      \Gamma  \vdash  b \mathrel{:} {{\color{\colorSYNTAX}\mtexttt{bool}}}
   }
\and \inferrule*[lab={\mtextsc{ t-rlit}}
   ]{ }{
      \Gamma  \vdash  r \mathrel{:} {{\color{\colorSYNTAX}\mtexttt{real}}}
   }
\and \inferrule*[lab={\mtextsc{ t-sing}}
   ]{ }{
      \Gamma  \vdash  {{\color{\colorSYNTAX}\mtexttt{sing}}}(r) \mathrel{:} {{\color{\colorSYNTAX}\mtexttt{real}}}[r]
   }
\and \inferrule*[lab={\mtextsc{ t-op}}
   ]{ \Gamma  \vdash  e_{1} \mathrel{:} {{\color{\colorSYNTAX}\mtexttt{real}}}
   \\ \Gamma  \vdash  e_{2} \mathrel{:} {{\color{\colorSYNTAX}\mtexttt{real}}}
   \\ \odot  \in  \{ +,\times \} 
      }{
      \Gamma  \vdash  e_{1} \odot  e_{2} \mathrel{:} {{\color{\colorSYNTAX}\mtexttt{real}}}
   }
\and \inferrule*[lab={\mtextsc{ t-if}}
   ]{ \Gamma  \vdash  e_{1} \mathrel{:} {{\color{\colorSYNTAX}\mtexttt{bool}}}
   \\ \Gamma  \vdash  e_{2} \mathrel{:} \tau 
   \\ \Gamma  \vdash  e_{3} \mathrel{:} \tau 
      }{
      \Gamma  \vdash  {{\color{\colorSYNTAX}\mtexttt{if}}}(e_{1})\{ e_{2}\} \{ e_{3}\}  \mathrel{:} \tau 
   }
\and \inferrule*[lab={\mtextsc{ t-pair}}
   ]{ \Gamma  \vdash  e_{1} \mathrel{:} \tau _{1}
   \\ \Gamma  \vdash  e_{2} \mathrel{:} \tau _{2}
      }{
      \Gamma  \vdash  \langle e_{1},e_{2}\rangle  \mathrel{:} \tau _{1} \times  \tau _{2}
   }
\and \inferrule*[lab={\mtextsc{ t-proj}}
   ]{ \Gamma  \vdash  e \mathrel{:} \tau _{1} \times  \tau _{2}
      }{
      \Gamma  \vdash  \pi _{i}(e) \mathrel{:} \tau _{i}
   }
\and \inferrule*[lab={\mtextsc{ t-nil}}
   ]{ }{
      \Gamma  \vdash  [] \mathrel{:} {{\color{\colorSYNTAX}\mtexttt{list}}}(\tau )
   }
\and \inferrule*[lab={\mtextsc{ t-cons}}
   ]{ \Gamma  \vdash  e_{1} \mathrel{:} \tau 
   \\ \Gamma  \vdash  e_{2} \mathrel{:} {{\color{\colorSYNTAX}\mtexttt{list}}}(\tau )
      }{
      \Gamma  \vdash  e_{1} \mathrel{:: } e_{2} \mathrel{:} {{\color{\colorSYNTAX}\mtexttt{list}}}(\tau )
   }
\and \inferrule*[lab={\mtextsc{ t-case}}
   ]{ \Gamma  \vdash  e_{1} \mathrel{:} {{\color{\colorSYNTAX}\mtexttt{list}}}(\tau )
   \\ \Gamma  \vdash  e_{2} \mathrel{:} \tau ^{\prime}
   \\ \{ x_{1} \mapsto  \tau ,x_{2} \mapsto  {{\color{\colorSYNTAX}\mtexttt{list}}}(\tau )\}  \uplus  \Gamma  \vdash  e_{3} \mathrel{:} \tau ^{\prime}
      }{
      \Gamma  \vdash  {{\color{\colorSYNTAX}\mtexttt{case}}}(e_{1})\{ [].e_{2}\} \{ x_{1}\mathrel{:: }x_{2}.e_{3}\}  \mathrel{:} \tau ^{\prime}
   }
\and \inferrule*[lab={\mtextsc{ t-lam}}
   ]{ \{ x \mapsto  \tau _{1},z \mapsto  \tau _{1} \rightarrow  \tau _{2}\}  \uplus  \Gamma  \vdash  e \mathrel{:} \tau _{2}
      }{
      \Gamma  \vdash  \lambda _{z}x.\hspace*{0.33em} e \mathrel{:} \tau _{1} \rightarrow  \tau _{2}
   }
\and \inferrule*[lab={\mtextsc{ t-app}}
   ]{ \Gamma  \vdash  e_{1} \mathrel{:} \tau _{1} \rightarrow  \tau _{2}
   \\ \Gamma  \vdash  e_{2} \mathrel{:} \tau _{1}
      }{
      \Gamma  \vdash  e_{1}(e_{2}) \mathrel{:} \tau _{2}
   }
\and \b{
   \inferrule*[lab={\mtextsc{ t-reveal}}
   ]{ \Gamma  \vdash  e \mathrel{:} \sigma @\Sigma 
      }{
      \Gamma  \vdash  {{\color{\colorSYNTAX}\mtexttt{reveal}}}(e) \mathrel{:} {\scriptstyle \bigcirc }_{{}\rceil \Sigma \lceil {}^{\infty }}({\mathcal{R}}(\sigma ))
   }
   }
\and \inferrule*[lab={\mtextsc{ t-laplace}}
   ]{ \Gamma  \vdash  e_{1} \mathrel{:} {{\color{\colorSYNTAX}\mtexttt{real}}}[r_{s}]
   \\ \Gamma  \vdash  e_{2} \mathrel{:} {{\color{\colorSYNTAX}\mtexttt{real}}}[r_{\epsilon }]
   \\ \Gamma  \vdash  e_{3} \mathrel{:} {{\color{\colorSYNTAX}\mtexttt{sreal}}}@\Sigma 
   \\ \Sigma  \sqsubseteq  {}\rceil \Sigma \lceil {}^{s}
      }{
      \Gamma  \vdash  {{\color{\colorSYNTAX}\mtexttt{laplace}}}[e_{1},e_{2}](e_{3}) \mathrel{:} {\scriptstyle \bigcirc }_{{}\rceil \Sigma \lceil {}^{\epsilon }}({{\color{\colorSYNTAX}\mtexttt{real}}})
   }
\and \b{
   \inferrule*[lab={\mtextsc{ t-return}}
   ]{ \Gamma  \vdash  e \mathrel{:} \tau 
      }{
      \Gamma  \vdash  {{\color{\colorSYNTAX}\mtexttt{return}}}(e) \mathrel{:} {\scriptstyle \bigcirc }_{\varnothing }(\tau )
   }
   }
\and \inferrule*[lab={\mtextsc{ t-bind}}
   ]{ \Gamma  \vdash  e_{1} \mathrel{:} {\scriptstyle \bigcirc }_{\Sigma _{1}}(\tau _{1})
      \{ x\mapsto \tau _{1}\} \uplus \Gamma  \vdash  e_{2} \mathrel{:} {\scriptstyle \bigcirc }_{\Sigma _{2}}(\tau _{2})
      }{
      \Gamma  \vdash  x \leftarrow  e_{1} \mathrel{;} e_{2} \mathrel{:} {\scriptstyle \bigcirc }_{\Sigma _{1}+\Sigma _{2}}(\tau _{2})
   }
\and \b{
   \inferrule*[lab={\mtextsc{ t-splus}}
   ]{ \Gamma  \vdash  e_{1} \mathrel{:} {{\color{\colorSYNTAX}\mtexttt{sreal}}}@\Sigma _{1}
   \\ \Gamma  \vdash  e_{2} \mathrel{:} {{\color{\colorSYNTAX}\mtexttt{sreal}}}@\Sigma _{2}
      }{
      \Gamma  \vdash  e_{1} + e_{2} \mathrel{:} {{\color{\colorSYNTAX}\mtexttt{sreal}}}@(\Sigma _{1}+\Sigma _{2})
   }
   }
\and \b
   {\inferrule*[lab={\mtextsc{ t-stimes}}
   ]{ \Gamma  \vdash  e_{1} \mathrel{:} {{\color{\colorSYNTAX}\mtexttt{real}}}[r]
   \\ \Gamma  \vdash  e_{2} \mathrel{:} {{\color{\colorSYNTAX}\mtexttt{sreal}}}@\Sigma 
      }{
      \Gamma  \vdash  e_{1} \ltimes  e_{2} \mathrel{:} {{\color{\colorSYNTAX}\mtexttt{sreal}}}@r\Sigma 
   }
   }
\and \b{
   \inferrule*[lab={\mtextsc{ t-spair}}
   ]{ \Gamma  \vdash  e_{1} \mathrel{:} \sigma _{1}@\Sigma _{1}
   \\ \Gamma  \vdash  e_{2} \mathrel{:} \sigma _{2}@\Sigma _{2}
      }{
      \Gamma  \vdash  \hat \langle e_{1},e_{2}\hat \rangle  \mathrel{:} (\sigma _{1} \otimes  \sigma _{2})@(\Sigma _{1} \sqcup  \Sigma _{2})
   }
   }
\and \b{
   \inferrule*[lab={\mtextsc{ t-sproj}}
   ]{ \Gamma  \vdash  e \mathrel{:} (\sigma _{1} \otimes  \sigma _{2})@\Sigma 
      }{
      \Gamma  \vdash  \hat \pi _{i}(e) \mathrel{:} \sigma _{i}@\Sigma 
   }
   }
\and \b{
   \inferrule*[lab={\mtextsc{ t-snil}}
   ]{ }{
      \Gamma  \vdash  \hat {[}\hat {]} \mathrel{:} {{\color{\colorSYNTAX}\mtexttt{slist}}}(\sigma )@\varnothing 
   }
   }
\and \b{
   \inferrule*[lab={\mtextsc{ t-scons}}
   ]{ \Gamma  \vdash  e_{1} \mathrel{:} \sigma @\Sigma _{1}
   \\ \Gamma  \vdash  e_{2} \mathrel{:} {{\color{\colorSYNTAX}\mtexttt{slist}}}(\sigma )@\Sigma _{2}
      }{
      \Gamma  \vdash  e_{1} \mathrel{\hat {\mathrel{:: }}} e_{2} \mathrel{:} {{\color{\colorSYNTAX}\mtexttt{slist}}}(\tau )@(\Sigma _{1} \sqcup  \Sigma _{2})
   }
   }
\and \b{
   \inferrule*[lab={\mtextsc{ t-scase}}
   ]{ \Gamma  \vdash  e_{1} \mathrel{:} {{\color{\colorSYNTAX}\mtexttt{slist}}}(\sigma )@\Sigma 
   \\ \Gamma  \vdash  e_{2} \mathrel{:} \tau ^{\prime}
   \\ \{ x_{1} \mapsto  \sigma @\Sigma ,x_{2} \mapsto  {{\color{\colorSYNTAX}\mtexttt{slist}}}(\sigma )@\Sigma \}  \uplus  \Gamma  \vdash  e_{3} \mathrel{:} \tau ^{\prime}
      }{
      \Gamma  \vdash  {{\color{\colorSYNTAX}\mtexttt{case}}}(e_{1})\{ \hat {[}\hat {]}.e_{2}\} \{ x_{1}\mathrel{\hat {\mathrel{:: }}}x_{2}.e_{3}\}  \mathrel{:} \tau ^{\prime}
   }
   }
\end{mathpar}\endgroup
\end{framed}
\caption{
The type system. {\color{blue!10}$\blacksquare $} =
type rules unique to \solo.
}
\label{fig:type-system}
\end{figure}
\endgroup

\begin{figure}
\begin{framed}
\begingroup\color{\colorMATH}\begin{gather*}
\begin{tabularx}{\linewidth}{>{\centering\arraybackslash\(}X<{\)}}\hfill\hspace{0pt} \bar \rho  \in  {\mtext{value}} \rightarrow  {\mtext{ddist}} \hfill\hspace{0pt} \bar n \in  {\mtext{value}} \rightarrow  {\mathbb{N}} \hfill\hspace{0pt} \begingroup\color{\colorTEXT}\boxed{\begingroup\color{\colorMATH} \gamma  \vdash  e \Downarrow _{n} v \endgroup}\endgroup
\end{tabularx}
\end{gather*}\endgroup
\begingroup\color{\colorMATH}\begin{mathpar} \inferrule*[lab={\mtextsc{ e-var}}
   ]{ \gamma (x) = v
      }{
      \gamma  \vdash  x \Downarrow _{0} v
   }
\and \inferrule*[lab={\mtextsc{ e-blit}}
   ]{ }{
      \gamma  \vdash  b \Downarrow _{0} b
   }
\and \inferrule*[lab={\mtextsc{ e-rlit}}
   ]{ }{
      \gamma  \vdash  r \Downarrow _{0} r
   }
\and \inferrule*[lab={\mtextsc{ e-plus}}
   ]{ \gamma  \vdash  e_{1} \Downarrow _{n_{1}} r_{1}
   \\ \gamma  \vdash  e_{2} \Downarrow _{n_{2}} r_{2}
      }{
      \gamma  \vdash  e_{1} + e_{2} \Downarrow _{n_{1}+n_{1}} r_{1}+r_{2}
   }
\and \inferrule*[lab={\mtextsc{ e-times}}
   ]{ \gamma  \vdash  e_{1} \Downarrow _{n_{1}} r_{1}
   \\ \gamma  \vdash  e_{2} \Downarrow _{n_{2}} r_{2}
      }{
      \gamma  \vdash  e_{1}\times e_{2} \Downarrow _{n_{1}+n_{2}} r_{1}r_{2}
   \\\\ \gamma  \vdash  e_{1}\ltimes e_{2} \Downarrow _{n_{1}+n_{2}} r_{1}r_{2}
   }
\and \inferrule*[lab={\mtextsc{ e-if-true}}
   ]{ \gamma  \vdash  e_{1} \Downarrow  {\mtext{true}}
      \gamma  \vdash  e_{2} \Downarrow  v
      }{
      \gamma  \vdash  {{\color{\colorSYNTAX}\mtexttt{if}}}(e_{1})\{ e_{2}\} \{ e_{3}\}  \Downarrow  v
   }
\and \inferrule*[lab={\mtextsc{ e-if-false}}
   ]{ \gamma  \vdash  e_{1} \Downarrow  {\mtext{false}}
      \gamma  \vdash  e_{3} \Downarrow  v
      }{
      \gamma  \vdash  {{\color{\colorSYNTAX}\mtexttt{if}}}(e_{1})\{ e_{2}\} \{ e_{3}\}  \Downarrow  v
   }
\and \inferrule*[lab={\mtextsc{ e-pair}}
   ]{ \gamma  \vdash  e_{1} \Downarrow _{n_{1}} v_{1}
   \\ \gamma  \vdash  e_{2} \Downarrow _{n_{2}} v_{2}
      }{
      \gamma  \vdash  \langle e_{1},e_{2}\rangle  \Downarrow _{n_{1}+n_{2}} \langle v_{1},v_{2}\rangle 
   \\\\ \gamma  \vdash  \hat \langle e_{1},e_{2}\hat \rangle  \Downarrow _{n_{1}+n_{2}} \langle v_{1},v_{2}\rangle 
   }
\and \inferrule*[lab={\mtextsc{ e-proj}}
   ]{ \gamma  \vdash  e \Downarrow _{n} \langle v_{1},v_{2}\rangle 
      }{
      \gamma  \vdash  \pi _{i} \Downarrow _{n} v_{i}
   \\\\ \gamma  \vdash  \hat \pi _{i} \Downarrow _{n} v_{i}
   }
\and \inferrule*[lab={\mtextsc{ e-nil}}
   ]{ }{
      \gamma  \vdash  [] \Downarrow _{0} []
   \\\\ \gamma  \vdash  \hat {[}\hat {]} \Downarrow _{0} []
   }
\and \inferrule*[lab={\mtextsc{ e-cons}}
   ]{ \gamma  \vdash  e_{1} \Downarrow _{n_{1}} v_{1}
   \\ \gamma  \vdash  e_{2} \Downarrow _{n_{2}} v_{2}
      }{
      \gamma  \vdash  e_{1}\mathrel{:: }e_{2} \Downarrow _{n_{1}+n_{2}} v_{1}\mathrel{:: }v_{2}
   \\\\ \gamma  \vdash  e_{1}\mathrel{\hat {\mathrel{:: }}}e_{2} \Downarrow _{n_{1}+n_{2}} v_{1}\mathrel{:: }v_{2}
   }
\and \inferrule*[lab={\mtextsc{ e-case-nil}}
   ]{ \gamma  \vdash  e_{1} \Downarrow _{n_{1}} []
   \\ \gamma  \vdash  e_{2} \Downarrow _{n_{2}} v
      }{
      \gamma  \vdash  {{\color{\colorSYNTAX}\mtexttt{case}}}(e_{1})\{ [].e_{2}\} \{ x_{1}\mathrel{:: }x_{2}.e_{3}\}  \Downarrow _{n_{1}+n_{2}} v
   \\\\ \gamma  \vdash  {{\color{\colorSYNTAX}\mtexttt{case}}}(e_{1})\{ \hat {[}\hat {]}.e_{2}\} \{ x_{1}\mathrel{\hat {\mathrel{:: }}}x_{2}.e_{3}\}  \Downarrow _{n_{1}+n_{2}} v
   }
\and \inferrule*[lab={\mtextsc{ e-case-cons}}
   ]{ \gamma  \vdash  e_{1} \Downarrow _{n_{1}} v_{1}\mathrel{:: }v_{2}
   \\ \{ x_{1}\mapsto v_{1},x_{2}\mapsto v_{2}\}  \uplus  \gamma  \vdash  e_{3} \Downarrow _{n_{2}} v_{3}
      }{
      \gamma  \vdash  {{\color{\colorSYNTAX}\mtexttt{case}}}(e_{1})\{ [].e_{2}\} \{ x_{1}\mathrel{:: }x_{2}.e_{3}\}  \Downarrow _{n_{1}+n_{2}} v_{3}
   \\\\ \gamma  \vdash  {{\color{\colorSYNTAX}\mtexttt{case}}}(e_{1})\{ \hat {[}\hat {]}.e_{2}\} \{ x_{1}\mathrel{\hat {\mathrel{:: }}}x_{2}.e_{3}\}  \Downarrow _{n_{1}+n_{2}} v_{3}
   }
\and \inferrule*[lab={\mtextsc{ e-lam}}
   ]{ }{
      \gamma  \vdash  \lambda _{z}x.\hspace*{0.33em}e \Downarrow _{0} \langle \lambda _{z}x.\hspace*{0.33em}e\mathrel{|}\gamma \rangle 
   }
\and \inferrule*[lab={\mtextsc{ e-app}}
   ]{ \gamma  \vdash  e_{1} \Downarrow _{n_{1}} \langle \lambda _{z}x.\hspace*{0.33em}e^{\prime}\mathrel{|}\gamma ^{\prime}\rangle 
   \\ \gamma  \vdash  e_{2} \Downarrow _{n_{2}} v_{1}
   \\ \{ x\mapsto v_{1},z\mapsto \langle \lambda _{z}x.\hspace*{0.33em}e^{\prime}\mathrel{|}\gamma ^{\prime}\rangle \}  \uplus  \gamma ^{\prime}\vdash  e^{\prime} \Downarrow _{n_{3}} v_{2}
      }{
      \gamma  \vdash  e_{1}(e_{2}) \Downarrow _{n_{1}+n_{2}+n_{3}+1} v_{2}
   }
\and \inferrule*[lab={\mtextsc{ e-reveal}}
   ]{ \gamma  \vdash  e \Downarrow _{n} v
      }{
      \gamma  \vdash  {{\color{\colorSYNTAX}\mtexttt{reveal}}}(e) \Downarrow _{n} \{ v\mapsto 1\} 
   \\\\ \gamma  \vdash  {{\color{\colorSYNTAX}\mtexttt{return}}}(e) \Downarrow _{n} \{ v\mapsto 1\} 
   }
\and \inferrule*[lab={\mtextsc{ e-laplace}}
   ]{ \gamma  \vdash  e_{1} \Downarrow _{n} s
   \\ \gamma  \vdash  e_{2} \Downarrow _{n} \epsilon 
   \\ \gamma  \vdash  e_{3} \Downarrow _{n} r
      }{
      \gamma  \vdash  {{\color{\colorSYNTAX}\mtexttt{laplace}}}[e_{1}, e_{2}](e_{3}) \Downarrow _{n} {\mtext{laplace}}(r, s/\epsilon )
   }
\and \inferrule*[lab={\mtextsc{ e-bind}}
   ]{ \gamma  \vdash  e_{1} \Downarrow _{n_{1}} \rho _{1}
   \\ \forall  v.\hspace*{0.33em} \{ x\mapsto v\}  \uplus  \gamma  \vdash  e_{2} \Downarrow _{\bar n_{2}(v)} \bar \rho _{2}(v)
      }{
      \gamma  \vdash  x \leftarrow  e_{1} \mathrel{;} e_{2} \Downarrow _{\left(n_{1} + \bigsqcup \limits_{v}\bar n_{2}(v)\right)} \left\{ v \mapsto  \sum \limits_{v^{\prime}}\rho _{1}(v^{\prime})\bar \rho _{2}(v^{\prime})(v)\right\} 
   }
\end{mathpar}\endgroup
\end{framed}
\caption{Step-indexed big-step evaluation semantics.}
\label{fig:semantics}
\end{figure}

\paragraph{Typing Rules.}
Figure~\ref{fig:type-system} shows typing rules in our system used to reason about the sensitivity of computations. The majority of these rules are standard, and modeled on the corresponding rules in our implementation language (Haskell). In particular, the rule for function introduction ({{\color{\colorMATH}\ensuremath{{\mtextsc{ t-lam}}}}}) does not mention sensitivities or sensitive types.

The rules with a shaded background ({\color{blue!10}$\blacksquare $}) are unique to \solo, and model the primitives described earlier in the paper. For example, the rule {{\color{\colorMATH}\ensuremath{{\mtextsc{ t-splus}}}}} models the addition operator, which adds the sensitivity environments attached to its arguments ({{\color{\colorMATH}\ensuremath{\Sigma _{1} + \Sigma _{2}}}}). Addition of sensitivity environments is identical to \fuzz~\cite{reed2010distance} and \duet~\cite{near2019duet}, and models the implementation described earlier:
%
%
\begingroup\color{\colorMATH}\begin{gather*}
(\Sigma _{1} + \Sigma _{2})(o) \triangleq  \left\{ \begin{array}{l@{\hspace*{1.00em}}c@{\hspace*{1.00em}}l } \Sigma _{1}(o) + \Sigma _{2}(o) &{}{{\color{\colorTEXT}\textnormal{if}}}{}& o \in  \Sigma _{1} \hspace*{0.33em}{{\color{\colorTEXT}\textnormal{and}}}\hspace*{0.33em}o \in  \Sigma _{2}
                 \cr  \Sigma _{1}(o) &{}{{\color{\colorTEXT}\textnormal{if}}}{}& o \in  \Sigma _{1} \hspace*{0.33em}{{\color{\colorTEXT}\textnormal{but}}}\hspace*{0.33em}o \notin  \Sigma _{2}
                 \cr  \Sigma _{2}(o) &{}{{\color{\colorTEXT}\textnormal{if}}}{}& o \in  \Sigma _{2} \hspace*{0.33em}{{\color{\colorTEXT}\textnormal{but}}}\hspace*{0.33em}o \notin  \Sigma _{1} \end{array}\right.
\end{gather*}\endgroup
The pointwise maximum of two sensitivity environments ({{\color{\colorMATH}\ensuremath{\Sigma _{1} \sqcup  \Sigma _{2}}}}) is defined analogously, but with the numeric maximum instead of addition; it is used in the rule {{\color{\colorMATH}\ensuremath{{\mtextsc{ t-spair}}}}} for pairs.
The rule {{\color{\colorMATH}\ensuremath{{\mtextsc{ t-stimes}}}}} describes multiplication of a sensitive value by a statically-known number, which \emph{scales} the associated sensitivity environment. Sensitivity environment scaling {{\color{\colorMATH}\ensuremath{{\begingroup\renewcommand\colorMATH{\colorMATHA}\renewcommand\colorSYNTAX{\colorSYNTAXA}{{\color{\colorSYNTAX}\mtexttt{{\ensuremath{s(}}}}}\endgroup } \Sigma  {\begingroup\renewcommand\colorMATH{\colorMATHA}\renewcommand\colorSYNTAX{\colorSYNTAXA}{{\color{\colorSYNTAX}\mtexttt{{\ensuremath{)}}}}}\endgroup }}}} is defined as:
\begingroup\color{\colorMATH}\begin{gather*}
{\begingroup\renewcommand\colorMATH{\colorMATHA}\renewcommand\colorSYNTAX{\colorSYNTAXA}{{\color{\colorSYNTAX}\mtexttt{{\ensuremath{s(}}}}}\endgroup } \Sigma  {\begingroup\renewcommand\colorMATH{\colorMATHA}\renewcommand\colorSYNTAX{\colorSYNTAXA}{{\color{\colorSYNTAX}\mtexttt{{\ensuremath{)}}}}}\endgroup }(o) \triangleq  {\begingroup\renewcommand\colorMATH{\colorMATHA}\renewcommand\colorSYNTAX{\colorSYNTAXA}{{\color{\colorSYNTAX}\mtexttt{{\ensuremath{s(}}}}}\endgroup } \Sigma (o) {\begingroup\renewcommand\colorMATH{\colorMATHA}\renewcommand\colorSYNTAX{\colorSYNTAXA}{{\color{\colorSYNTAX}\mtexttt{{\ensuremath{)}}}}}\endgroup }
\end{gather*}\endgroup
The truncation operation {{\color{\colorMATH}\ensuremath{{}\rceil {\begingroup\renewcommand\colorMATH{\colorMATHA}\renewcommand\colorSYNTAX{\colorSYNTAXA}{{\color{\colorMATH}\ensuremath{ {\begingroup\renewcommand\colorMATH{\colorMATHB}\renewcommand\colorSYNTAX{\colorSYNTAXB}{{\color{\colorMATH}\ensuremath{\Sigma }}}\endgroup } }}}\endgroup }\lceil {}^{{\begingroup\renewcommand\colorMATH{\colorMATHC}\renewcommand\colorSYNTAX{\colorSYNTAXC}{{\color{\colorMATH}\ensuremath{\epsilon }}}\endgroup }}}}} is also defined as seen in prior work \cite{near2019duet}. This operation converts sensitivity environments to privacy environments, by replacing each sensitivity in the environment with a consistent privacy cost (i.e. pointwise). While typing rules for arithmetic operations vary for the several permutations of static(singleton)/dynamic arguments, we only show the interesting cases for the multiplication operator.
{\mtextsc{ t-return}}, {\mtextsc{ t-reveal}}, and {\mtextsc{ t-laplace}} model the privacy primitives described in Section~\ref{sec:privacy}.
{\mtextsc{ t-bind}} encodes Theorem~\ref{thm:sequential-composition} (sequential composition).  In general, the typing rules are similar to previous work, except that the sensitivity and privacy environments are properties of (and embedded in) the types themselves, rather than being a property of the program context.

\paragraph{Dynamic Semantics.}
Figure~\ref{fig:semantics} shows a core subset of the standard dynamic semantics that accompanies the syntax for our analysis system. Our semantics largely follows the structure of \fuzz~\cite{reed2010distance}, and model the evaluation of expressions to \emph{discrete distributions} of values (since our privacy mechanisms are randomized). Distributions are represented as mappings from values to their probabilities (i.e. probability mass functions). The rule {{\color{\colorMATH}\ensuremath{{\mtextsc{ e-reveal}}}}} says that a deterministic reveal of a value produces a point distribution ({{\color{\colorMATH}\ensuremath{\{ v \mapsto  1\} }}}); the rule {{\color{\colorMATH}\ensuremath{{\mtextsc{ e-bind}}}}} encodes sequential composition. The rule {{\color{\colorMATH}\ensuremath{{\mtextsc{ e-laplace}}}}} returns a discrete Laplace distribution centered at {{\color{\colorMATH}\ensuremath{r}}} with scale {{\color{\colorMATH}\ensuremath{s / \epsilon }}}.

\paragraph{Type Soundness.}
The property of type soundness in our system is defined (as in prior work) as the \emph{metric preservation} theorem. Essentially, metric preservation dictates a maximum variation which is possible when a sensitive open term is closed over by two distinct but related sensitive closure environments. This means that given related initial well-typed configurations, we expect the outputs to be related by some level of variation. Specifically: given two well-typed environments which are related by the logical relation (values may be apart by distance {{\color{\colorMATH}\ensuremath{\Sigma }}}, for {{\color{\colorMATH}\ensuremath{n}}} steps), and a well typed term, then each evaluation of that term in each environment is related by the relation, that is, when one side terminates in {{\color{\colorMATH}\ensuremath{<n}}} steps to a value, the other side will deterministically terminate to a related value.
Similar to prior work, in order to state and prove the metric preservation theorem, we define the notion of function sensitivity as a (step-indexed) logical relation. Figure~\ref{fig:logical-relations} shows the step-indexed logical relation used to define function sensitivity. We briefly describe the logical relations seen in this figure, then state the metric preservation theorem formally.

\begin{enumerate}\item  Two real numbers are related {{\color{\colorMATH}\ensuremath{r_{1} \sim ^{r} r_{2}}}} at type {{\color{\colorMATH}\ensuremath{{\mathbb{R}}}}} and distance {{\color{\colorMATH}\ensuremath{r}}} when the absolute difference between real numbers {{\color{\colorMATH}\ensuremath{r_{1}}}} and {{\color{\colorMATH}\ensuremath{r_{2}}}} is less than {{\color{\colorMATH}\ensuremath{r}}}.
\item  Two values are related {{\color{\colorMATH}\ensuremath{v_{1} \sim  v_{2}}}} in {{\color{\colorMATH}\ensuremath{{\mathcal{V}}_{\Sigma }\llbracket \tau \rrbracket }}} when {{\color{\colorMATH}\ensuremath{v_{1}}}} and {{\color{\colorMATH}\ensuremath{v_{2}}}} are related at type {{\color{\colorMATH}\ensuremath{\tau }}} for
   initial distance {{\color{\colorMATH}\ensuremath{\Sigma }}}. We may define relatedness for the syntactic category of
   values via case analysis as follows:
   \begin{enumerate}\item  Base numeric values are related {{\color{\colorMATH}\ensuremath{r_{1} \sim ^{\Sigma } r_{2}}}} at type {{\color{\colorMATH}\ensuremath{{\mathbb{R}}}}}  in {{\color{\colorMATH}\ensuremath{{\mathcal{V}}_{\Sigma _{1}}\llbracket \tau \rrbracket }}}
      when {{\color{\colorMATH}\ensuremath{r_{1}}}} and {{\color{\colorMATH}\ensuremath{r_{2}}}} are related by {{\color{\colorMATH}\ensuremath{\Sigma \mathord{\cdotp }\Sigma _{1}}}}, where {{\color{\colorMATH}\ensuremath{\Sigma }}} is the
      initial distances between each input source {{\color{\colorMATH}\ensuremath{o}}}, and {{\color{\colorMATH}\ensuremath{\Sigma _{1}}}} describes
      how much these values may wiggle as function arguments i.e. the maximum permitted argument variation. {{\color{\colorMATH}\ensuremath{\mathord{\cdotp }}}} is defined as the vector dot product.
   \item  Function values {{\color{\colorMATH}\ensuremath{\langle \lambda x.\hspace*{0.33em}e_{1}\mathrel{|}\gamma _{1}\rangle  \sim  \langle \lambda x.\hspace*{0.33em}e_{2}\mathrel{|}\gamma _{2}\rangle }}} are related at type {{\color{\colorMATH}\ensuremath{(\tau \rightarrow \tau )}}} in {{\color{\colorMATH}\ensuremath{{\mathcal{V}}_{\Sigma }\llbracket \tau \rrbracket }}}
      when given related inputs, they produce related computations.
   \item  Pair values {{\color{\colorMATH}\ensuremath{\langle v_{1 1},v_{1 2}\rangle  \sim  \langle v_{2 1},v_{2 2}\rangle }}} are related at type {{\color{\colorMATH}\ensuremath{\langle \tau ,\tau \rangle }}} in {{\color{\colorMATH}\ensuremath{{\mathcal{V}}_{\Sigma }\llbracket \tau \rrbracket }}}
      when they are elementwise related.
   \item  {{\color{\colorMATH}\ensuremath{\gamma _{1},e_{1} \sim  \gamma _{2},e_{2}}}} are related at type {{\color{\colorMATH}\ensuremath{\tau }}} and distance {{\color{\colorMATH}\ensuremath{\Sigma }}} in {{\color{\colorMATH}\ensuremath{{\mathcal{E}}_{\Sigma }\llbracket \tau \rrbracket }}} when the input doubles {{\color{\colorMATH}\ensuremath{\gamma _{1},e_{1}}}} and {{\color{\colorMATH}\ensuremath{\gamma _{2},e_{2}}}} evaluate to output values which are related by {{\color{\colorMATH}\ensuremath{\Sigma }}}.
   \end{enumerate}
\item  Two value environments {{\color{\colorMATH}\ensuremath{\gamma _{1} \sim  \gamma _{2}}}} are related at type environment {{\color{\colorMATH}\ensuremath{\Gamma }}} and sensitivity environment {{\color{\colorMATH}\ensuremath{\Sigma }}} in {{\color{\colorMATH}\ensuremath{{\mathcal{G}}_{\Sigma }\llbracket \Gamma \rrbracket }}} if value environments {{\color{\colorMATH}\ensuremath{\gamma _{1}}}} and {{\color{\colorMATH}\ensuremath{\gamma _{2}}}} both map each variable in the type environment {{\color{\colorMATH}\ensuremath{\Gamma }}} to related values at a matching type at distance {{\color{\colorMATH}\ensuremath{\Sigma }}}.
\end{enumerate}

\begin{theorem}[Metric Preservation]\label{thm:metric-preservation}\ \\
  \begin{itemize}[label={},leftmargin=0pt]\item  \begin{tabular}{r@{\hspace*{1.00em}}l@{\hspace*{1.00em}}l
     } If:      & {{\color{\colorMATH}\ensuremath{\gamma _{1} \sim  \gamma _{2} \in  {\mathcal{G}}_{n}^{\Sigma }\llbracket \Gamma \rrbracket }}}     & {{\color{\colorTEXT}\textnormal{{\mtextit{(H1)}}}}}
     \cr  And:     & {{\color{\colorMATH}\ensuremath{ \Gamma  \vdash  e \mathrel{:} \tau }}}             & {{\color{\colorTEXT}\textnormal{{\mtextit{(H2)}}}}}
     \cr  Then:    & {{\color{\colorMATH}\ensuremath{\gamma _{1},e \sim  \gamma _{2},e \in  {\mathcal{E}}_{n}^{\Sigma }\llbracket \tau \rrbracket }}} &
     \end{tabular}
  \item 
  \item  That is, either {{\color{\colorMATH}\ensuremath{n=0}}}, or {{\color{\colorMATH}\ensuremath{n = n^{\prime}+1}}} and\ldots 
  \item 
  \item  \begin{tabular}{r@{\hspace*{1.00em}}l@{\hspace*{1.00em}}l
     } If:      & {{\color{\colorMATH}\ensuremath{n^{\prime \prime} \leq  n^{\prime}}}}                   & {{\color{\colorTEXT}\textnormal{{\mtextit{(H3)}}}}}
     \cr  And:     & {{\color{\colorMATH}\ensuremath{\gamma _{1} \vdash  e \Downarrow _{n^{\prime \prime}} v_{1}}}}           & {{\color{\colorTEXT}\textnormal{{\mtextit{(H4)}}}}}
     \cr  Then:    & {{\color{\colorMATH}\ensuremath{\exists !v_{2}.\hspace*{0.33em} \gamma _{2} \vdash  e \Downarrow _{n^{\prime \prime}} v_{2}}}}     & {{\color{\colorTEXT}\textnormal{{\mtextit{(C1)}}}}}
     \cr  And:     & {{\color{\colorMATH}\ensuremath{v_{1} \sim  v_{2} \in  {\mathcal{V}}_{n^{\prime}-n^{\prime \prime}}^{\Sigma }\llbracket \tau \rrbracket }}}  & {{\color{\colorTEXT}\textnormal{{\mtextit{(C2)}}}}}
     \end{tabular}
  \end{itemize}
\end{theorem}

The proofs appear in Appendix~\ref{sec:proof} in the supplemental material.

\begin{figure}
\begin{framed}
\begingroup\color{\colorMATH}\begin{gather*}
\begin{tabularx}{\linewidth}{>{\centering\arraybackslash\(}X<{\)}}\hfill\hspace{0pt}
  \begin{array}[t]{rclrlrlrl
  } \gamma _{1},e_{1} &{}\sim {}& \gamma _{2},e_{2} &{}\in {}& {\mathcal{E}}_{n}^{\Sigma }\llbracket \tau \rrbracket  &{}\overset \vartriangle \iff {}& n=0    &{}\implies {}& {\mtext{true}}
  \cr        &{} {}&       &{} {}&            &{}\wedge {}& n=n^{\prime}+1 &{}\implies {}& \forall  n^{\prime \prime}\leq n^{\prime}, v_{1}.\hspace*{0.33em} \gamma _{1} \vdash  e_{1} \Downarrow _{n^{\prime \prime}} v_{1}
  \cr        &{} {}&       &{} {}&            &{} {}&        &{}\Rightarrow {}& \exists ! v_{2}.\hspace*{0.33em} \gamma _{2} \vdash  e_{2} \Downarrow _{n^{\prime \prime}} v_{2} \hspace*{0.33em}\wedge \hspace*{0.33em}v_{1} \sim  v_{2} \in  {\mathcal{V}}_{n^{\prime}-n^{\prime \prime}}^{\Sigma }\llbracket \tau \rrbracket 
  \end{array}
  \hfill\hspace{0pt}
  \mathllap{\begingroup\color{\colorTEXT}\boxed{\begingroup\color{\colorMATH} \gamma ,e \sim  \gamma ,e \in  {\mathcal{E}}_{n}^{\Sigma }\llbracket \tau \rrbracket  \endgroup}\endgroup}
\cr 
\cr \hfill\hspace{0pt}
  r_{1} \sim ^{r} r_{2} \overset \vartriangle \iff  |r_{1} - r_{2}| \leq  r
  \hfill\hspace{0pt}
  \begingroup\color{\colorTEXT}\boxed{\begingroup\color{\colorMATH} r \sim ^{r} r \endgroup}\endgroup
\cr 
\cr \hfill\hspace{0pt}
  \begin{array}[t]{rclrl
  } b_{1} \sim  b_{2} &{}\in {}& {\mathcal{V}}_{n}^{\Sigma }\llbracket {{\color{\colorSYNTAX}\mtexttt{bool}}}\rrbracket                      &{}\overset \vartriangle \iff {}& b_{1} = b_{2}
  \cr  r_{1} \sim  r_{2} &{}\in {}& {\mathcal{V}}_{n}^{\Sigma }\llbracket {{\color{\colorSYNTAX}\mtexttt{real}}}\rrbracket                      &{}\overset \vartriangle \iff {}& r_{1} = r_{2}
  \cr  r_{1} \sim  r_{2} &{}\in {}& {\mathcal{V}}_{n}^{\Sigma }\llbracket {{\color{\colorSYNTAX}\mtexttt{real}}}[r]\rrbracket                   &{}\overset \vartriangle \iff {}& r_{1} = r_{2} = r
  \cr  r_{1} \sim  r_{2} &{}\in {}& {\mathcal{V}}_{n}^{\Sigma }\llbracket {{\color{\colorSYNTAX}\mtexttt{sreal}}}@\Sigma ^{\prime}\rrbracket                  &{}\overset \vartriangle \iff {}& r_{1}\sim ^{\Sigma \mathord{\cdotp }\Sigma ^{\prime}} r_{2}
  \cr  \langle v_{1 1},v_{1 2}\rangle  \sim  \langle v_{2 1},v_{2 2}\rangle  &{}\in {}& {\mathcal{V}}_{n}^{\Sigma }\llbracket \tau _{1} \times  \tau _{2}\rrbracket       &{}\overset \vartriangle \iff {}& v_{1 1} \sim  v_{2 1} \in  {\mathcal{V}}_{n}^{\Sigma }\llbracket \tau _{1}\rrbracket 
  \cr                        &{} {}&                       &{}\wedge {}& v_{1 2} \sim  v_{2 2} \in  {\mathcal{V}}_{n}^{\Sigma }\llbracket \tau _{2}\rrbracket 
  \cr  \langle v_{1 1},v_{1 2}\rangle  \sim  \langle v_{2 1},v_{2 2}\rangle  &{}\in {}& {\mathcal{V}}_{n}^{\Sigma }\llbracket (\sigma _{1}{\otimes }\sigma _{2})@\Sigma ^{\prime}\rrbracket  &{}\overset \vartriangle \iff {}& v_{1 1} \sim  v_{2 1} \in  {\mathcal{V}}_{n}^{\Sigma }\llbracket \sigma _{1}@\Sigma ^{\prime}\rrbracket 
  \cr                        &{} {}&                       &{}\wedge {}& v_{1 2} \sim  v_{2 2} \in  {\mathcal{V}}_{n}^{\Sigma }\llbracket \sigma _{2}@\Sigma ^{\prime}\rrbracket 
  \cr  v_{1 1}\mathrel{:: }v_{1 2} \sim  v_{2 1}\mathrel{:: }v_{2 2} &{}\in {}& {\mathcal{V}}_{n}^{\Sigma }\llbracket {{\color{\colorSYNTAX}\mtexttt{list}}}(\tau )\rrbracket         &{}\overset \vartriangle \iff {}& v_{1 1} \sim  v_{2 1} \in  {\mathcal{V}}_{n}^{\Sigma }\llbracket \tau \rrbracket 
  \cr                        &{} {}&                       &{}\wedge {}& v_{1 2} \sim  v_{2 2} \in  {\mathcal{V}}_{n}^{\Sigma }\llbracket {{\color{\colorSYNTAX}\mtexttt{list}}}(\tau )\rrbracket 
  \cr  v_{1 1}\mathrel{\hat {\mathrel{:: }}}v_{1 2} \sim  v_{2 1}\mathrel{\hat {\mathrel{:: }}}v_{2 2} &{}\in {}& {\mathcal{V}}_{n}^{\Sigma }\llbracket {{\color{\colorSYNTAX}\mtexttt{slist}}}(\sigma )@\Sigma ^{\prime}\rrbracket     &{}\overset \vartriangle \iff {}& v_{1 1} \sim  v_{2 1} \in  {\mathcal{V}}_{n}^{\Sigma }\llbracket \sigma @\Sigma ^{\prime}\rrbracket 
  \cr                        &{} {}&                       &{}\wedge {}& v_{1 2} \sim  v_{2 2} \in  {\mathcal{V}}_{n}^{\Sigma }\llbracket {{\color{\colorSYNTAX}\mtexttt{slist}}}(\sigma )@\Sigma ^{\prime}\rrbracket 
  \cr  \langle \lambda _{z}x.e_{1}{\mathrel{|}}\gamma _{1}\rangle  \sim  \langle \lambda _{z}x.e_{2}{\mathrel{|}}\gamma _{2}\rangle  &{}\in {}& {\mathcal{V}}_{n}^{\Sigma }\llbracket \tau _{1}\rightarrow \tau _{2}\rrbracket  &{}\overset \vartriangle \iff {}& \forall  n^{\prime}\leq n,v_{1},v_{2}.\hspace*{0.33em} v_{1} \sim  v_{2} \in  {\mathcal{V}}_{n^{\prime}}^{\Sigma }\llbracket \tau _{1}\rrbracket 
  \cr                          &{} {}&                   &{}  \Rightarrow {}& \{ x\mapsto v_{1},z\mapsto \langle \lambda _{z}x.e_{1}\mathrel{|}\gamma _{1}\rangle \} \uplus \gamma _{1},e_{1}
  \cr                          &{} {}&                     &{}\sim {}& \{ x\mapsto v_{2},z\mapsto \langle \lambda _{z}x.e_{2}\mathrel{|}\gamma _{2}\rangle \} \uplus \gamma _{2},e_{2}
  \cr                          &{} {}&                     &{}\in {}& {\mathcal{E}}_{n^{\prime}}^{\Sigma }\llbracket \tau _{2}\rrbracket 
  \cr  \rho _{1} \sim  \rho _{2} &{}\in {}& {\mathcal{V}}_{n}^{\Sigma }\llbracket {\scriptstyle \bigcirc }_{\Sigma ^{\prime}}(\tau )\rrbracket  &{}\overset \vartriangle \iff {}& \forall v.\hspace*{0.33em} \rho _{1}(v) \leq  {\textit{e}}^{|{}\rceil \Sigma \lceil {}^{1}\times \Sigma ^{\prime}|_{L\infty }}\rho _{2}(v)
  \end{array}
  \hfill\hspace{0pt}
  \mathllap{\begingroup\color{\colorTEXT}\boxed{\begingroup\color{\colorMATH} v \sim  v \in  {\mathcal{V}}_{n}^{\Sigma }\llbracket \tau \rrbracket  \endgroup}\endgroup}
\cr 
\cr \hfill\hspace{0pt} \gamma _{1} \sim  \gamma _{2} \in  {\mathcal{G}}_{n}^{\Sigma }\llbracket \Gamma \rrbracket  \overset \vartriangle \iff  \forall  x \in  {\mtext{dom}}(\gamma _{1}\cup \gamma _{2}).\hspace*{0.33em} \gamma _{1}(x) \sim  \gamma _{2}(x) \in  {\mathcal{V}}_{n}^{\Sigma }\llbracket \tau \rrbracket 
  \hfill\hspace{0pt}
  \begingroup\color{\colorTEXT}\boxed{\begingroup\color{\colorMATH} \gamma  \sim  \gamma  \in  {\mathcal{G}}_{n}^{\Sigma }\llbracket \Gamma \rrbracket  \endgroup}\endgroup
\end{tabularx}
\end{gather*}\endgroup
\end{framed}
\caption{Step-indexed Logical Relation.}
\label{fig:logical-relations}
\end{figure}

\section{Implementation \& Case Studies}
\label{sec:case}

We have implemented \solo as a Haskell library in about 600 lines of code; it will be made open-source upon publication and will be submitted as an artifact. For our case studies, we introduce sensitive matrices \inline{SMatrix &$\sigma $& m r c a}, sensitive key-value mappings (dictionaries) \inline{SDict &$\sigma $& m a b}, and sensitive sets \inline{SSet &$\sigma $& a}, as well as sound primitive operations over these values. We provide the usual primitives over these types seen in prior work \cite{reed2010distance,near2019duet}.

We have implemented four case studies in \solo, to validate its applicability to real differentially private algorithms. Each case study algorithm has been previously verified using specialized type systems, but ours is the first static approach embedded in a mainstream language with this capability. Due to space limitations, we include the complete code listings for the case studies in Appendix~\ref{sec:case_appendix} in the supplemental material; our case studies are summarized as follows:
\begin{itemize}[leftmargin=5mm, itemsep=0pt]
\item \textbf{K-Means clustering} is an iterative clustering algorithm previously verified in \fuzz. \solo infers that one iteration is {{\color{\colorMATH}\ensuremath{3\epsilon }}}-DP, and can use advanced composition for total privacy cost.
\item \textbf{Cumulative distribution function}, originally verified in \dfuzz, uses a loop to form a CDF. The \solo version leverages our looping combinators.
\item \textbf{Gradient descent}, originally verified in \duet, demonstrates \solo's ability to use the Gaussian mechanism and R\'enyi differential privacy.
\item \textbf{Multiplicative-Weights Exponential Mechanism}, previously verified in \dduo, uses the exponential mechanism in addition to looping combinators.
\end{itemize}

\section{Related Work}
\label{sec:related}


\paragraph{Lightweight Static Analysis for Differential Privacy.}
The \dpella~\cite{lobo2020programming} system is closest to our work. Like \solo, \dpella uses Haskell's type system for sensitivity analysis, but \dpella implements a custom dynamic analysis of programs to compute privacy and accuracy information. \solo goes beyond \dpella by supporting calculation of privacy costs using Haskell's type system, in addition to sensitivity information.

\paragraph{Linear Types.}
\fuzz was the first language and type system designed to verify differential privacy
costs of a program, and did so by modeling sensitivity using linear types \cite{reed2010distance}. \dfuzz extended \fuzz with dependent types and automation aided by SMT solvers \cite{gaboardi2013linear}.
The \duet language extends \fuzz with support for advanced variants of differential privacy such as {{\color{\colorMATH}\ensuremath{(\epsilon , \delta )}}}-differential privacy \cite{near2019duet}.
Adaptive Fuzz embeds a static sensitivity analysis within a dynamic privacy analysis using privacy odometers and filters \cite{Winograd-CortHR17}.
The above approaches all require linear types, which are typically not available in mainstream programming languages. The Granule language~\cite{orchard2019quantitative} is specifically designed to support linear types, but has not yet been widely adopted by programmers.

\paragraph{Indexed Monadic Types.}
Azevedo de Amorim et al \cite{amorim2018} introduce a path construction to embed relational tracking for {{\color{\colorMATH}\ensuremath{(\epsilon , \delta )}}}-differential privacy within the \fuzz type system.
This technique internalizes \emph{group privacy} and can produce non-optimal privacy bounds for multi-argument programs.

\paragraph{Program Logics, Randomness Alignments, \& Probabilistic Couplings}
Program logics such as \apRHL~\cite{Barthe:POPL12,Barthe:TOPLAS:13} are very flexible and expressive but difficult to automate. \fuzzi~\cite{zhang2019fuzzi} combines the \fuzz type system (for composition of sensitivity and privacy operations) with \apRHL (for proofs of basic mechanisms) to eliminate the need for trusted primitives like the Laplace mechanism.
Approaches based on randomness alignments, such as LightDP \cite{zhang2017lightdp} and ShadowDP \cite{wang2019proving} are suitable for verifying low level techniques such as the sparse vector technique \cite{privacybook} but not for sensitivity analysis.
Barthe et al introduce an approach for proving differential privacy using a generalization of probabilistic couplings. They present several case studies in the \apRHLplus~\cite{Barthe:LICS16} language which extends program logics with approximate couplings. The technique of aligning randomness is also used in the coupling method.
Albarghouthi and Hsu~\cite{DBLP:journals/pacmpl/AlbarghouthiH18} use an alternative
approach based on randomness alignments as well as approximate couplings.
None of these approaches can be easily embedded in mainstream languages like Haskell.


\paragraph{Dynamic Analyses.}
\pinq~\cite{mcsherry2009} pioneered dynamic enforcement of differential privacy for a subset of relational database query tasks.
Featherweight PINQ~\cite{ebadi2015featherweight} is a framework which models \pinq and presents a proof that any programs which use its API are differentially private. ProPer~\cite{ebadi2015} is also based on \pinq, but is primarily designed to maintain a privacy budget for each individual in a database system. ProPer operates by silently dropping records from queries when their privacy budget is exceeded. UniTrax~\cite{munz2018} improves on ProPer by allowing per-user budgets without silently dropping records. UniTrax operates by tracking queries against an abstract database as opposed to the actual database records. Diffprivlib~\cite{holohan2019diffprivlib} (for Python) and
Google's library~\cite{wilson2020differentially} (for several
languages) provide differentially private algorithms for modern machine-learning and general data analysis.
{{\color{\colorMATH}\ensuremath{\epsilon }}}ktelo~\cite{zhang2018ektelo} describes differentially private programs as \emph{plans} over high level libraries of \emph{operators} which have classes for data transformation, reduction, inference and other tasks.
\dduo~\cite{abuah2021dduo} extends \pinq-style dynamic analysis to general-purpose Python programs.
Dynamic approaches require running the program in order to verify differential privacy, and in many cases add significant runtime overhead.

\paragraph{Dynamic Testing.}
Recent work by Bichsel et al.~\cite{bichsel2018dp}, Ding et al.~\cite{ding2018detecting}, Wang et al.~\cite{wang2020checkdp}, and Wilson et al.~\cite{wilson2020differentially} have given rise to a set of techniques which facilitate testing for differential privacy. These approaches work for arbitrary programs written in any language, but they typically involve evaluating a program many times on neighboring inputs to check for possible violations of differential privacy---which can be intractable for complex algorithms.


\paragraph{Static Taint Analysis and IFC.}
Li et al \cite{li2006} present an embedded security sublanguage in Haskell using the arrows combinator interface. Russo et al introduce a monadic library for light-weight information flow security in Haskell \cite{russo2008}. Crockett et al propose a domain specific language for safe homomorphic encryption in Haskell \cite{crockett2018}. Safe Haskell \cite{terei2012} is a Haskell language extension which implements various security policies as monads. Parker et al \cite{parker2019} introduce a Haskell framework for enforcing information flow control policies in database-oriented web applications.

\solo's sensitivity tracking is similar to approaches for tracking information flow, but it is more quantitative and follows a probabilistic programming structure (e.g. sampling from distributions). \solo thus has the structure of a taint analysis, but is refined to capture the specific information flow property of differential privacy. In particular, the sensitivity and privacy environments that we attach to the types of values can be seen as similar to IFC labels~\cite{myers1999, buiras2015, yang2012, tripp2009, li2014, wang2008, arzt2014, sridharan2011}

\section{Conclusion}
\label{sec:conclusion}

We have presented \solo, a lightweight static analysis approach for differential privacy. \solo can be embedded in mainstream functional languages, without the need for a specialized type system. We have proved the soundness (metric preservation) of \solo using a logical relation to establish function sensitivity. We have presented several case studies verifying differentially private algorithms seen in related work.

\clearpage

\bibliographystyle{ACM-Reference-Format}
\bibliography{refs,gdp}

\appendix
\newpage
\section{Type-Level Function Definitions}
\label{sec:type_level_defs}

This section contains the complete definitions of type-level functions on sensitivity environments. The \inline{Plus} function is defined in terms of the infix operator \inline{+++}.

\begin{minted}{haskell}
type family MaxSens (s :: SEnv) :: TL.Nat where
  MaxSens '[] = 0
  MaxSens ('(_,n)':s) = MaxNat n (MaxSens s)

type family ScaleSens (n :: TL.Nat) (s :: SEnv) :: SEnv where
  ScaleSens _ '[] = '[]
  ScaleSens n1 ('(o,n2) ': s) = '(o,n1 TL.* n2) ': ScaleSens n1 s

type family TruncateSens (n :: TL.Nat) (s :: SEnv) :: SEnv where
  TruncateSens _ '[] = '[]
  TruncateSens n1 ('(o,n2) ': s) = '(o,TruncateNat n1 n2) ': TruncateSens n1 s

-- compute the sum of two sensitivity environments by traversing each
-- association list, adding values that have the same key (third equation), and
-- keeping things in order when keys don't overlap (fourth equation)
type family (+++) (s1 :: SEnv) (s2 :: SEnv) :: SEnv where
  '[]            +++ s2             = s2
  s1             +++ '[]            = s1
  ('(o,n1)':s1)  +++ ('(o,n2)':s2)  = '(o,n1 TL.+ n2) ': (s1 +++ s2)
  ('(o1,n1)':s1) +++ ('(o2,n2)':s2) =
    Cond (IsLT (TL.CmpSymbol o1 o2)) ('(o1,n1) ': (s1 +++ ('(o2,n2)':s2)))
                                     ('(o2,n2) ': (('(o1,n1)':s1) +++ s2))
\end{minted}

\section{Case Studies}
\label{sec:case_appendix}

For our case studies, we introduce sensitive matrices \inline{SMatrix &$\sigma $& m r c a}, sensitive key-value mappings (dictionaries) \inline{SDict &$\sigma $& m a b}, and sensitive sets \inline{SSet &$\sigma $& a}, as well as sound primitive operations over these values. \inline{r c} are matrix dimensions, \inline{a b} are type parameters representing the contents of the compound types. \inline{&$\sigma $&} represents the sensitivity environments as usual, and \inline{m} represents the distance metric. We provide the usual primitives over these types seen in prior work \cite{reed2010distance,near2019duet}. Sets are assumed to use the Hamming metric, while matrices and key-value maps use the standard compound metrics discussed earlier: \inline{L1 | L2 | LInf}.
Recall that \inline{NatS} is a type for singleton naturals and \inline{natS @ 5} creates a singleton for the value \inline{5}.

\paragraph{Case study: k-means clustering.}
We present a case study based on the privacy-preserving implementation of the k-means clustering algorithm seen originally in Blum et al, as well as in the presentation of the \fuzz language. The goal of the k-means clustering algorithm is to iteratively find a set of $k$ clusters to which $n$ datapoints can be partitioned, where each datapoint belongs to the cluster with the nearest \emph{center} or \emph{centroid} to it.

The algorithm operates by beginning from an initial guess at the list of cluster centroids which it iteratively improves on. A single iteration consists of grouping each datapoint with the centroid it is closest to, then recalculating the mean of each group to initialize the next round's list of centroids. The algorithm applies the Laplace mechanism three times per iteration; \solo infers the privacy cost of one iteration to be {{\color{\colorMATH}\ensuremath{3\epsilon }}}, and we can use the advanced composition combinator introduced earlier to obtain privacy bounds when the algorithm runs for many iterations.

The \inline{assign} function is responsible for pairing each initial datapoint with the index of the centroid it is closest to in the initial centroid list. The \inline{partition} function then groups the set of datapoints into a list of sets, where each set represents a cluster. The rest of the algorithm proceeds to compute the private new center of each cluster. Given that our datapoints are two-dimensional, \inline{totx} and \inline{toty} sum the \inline{x} and \inline{y} coordinates of each cluster of datapoint. After we compute the size of each cluster, the \inline{avg} function calculates the new mean of each cluster with the three-element tuple of coordinate and size data zipped together for each cluster.

\vspace{1em}
\begin{minted}{haskell}
type Pt = (Double, Double)
-- helpers
assign :: [Pt] -> SSet #$\sigma $# Pt -> SSet #$\sigma $# (Pt,Integer)
ppartition :: SSet #$\sigma $# (Pt,Integer) -> SList #$\sigma $# m SSet (Set Pt)
totx :: SSet #$\beta $# Pt -> SDouble #$\beta $# 'AbsoluteM
toty :: SSet #$\beta $# Pt -> SDouble #$\beta $# 'AbsoluteM
size :: SSet #$\beta $# Pt -> SDouble #$\beta $# 'AbsoluteM
avg :: ((Double, Double), Double) ->  (Double, Double)

-- kmeans: 3$\epsilon $-private
iterate :: #$\forall $# m #$\sigma $#. (TL.KnownNat (MaxSens #$\sigma $#)) => SSet #$\sigma $# Pt -> [Pt] -> _
iterate b ms = do
  let b' = ppartition (assign ms b)
  tx <- vector_laplace @1 Proxy $ map#$_{0}$# totx b'
  ty <- vector_laplace @1 Proxy $ map#$_{0}$# toty b'
  t <- vector_laplace @1 Proxy $ map#$_{0}$# size b'
  let stats = zip (zip tx ty) t
  return $ (map avg stats)
\end{minted}

The Haskell typechecker can infer the privacy cost of one iteration of the k-means algorithm as $3\epsilon $.

\paragraph{Case study: Cumulative Distribution Function.}
Our next case study implements the private \inline{cdf} function as seen in \dfuzz \cite{mcsherry2010,gaboardi2013linear}. Given a database of numeric records, and a set of buckets associated with cutoff values, the \inline{cdf} function privately partitions each record to its respective bucket. As in \dfuzz, this case study demonstrates the ability of \solo to verify privacy costs which depend on a program input, in this case the symbolic number of buckets \inline{m}. However, our approach to achieve this feature relies on singleton types in Haskell, and does not require a \emph{true} dependent type system.

\vspace{1em}
\begin{minted}{haskell}
cdf :: #$\forall $# m o s #$\epsilon $#. (TL.KnownNat m,TL.KnownNat #$\epsilon $#) =>
     NatS m
  -> NatS #$\epsilon $#
  -> Matrix m 1 Double -- buckets
  -> SSet #$\sigma $# Double -- db
  -> EpsPrivacyMonad (ScalePriv m (TruncateSens #$\epsilon $# #$\sigma $#)) [Double]
cdf m t buckets db = do
  let f :: Double -> SSet #$\sigma $# Double
           -> _
      f = \x -> \db1 ->
        let (lt,gt) = bag_split (\k -> k < x) db1 in
          (laplace @#$\epsilon $# Proxy (natS @5) $ (bag_size lt), db)
      z = mloop#$_{1}$# m buckets db f $ return []
  z
\end{minted}

\paragraph{Case study: Gradient Descent.}

We now present a case study (Figure~\ref{fig:gd}) based on a simple machine learning algorithm \cite{BST} which performs gradient descent.

As inputs, the \inline{gd} algorithm accepts a list of feature vectors \inline{xs} representing sensitive user data, a set of corresponding classifier labels \inline{ys}, a number of iterations to run \inline{k} and the desired privacy cost per iteration \inline{&$\epsilon $&}. Gradient descent also requires a loss function which describes the accuracy of the current model in predicting the correct classification of user examples. The algorithm works by moving the current model in the opposite direction of the gradient of the loss function. In order to preserve privacy for this algorithm, we may introduce noise at the point where user data is exposed: the gradient calculation. The let-bound function \inline{f} in the \inline{gd} algorithm contains the workload of a single iteration of the program: in which we perform the gradient calculation and introduce noise using the vector-based Laplacian mechanism.

\begin{figure}
\begin{minted}{haskell}

-- sequential composition privacy loop over a matrix
mloop :: NatS k
  -> SMatrix #$\sigma $# LInf 1 n SDouble
  -> (SMatrix #$\sigma $# LInf 1 n SDouble ->
      EpsPrivacyMonad (TruncateSens #$\epsilon $# #$\sigma $#) (Matrix 1 n Double))
  -> EpsPrivacyMonad (ScalePriv k (TruncateSens #$\epsilon $# #$\sigma $#)) (Matrix 1 n Double)

-- gradient descent algorithm
gd :: NatS k
  -> NatS #$\epsilon $#
  -> SMatrix #$\sigma $# LInf m n SDouble
  -> SMatrix #$\sigma $# LInf m 1 SDouble
  -> EpsPrivacyMonad (ScalePriv k (TruncateSens #$\epsilon $# #$\sigma $#)) (Matrix 1 n Double)
gd k t xs ys = do
  let #$m_{0}$# = matrix (sn32 @ 1) (sn32 @ n) $ \ i j -> 0
      cxs = mclip xs (natS @ 1)
  let f :: SMatrix #$\sigma _{1}$# LInf 1 n SDouble
    -> EpsPrivacyMonad (TruncateSens #$\epsilon $# #$\sigma _{1}$#) (Matrix 1 n Double)
      f = \#$\theta $# -> let g = mlaplace @#$\epsilon $# Proxy (natS @5) $ xgradient #$\theta $# cxs ys
      in msubM (return #$\theta $#) g
      z = mloop @(TruncateNat t 1) k (sourceM $ xbp #$m_{0}$#) f
  z
\end{minted}
\caption{Gradient Descent}
\label{fig:gd}
\end{figure}

\paragraph{Case study: Multiplicative-Weights Exponential Mechanism.}

Our final case study, the MWEM algorithm \cite{hardt2012simple}, builds a differentially private synthetic dataset which approximates some sensitive real dataset with some level of accuracy. The algorithm combines usage of the Exponential Mechanism, Laplacian noise, and the multiplicative-weights update rule to construct a noisy synthetic dataset over several iterations with competitive privacy leakage bounds via composition.

\inline{mwem} (Figure~\ref{fig:mwem}) takes the following inputs: a number of iterations \inline{k}, a privacy cost \inline{&$\epsilon $&} to be used by the exponential mechanism and Laplace, \inline{real_data} the sensitive information dataset, a query workload \inline{queries} over the sensitive dataset, and lastly \inline{syn_data} which represents a uniform or random distribution over the domain of the real dataset.

Each iteration, the \inline{mwem} algorithm selects a query from the query workload privately using the exponential mechanism. The query selected is selected by virtue of a scoring function which determines that the result of the query on the synthetic dataset greatly differs from its result on the real dataset (more so than other queries in the workload, with some amount of error). The algorithm updates the synthetic dataset using the multiplicative weights update rule, based on the query result on the real dataset with some noise added. This process continues over several iterations until the synthetic dataset reaches some some level of accuracy relative to the real dataset.

\begin{figure}
\vspace{1em}
\begin{minted}{haskell}

-- exponential mechanism
expmech :: [(Double,Double)]
  -> NatS #$\epsilon $#
  -> SDict #$\sigma $# LInf SDouble SDouble
  -> EpsPrivacyMonad (TruncateSens #$\epsilon $# #$\sigma $#) Int

-- exponential mechanism + laplace loop
expnloop :: NatS k
  -> NatS #$\epsilon $#
  -> [(Double,Double)]
  -> SDict #$\sigma $# LInf SDouble SDouble
  -> Map.Map Double Double
  -> EpsPrivacyMonad (ScalePriv (2 TL.* k) (TruncateSens #$\epsilon $# #$\sigma $#)) (Map.Map Double Double)

-- multiplicative-weights exponential mechanism
mwem :: NatS k
  -> NatS #$\epsilon $#
  -> [(Double,Double)]
  -> SDict #$\sigma $# LInf SDouble SDouble
  -> Map Double Double
  -> EpsPrivacyMonad (ScalePriv (2 TL.* k) (TruncateSens #$\epsilon $# #$\sigma $#)) (Map Double Double)
mwem k #$\epsilon $# queries real_data syn_data =
  expnloop k #$\epsilon $# queries real_data syn_data
\end{minted}
\caption{Multiplicative Weights Exponential Mechanism.}
\label{fig:mwem}
\end{figure}

\section{Lemmas,  Theorems \& Proofs}
\label{sec:proof}

\begingroup
\setlength{\parindent}{0pt}

\begin{lemma}[Plus Respects]\label{thm:plus-respects}\ \\
  If {{\color{\colorMATH}\ensuremath{r_{1} \sim ^{r} r_{2}}}} then {{\color{\colorMATH}\ensuremath{r_{1} + r_{3} \sim ^{r} r_{2} + r_{3}}}}.
\end{lemma}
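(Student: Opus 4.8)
The statement unfolds the abbreviation $r_1 \sim^r r_2 \overset{\vartriangle}{\iff} |r_1 - r_2| \le r$ from Figure~\ref{fig:logical-relations}, so the plan is to simply chase the definition. First I would assume the hypothesis $r_1 \sim^r r_2$, which by definition means $|r_1 - r_2| \le r$. Then I would observe that adding the same value $r_3$ to both sides does not change their difference: $(r_1 + r_3) - (r_2 + r_3) = r_1 - r_2$, hence $|(r_1 + r_3) - (r_2 + r_3)| = |r_1 - r_2|$.

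Combining the two, $|(r_1 + r_3) - (r_2 + r_3)| = |r_1 - r_2| \le r$, which is exactly the definition of $r_1 + r_3 \sim^r r_2 + r_3$. This closes the proof. The only ``obstacle'' is making sure the reader sees that $\sim^r$ is literal shorthand for the bound on the absolute difference; once that is spelled out, the argument is a one-line cancellation and no step-indexing or induction is needed, since this lemma lives entirely at the level of real-number distances rather than the logical relation proper.

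I expect this lemma to be used as a small building block in the cases of the metric preservation proof dealing with addition of sensitive reals (rule {\mtextsc{t-splus}}): when one argument is held fixed and the other varies, the variation of the sum is exactly the variation of the moving argument. So after the proof I might remark that an analogous ``$\cdot$ respects'' statement for scalar multiplication (used for {\mtextsc{t-stimes}}) follows by the same pattern, scaling the bound by $|r|$ rather than leaving it unchanged. But that is a separate lemma; here I would keep the proof to the two displayed equalities above and conclude.
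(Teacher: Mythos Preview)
Your proof is correct and is exactly the paper's approach: unfold the definition $r_1 \sim^r r_2 \iff |r_1 - r_2| \le r$ and observe that $(r_1+r_3)-(r_2+r_3)=r_1-r_2$, which the paper compresses into the single line ``By $|r_1 - r_2| \le r \implies |(r_1 + r_3) - (r_2 + r_3)| \le r$.'' Your remarks about its use in the {\mtextsc{t-splus}} case and the analogous multiplication lemma are also accurate (the paper states \nameref{thm:times-respects} separately, just as you anticipate).
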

\begin{proof}\ \\
  By {{\color{\colorMATH}\ensuremath{|r_{1} - r_{2}| \leq  r \implies  |(r_{1} + r_{3}) - (r_{2} + r_{3})| \leq  r}}}.
\end{proof}

\begin{lemma}[Times Respects]\label{thm:times-respects}\ \\
  If {{\color{\colorMATH}\ensuremath{r_{1} \sim ^{r} r_{2}}}} then {{\color{\colorMATH}\ensuremath{r_{3}r_{1} \sim ^{r_{3}r} r_{3}r_{2}}}}.
\end{lemma}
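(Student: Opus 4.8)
The plan is to unfold the definition of the real-number relation and reduce the claim to an elementary inequality, exactly as in the proof of the preceding \emph{Plus Respects} lemma. By definition, $r_1 \sim^r r_2$ means $|r_1 - r_2| \le r$, and the goal $r_3 r_1 \sim^{r_3 r} r_3 r_2$ means $|r_3 r_1 - r_3 r_2| \le r_3 r$. So the entire content is the implication $|r_1 - r_2| \le r \implies |r_3 r_1 - r_3 r_2| \le r_3 r$.

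First I would factor: $|r_3 r_1 - r_3 r_2| = |r_3|\,|r_1 - r_2|$. Then, since $|r_1 - r_2| \le r$, monotonicity of multiplication by the nonnegative quantity $|r_3|$ gives $|r_3|\,|r_1 - r_2| \le |r_3|\, r$. Finally I would use that the scalar $r_3$ arising here is a (nonnegative) sensitivity-scaling factor — it is the interpretation of a singleton real literal used in the {\mtextsc{t-stimes}} rule, and all such quantities in $\dot{\mathbb R}$ used for scaling are nonnegative — so $|r_3| = r_3$ and the bound becomes $r_3 r$, as required.

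The only point requiring any care is the sign of $r_3$: the factored inequality naturally produces $|r_3|\,r$, and one needs the side condition that scaling coefficients are nonnegative to identify this with $r_3 r$. I expect this to be a one-line remark rather than a genuine obstacle, and the proof should be presentable in the same terse style as \emph{Plus Respects}, e.g.\ ``By $|r_1 - r_2| \le r \implies |r_3 r_1 - r_3 r_2| = |r_3|\,|r_1-r_2| \le r_3 r$.''
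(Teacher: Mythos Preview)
Your proposal is correct and follows essentially the same approach as the paper: unfold the definition of $\sim^r$ and reduce to the elementary inequality $|r_1 - r_2| \le r \implies |r_3 r_1 - r_3 r_2| \le r_3 r$. In fact you are more careful than the paper, which states only the one-line implication (with what appears to be a typo on the right-hand bound) and does not remark on the nonnegativity of the scaling factor $r_3$; your observation that this is needed to pass from $|r_3|\,r$ to $r_3 r$ is a valid point the paper glosses over.
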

\begin{proof}\ \\
  By {{\color{\colorMATH}\ensuremath{|r_{1} - r_{2}| \leq  r \implies  |r_{3}r_{1} - r_{3}r_{2}| \leq  r}}}.
\end{proof}

\begin{lemma}[Triangle]\label{thm:triangle}\ \\
  If {{\color{\colorMATH}\ensuremath{r_{1} \sim ^{r_{A}} r_{2}}}} and {{\color{\colorMATH}\ensuremath{r_{2} \sim ^{r_{B}} r_{3}}}} then {{\color{\colorMATH}\ensuremath{r_{1} \sim ^{r_{A} + r_{B}} r_{3}}}}.
\end{lemma}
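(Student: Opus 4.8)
The plan is to unfold the definition of the real-number distance relation $r_1 \sim^r r_2$, which by Figure~\ref{fig:logical-relations} simply means $|r_1 - r_2| \leq r$. Once both hypotheses are expanded, the statement reduces to the familiar triangle inequality for the absolute value on $\mathbb{R}$, exactly in the same spirit as the preceding Lemmas~\ref{thm:plus-respects} and~\ref{thm:times-respects}.

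Concretely, first I would assume $|r_1 - r_2| \leq r_A$ (from $r_1 \sim^{r_A} r_2$) and $|r_2 - r_3| \leq r_B$ (from $r_2 \sim^{r_B} r_3$). Then I would write $r_1 - r_3 = (r_1 - r_2) + (r_2 - r_3)$ and apply subadditivity of $|\cdot|$ to get $|r_1 - r_3| \leq |r_1 - r_2| + |r_2 - r_3|$. Finally I would chain this with the two hypotheses to conclude $|r_1 - r_3| \leq r_A + r_B$, which is precisely $r_1 \sim^{r_A + r_B} r_3$ by definition.

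There is no real obstacle here: the only ``work'' is recognizing that the logical relation at type $\mathbb{R}$ is literally a metric ball, so that the standard metric axiom applies verbatim. The one-line proof will mirror the format of the neighbouring lemmas, i.e. ``By $|r_1 - r_2| \leq r_A \wedge |r_2 - r_3| \leq r_B \implies |r_1 - r_3| \leq r_A + r_B$.''
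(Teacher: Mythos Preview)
Your proposal is correct and matches the paper's own proof, which simply reads ``By the classic triangle inequality lemma for real numbers.'' You have merely spelled out the unfolding of $r_1 \sim^{r} r_2 \iff |r_1 - r_2| \leq r$ and the subadditivity step explicitly, which is exactly what the paper's one-line appeal to the triangle inequality amounts to.
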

\begin{proof}\ \\
  By the classic triangle inequality lemma for real numbers.
\end{proof}

\begin{lemma}[Step-index Weakening]\label{thm:step-index-weakening}\ \\
  For {{\color{\colorMATH}\ensuremath{n' \leq  n}}}: (1) If {{\color{\colorMATH}\ensuremath{\gamma _{1} \sim  \gamma _{2} \in  {\mathcal{G}}_{n}^{\Sigma }\llbracket \Gamma \rrbracket }}} then {{\color{\colorMATH}\ensuremath{\gamma _{1} \sim  \gamma _{2} \in  {\mathcal{G}}_{n^{\prime}}^{\Sigma }\llbracket \Gamma \rrbracket }}}; and
  (2) If {{\color{\colorMATH}\ensuremath{v_{1} \sim  v_{2} \in  {\mathcal{V}}_{n}^{\Sigma }\llbracket \tau \rrbracket }}} then {{\color{\colorMATH}\ensuremath{v_{1} \sim  v_{2} \in  {\mathcal{V}}_{n^{\prime}}^{\Sigma }\llbracket \tau \rrbracket }}}; and (3) If {{\color{\colorMATH}\ensuremath{\gamma _{1},e_{1} \sim 
  \gamma _{2},e_{2} \in  {\mathcal{E}}_{n}^{\Sigma }\llbracket \tau \rrbracket }}} then {{\color{\colorMATH}\ensuremath{\gamma _{1},e_{1} \sim  \gamma _{2},e_{2} \in  {\mathcal{E}}_{n}^{\Sigma }\llbracket \tau \rrbracket }}}.
\end{lemma}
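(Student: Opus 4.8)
The plan is to prove parts (1)--(3) simultaneously by a primary induction on the step index $n$, with a secondary structural induction (on types, with the structure of related values breaking ties) used inside the value case; this is the standard monotonicity-of-step-index argument for step-indexed logical relations. The conclusion of part (3) is of course intended to read $\gamma_1,e_1 \sim \gamma_2,e_2 \in \mathcal{E}_{n'}^{\Sigma}\llbracket\tau\rrbracket$. If $n' = n$ there is nothing to prove, so I assume $n' < n$, hence $n = m+1$ for some $m$. Part (1) then follows immediately from part (2): $\gamma_1 \sim \gamma_2 \in \mathcal{G}_{n}^{\Sigma}\llbracket\Gamma\rrbracket$ unfolds to the assertion that $\gamma_1(x) \sim \gamma_2(x) \in \mathcal{V}_{n}^{\Sigma}\llbracket\Gamma(x)\rrbracket$ for every $x$, and part (2) weakens each such fact to index $n'$.

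For part (2) I would do case analysis on $\tau$. The cases $\mtexttt{bool}$, $\mtexttt{real}$, $\mtexttt{real}[r]$, $\mtexttt{sreal}@\Sigma'$, and ${\scriptstyle\bigcirc}_{\Sigma'}(\tau)$ are immediate, since the defining conditions ($b_1 = b_2$; $r_1 = r_2$; $r_1 \sim^{\Sigma \cdot \Sigma'} r_2$; the exponential bound relating the two PMFs) do not mention the step index at all, so membership is index-independent. For $\tau_1 \times \tau_2$ and $(\sigma_1 \otimes \sigma_2)@\Sigma'$ relatedness is componentwise at the \emph{same} index, so I apply the inner structural IH to each component. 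For $\mtexttt{list}(\tau)$ and $\mtexttt{slist}(\sigma)@\Sigma'$ the empty case is trivial, and in the cons case the head is related at the smaller type $\tau$ (resp.\ $\sigma@\Sigma'$) while the tail is related at the same list type but at a strictly smaller value, so the inner IH applies to both subgoals. The arrow case $\tau_1 \to \tau_2$ is the only one that uses $n' < n$ rather than structure: given $n'' \le n'$ and $v_1 \sim v_2 \in \mathcal{V}_{n''}^{\Sigma}\llbracket\tau_1\rrbracket$, since $n'' \le n' < n$ we instantiate the hypothesis $\langle\lambda_z x.e_1 \mid \gamma_1\rangle \sim \langle\lambda_z x.e_2 \mid \gamma_2\rangle \in \mathcal{V}_{n}^{\Sigma}\llbracket\tau_1\to\tau_2\rrbracket$ directly at index $n''$ to get the required membership in $\mathcal{E}_{n''}^{\Sigma}\llbracket\tau_2\rrbracket$; no appeal to an IH is needed here, only transitivity of $\le$.

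Part (3) is where the step-index IH does real work. If $n' = 0$ then $\mathcal{E}_{0}^{\Sigma}\llbracket\tau\rrbracket$ holds vacuously. Otherwise $n' = m'+1$ with $m' < m$. Fix $n'' \le m'$ and $v_1$ with $\gamma_1 \vdash e_1 \Downarrow_{n''} v_1$; since $n'' \le m' \le m$, the hypothesis $\gamma_1,e_1 \sim \gamma_2,e_2 \in \mathcal{E}_{m+1}^{\Sigma}\llbracket\tau\rrbracket$ yields a unique $v_2$ with $\gamma_2 \vdash e_2 \Downarrow_{n''} v_2$ and $v_1 \sim v_2 \in \mathcal{V}_{m-n''}^{\Sigma}\llbracket\tau\rrbracket$, which already gives (C1); it remains to upgrade the index for (C2). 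Since $m' \le m$ we have $m'-n'' \le m-n''$, and $m-n'' \le m < n$, so the primary IH for part (2) at the strictly smaller step index $m-n''$ gives $v_1 \sim v_2 \in \mathcal{V}_{m'-n''}^{\Sigma}\llbracket\tau\rrbracket$, as needed.

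The main obstacle is simply getting the induction order right so that it is well founded: $\mathcal{E}_n$ unfolds to $\mathcal{V}$ at a \emph{strictly smaller} index, $\mathcal{V}_n$ at a composite type unfolds to $\mathcal{V}_n$ at a smaller type or value, and $\mathcal{V}_n$ at an arrow type re-enters $\mathcal{E}$ at an index $\le n$; so the appropriate measure is lexicographic --- step index first, then (within the $\mathcal{V}$ part) the size of the related types and values. Once that ordering is fixed each case is a one-line check, and the only arithmetic used is $n'' \le m' \le m$ and $m' < m < n$.
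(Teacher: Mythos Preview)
Your proposal is correct and follows essentially the same approach as the paper's proof, which is stated only as a one-line sketch (``induction on $n$ mutually for all properties; case analysis on $v_1$ and $v_2$ for property~(2), and case analysis on $e_1$ and $e_2$ for property~(3)''). Your treatment is in fact more careful than the paper's sketch: you make explicit the lexicographic measure (step index, then type/value structure) needed for the list case, you observe that the arrow case requires no inductive hypothesis at all (only transitivity of $\le$), and your argument for part~(3) proceeds directly from the definition of $\mathcal{E}$ rather than by case analysis on expressions, which is cleaner since $\mathcal{E}_n^\Sigma\llbracket\tau\rrbracket$ is defined uniformly in the expression.
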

\begin{proof}\ \\
  By induction on {{\color{\colorMATH}\ensuremath{n}}} mutually for all properties; case analysis on {{\color{\colorMATH}\ensuremath{v_{1}}}} and
  {{\color{\colorMATH}\ensuremath{v_{2}}}} for property (2), and case analysis on {{\color{\colorMATH}\ensuremath{e_{1}}}} and {{\color{\colorMATH}\ensuremath{e_{2}}}} for property (3).
\end{proof}

\begin{theorem}[Metric Preservation]\label{thm:metric-preservation}\ \\
  \begin{itemize}[label={},leftmargin=0pt]\item  \begin{tabular}{r@{\hspace*{1.00em}}l@{\hspace*{1.00em}}l
     } If:      & {{\color{\colorMATH}\ensuremath{\gamma _{1} \sim  \gamma _{2} \in  {\mathcal{G}}_{n}^{\Sigma }\llbracket \Gamma \rrbracket }}}     & {{\color{\colorTEXT}\textnormal{{\mtextit{(H1)}}}}}
     \cr  And:     & {{\color{\colorMATH}\ensuremath{ \Gamma  \vdash  e \mathrel{:} \tau }}}             & {{\color{\colorTEXT}\textnormal{{\mtextit{(H2)}}}}}
     \cr  Then:    & {{\color{\colorMATH}\ensuremath{\gamma _{1},e \sim  \gamma _{2},e \in  {\mathcal{E}}_{n}^{\Sigma }\llbracket \tau \rrbracket }}} &
     \end{tabular}
  \item 
  \item  That is, either {{\color{\colorMATH}\ensuremath{n=0}}}, or {{\color{\colorMATH}\ensuremath{n = n^{\prime}+1}}} and\ldots 
  \item 
  \item  \begin{tabular}{r@{\hspace*{1.00em}}l@{\hspace*{1.00em}}l
     } If:      & {{\color{\colorMATH}\ensuremath{n^{\prime \prime} \leq  n^{\prime}}}}                   & {{\color{\colorTEXT}\textnormal{{\mtextit{(H3)}}}}}
     \cr  And:     & {{\color{\colorMATH}\ensuremath{\gamma _{1} \vdash  e \Downarrow _{n^{\prime \prime}} v_{1}}}}           & {{\color{\colorTEXT}\textnormal{{\mtextit{(H4)}}}}}
     \cr  Then:    & {{\color{\colorMATH}\ensuremath{\exists !v_{2}.\hspace*{0.33em} \gamma _{2} \vdash  e \Downarrow _{n^{\prime \prime}} v_{2}}}}     & {{\color{\colorTEXT}\textnormal{{\mtextit{(C1)}}}}}
     \cr  And:     & {{\color{\colorMATH}\ensuremath{v_{1} \sim  v_{2} \in  {\mathcal{V}}_{n^{\prime}-n^{\prime \prime}}^{\Sigma }\llbracket \tau \rrbracket }}}  & {{\color{\colorTEXT}\textnormal{{\mtextit{(C2)}}}}}
     \end{tabular}
  \end{itemize}
\end{theorem}
\begin{proof}\ \\
  By strong induction on {{\color{\colorMATH}\ensuremath{n}}} and case analysis on {{\color{\colorMATH}\ensuremath{e}}} and {{\color{\colorMATH}\ensuremath{\tau }}}:
  \begin{itemize}[label=\textbf{-},leftmargin=*]\item  \begin{itemize}[label={},leftmargin=0pt]\item  {\mtextbf{Case}} {{\color{\colorMATH}\ensuremath{n=0}}}: 
        Trivial by definition.
     \end{itemize} 
  \item  \begin{itemize}[label={},leftmargin=0pt]\item  {\mtextbf{Case}} {{\color{\colorMATH}\ensuremath{n=n^{\prime}+1}}} {\mtextbf{and}} {{\color{\colorMATH}\ensuremath{e=x}}}: 
     \item  By inversion on {\mtextit{(H4)}} we have:
        {{\color{\colorMATH}\ensuremath{n^{\prime} = 0}}} and
        {{\color{\colorMATH}\ensuremath{v_{1} = \gamma _{1}(x)}}}.
        Instantiate {{\color{\colorMATH}\ensuremath{v_{2} = \gamma _{2}(x)}}} in the conclusion.
        To show:
        {\mtextit{(C1)}}: {{\color{\colorMATH}\ensuremath{\gamma _{2} \vdash  x \Downarrow _{0} \gamma _{2}(x)}}} unique; and
        {\mtextit{(C2)}}: {{\color{\colorMATH}\ensuremath{\gamma _{1}(x) \sim  \gamma _{2}(x) \in  {\mathcal{V}}_{n^{\prime}}^{\Sigma }\llbracket \tau \rrbracket }}}.
        {\mtextit{(C1)}} is by {\mtextsc{ e-var}} application and inversion.
        {\mtextit{(C2)}} is by {\mtextit{(H1)}} and \nameref{thm:step-index-weakening}.
     \end{itemize} 
  \item  \begin{itemize}[label={},leftmargin=0pt]\item  {\mtextbf{Case}} {{\color{\colorMATH}\ensuremath{n=n^{\prime}+1}}} {\mtextbf{and}} {{\color{\colorMATH}\ensuremath{e=r}}} {\mtextbf{and}} {{\color{\colorMATH}\ensuremath{\tau  = {{\color{\colorSYNTAX}\mtexttt{real}}}}}}: 
     \item  By inversion on {\mtextit{(H4)}} we have:
        {{\color{\colorMATH}\ensuremath{n^{\prime} = 0}}} and
        {{\color{\colorMATH}\ensuremath{v_{1} = r}}}.
        Instantiate {{\color{\colorMATH}\ensuremath{v_{2} = r}}} in the conclusion.
        To show:
        {\mtextit{(C1)}}: {{\color{\colorMATH}\ensuremath{\gamma _{2} \vdash  r \Downarrow _{0} r}}} unique; and
        {\mtextit{(C2)}}: {{\color{\colorMATH}\ensuremath{r = r}}}.
        {\mtextit{(C1)}} is by {\mtextsc{ e-real}} application and inversion.
        {\mtextit{(C2)}} is trivial.
     \end{itemize} 
  \item  \begin{itemize}[label={},leftmargin=0pt]\item  {\mtextbf{Case}} {{\color{\colorMATH}\ensuremath{n=n^{\prime}+1}}} {\mtextbf{and}} {{\color{\colorMATH}\ensuremath{e=r}}} {\mtextbf{and}} {{\color{\colorMATH}\ensuremath{\tau  = {{\color{\colorSYNTAX}\mtexttt{sreal}}}@\varnothing }}}: 
     \item  By inversion on {\mtextit{(H4)}} we have:
        {{\color{\colorMATH}\ensuremath{n^{\prime} = 0}}} and
        {{\color{\colorMATH}\ensuremath{v_{1} = r}}}.
        Instantiate {{\color{\colorMATH}\ensuremath{v_{2} = r}}} in the conclusion.
        To show:
        {\mtextit{(C1)}}: {{\color{\colorMATH}\ensuremath{\gamma _{2} \vdash  r \Downarrow _{0} r}}} unique; and
        {\mtextit{(C2)}}: {{\color{\colorMATH}\ensuremath{r \sim ^{0} r}}}.
        {\mtextit{(C1)}} is by {\mtextsc{ e-sreal}} application and inversion.
        {\mtextit{(C2)}} is immediate by {{\color{\colorMATH}\ensuremath{|r-r| = 0 \leq  0}}}.
     \end{itemize} 
  \item  \begin{itemize}[label={},leftmargin=0pt]\item  {\mtextbf{Case}} {{\color{\colorMATH}\ensuremath{n=n^{\prime}+1}}} {\mtextbf{and}} {{\color{\colorMATH}\ensuremath{e={\mtext{sing}}(r)}}} {\mtextbf{and}} {{\color{\colorMATH}\ensuremath{\tau  = {{\color{\colorSYNTAX}\mtexttt{real}}}[r]}}}: 
     \item  By inversion on {\mtextit{(H4)}} we have:
        {{\color{\colorMATH}\ensuremath{n^{\prime} = 0}}} and
        {{\color{\colorMATH}\ensuremath{v_{1} = r}}}.
        Instantiate {{\color{\colorMATH}\ensuremath{v_{2} = r}}} in the conclusion.
        To show:
        {\mtextit{(C1)}}: {{\color{\colorMATH}\ensuremath{\gamma _{2} \vdash  {\mtext{sing}}(r) \Downarrow _{0} r}}} unique; and
        {\mtextit{(C2)}}: {{\color{\colorMATH}\ensuremath{r = r}}}.
        {\mtextit{(C1)}} is by {\mtextsc{ e-sing}} application and inversion.
        {\mtextit{(C2)}} is immediate.
     \end{itemize} 
  \item  \begin{itemize}[label={},leftmargin=0pt]\item  {\mtextbf{Case}} {{\color{\colorMATH}\ensuremath{n=n^{\prime}+1}}} {\mtextbf{and}} {{\color{\colorMATH}\ensuremath{e=e_{1}+e_{2}}}} {\mtextbf{and}} {{\color{\colorMATH}\ensuremath{\tau ={{\color{\colorSYNTAX}\mtexttt{real}}}}}}: 
     \item  By inversion on {\mtextit{(H4)}}:
     \item  \ {{\color{\colorMATH}\ensuremath{\begin{array}{rclcl@{\hspace*{1.00em}}l
           } \gamma _{1} &{}\vdash {}& e_{1} &{}\Downarrow _{n_{1}}{}& r_{1 1} & {{\color{\colorTEXT}\textnormal{{\mtextit{(H4.1)}}}}}
           \cr  \gamma _{1} &{}\vdash {}& e_{2} &{}\Downarrow _{n_{2}}{}& r_{1 2} & {{\color{\colorTEXT}\textnormal{{\mtextit{(H4.2)}}}}}
           \end{array}}}}
     \item  and we also have:
        {{\color{\colorMATH}\ensuremath{n^{\prime} = n_{1} + n_{2}}}},
        {{\color{\colorMATH}\ensuremath{v_{1} = r_{1 1}+r_{1 2}}}} and
        By IH ({{\color{\colorMATH}\ensuremath{n = n_{i}}}} decreasing), {\mtextit{(H1)}}, {\mtextit{(H2)}}, {\mtextit{(H3)}}, {\mtextit{(H4.1)}} and {\mtextit{(H4.2)}} we have:
        {{\color{\colorMATH}\ensuremath{\gamma _{2} \vdash  e_{1} \Downarrow _{n_{1}} r_{2 1}}}} (unique) {\mtextit{(IH.C1.1)}};
        {{\color{\colorMATH}\ensuremath{\gamma _{2} \vdash  e_{2} \Downarrow _{n_{2}} r_{2 2}}}} (unique) {\mtextit{(IH.C1.2)}};
        {{\color{\colorMATH}\ensuremath{r_{1 1} = r_{2 1}}}} {\mtextit{(IH.C2.1)}} ; and
        {{\color{\colorMATH}\ensuremath{r_{1 2} = r_{2 2}}}} {\mtextit{(IH.C2.2)}} .
        Instantiate {{\color{\colorMATH}\ensuremath{v_{2} = r_{2 1} + r_{2 2}}}}.
        To show:
        {\mtextit{(C1)}}: {{\color{\colorMATH}\ensuremath{\gamma _{2} \vdash  e_{1} + e_{2} \Downarrow _{n_{1}+n_{2}} r_{2 1} + r_{2 2}}}} (unique); and
        {\mtextit{(C2)}}: {{\color{\colorMATH}\ensuremath{r_{1 1} + r_{1 2} = r_{2 1} + r_{2 2}}}}.
        {\mtextit{(C1)}} is by {{\color{\colorMATH}\ensuremath{(IH.C1.1)}}}, {{\color{\colorMATH}\ensuremath{(IH.C1.2)}}}, and {\mtextsc{ e-plus}} application and inversion.
        {\mtextit{(C2)}} is by {\mtextit{(IH.C2.1)}} and {\mtextit{(IH.C2.2)}}.
     \end{itemize} 
  \item  \begin{itemize}[label={},leftmargin=0pt]\item  {\mtextbf{Case}} {{\color{\colorMATH}\ensuremath{n=n^{\prime}+1}}} {\mtextbf{and}} {{\color{\colorMATH}\ensuremath{e=e_{1}+e_{2}}}} {\mtextbf{and}} {{\color{\colorMATH}\ensuremath{\tau  = {{\color{\colorSYNTAX}\mtexttt{sreal}}}@\Sigma ^{\prime}}}}: 
     \item  By inversion on {\mtextit{(H2)}} and {\mtextit{(H4)}} we have:
     \item  \ {{\color{\colorMATH}\ensuremath{\begin{array}{rclcl@{\hspace*{1.00em}}l
           } \Gamma   &{}\vdash {}& e_{1} &{}\mathrel{:}{}& {{\color{\colorSYNTAX}\mtexttt{sreal}}}@\Sigma _{1} & {{\color{\colorTEXT}\textnormal{{\mtextit{(H2.1)}}}}}
           \cr  \Gamma   &{}\vdash {}& e_{2} &{}\mathrel{:}{}& {{\color{\colorSYNTAX}\mtexttt{sreal}}}@\Sigma _{2} & {{\color{\colorTEXT}\textnormal{{\mtextit{(H2.2)}}}}}
           \end{array}}}}
     \item  \ {{\color{\colorMATH}\ensuremath{\begin{array}{rclcl@{\hspace*{1.00em}}l
           } \gamma _{1} &{}\vdash {}& e_{1} &{}\Downarrow _{n_{1}}{}& r_{1 1}    & {{\color{\colorTEXT}\textnormal{{\mtextit{(H4.1)}}}}}
           \cr  \gamma _{1} &{}\vdash {}& e_{2} &{}\Downarrow _{n_{2}}{}& r_{1 2}    & {{\color{\colorTEXT}\textnormal{{\mtextit{(H4.2)}}}}}
           \end{array}}}}
     \item  and we also have:
        {{\color{\colorMATH}\ensuremath{\Sigma ^{\prime} = \Sigma _{1} + \Sigma _{2}}}},
        {{\color{\colorMATH}\ensuremath{n^{\prime} = n_{1} + n_{2}}}},
        {{\color{\colorMATH}\ensuremath{v_{1} = r_{1 1}+r_{1 2}}}} and
        By IH ({{\color{\colorMATH}\ensuremath{n = n_{i}}}} decreasing), {\mtextit{(H1)}}, {\mtextit{(H2)}}, {\mtextit{(H3)}}, {\mtextit{(H4.1)}} and {\mtextit{(H4.2)}} we have:
        {{\color{\colorMATH}\ensuremath{\gamma _{2} \vdash  e_{1} \Downarrow _{n_{1}} r_{2 1}}}} (unique) {\mtextit{(IH.C1.1)}};
        {{\color{\colorMATH}\ensuremath{\gamma _{2} \vdash  e_{2} \Downarrow _{n_{2}} r_{2 2}}}} (unique) {\mtextit{(IH.C1.2)}};
        {{\color{\colorMATH}\ensuremath{r_{1 1} \sim ^{\Sigma \mathord{\cdotp }\Sigma _{1}} r_{2 1}}}} {\mtextit{(IH.C2.1)}}; and
        {{\color{\colorMATH}\ensuremath{r_{2 1} \sim ^{\Sigma \mathord{\cdotp }\Sigma _{2}} r_{2 2}}}} {\mtextit{(IH.C2.2)}}.
        Instantiate {{\color{\colorMATH}\ensuremath{v_{2} = r_{2 1} + r_{2 2}}}}.
        To show:
        {\mtextit{(C1)}}: {{\color{\colorMATH}\ensuremath{\gamma _{2} \vdash  e_{1} + e_{2} \Downarrow _{n_{1}+n_{2}} r_{2 1} + r_{2 2}}}} (unique); and
        {\mtextit{(C2)}}: {{\color{\colorMATH}\ensuremath{r_{1 1} + r_{1 2} \sim ^{\Sigma \mathord{\cdotp }(\Sigma _{1} + \Sigma _{2})} r_{2 1} + r_{2 2}}}}.
        {\mtextit{(C1)}} is by {\mtextit{(IH.C1.1)}}, {\mtextit{(IH.C1.2)}}, and {\mtextsc{ e-plus}} application and inversion.
        {\mtextit{(C2)}} is by {\mtextit{(IH.C2.1)}}, {\mtextit{(IH.C2.2)}}, \nameref{thm:plus-respects} and \nameref{thm:triangle}.
     \end{itemize} 
  \item  \begin{itemize}[label={},leftmargin=0pt]\item  {\mtextbf{Case}} {{\color{\colorMATH}\ensuremath{n=n^{\prime}+1}}} {\mtextbf{and}} {{\color{\colorMATH}\ensuremath{e=e_{1}\ltimes e_{2}}}} {\mtextbf{and}} either {{\color{\colorMATH}\ensuremath{\tau  = {{\color{\colorSYNTAX}\mtexttt{real}}}}}} {\mtextbf{or}} {{\color{\colorMATH}\ensuremath{\tau  = {{\color{\colorSYNTAX}\mtexttt{sreal}}}@\Sigma ^{\prime}}}}: 
     \item  Similar to previous two cases, using \nameref{thm:times-respects}
        instead of \nameref{thm:plus-respects}.
     \end{itemize} 
  \item  \begin{itemize}[label={},leftmargin=0pt]\item  {\mtextbf{Case}} {{\color{\colorMATH}\ensuremath{n=n^{\prime}+1}}} {\mtextbf{and}} {{\color{\colorMATH}\ensuremath{e = {{\color{\colorSYNTAX}\mtexttt{if0}}}(e_{1})\{ e_{2}\} \{ e_{3}\} }}}: 
     \item  By inversion on {\mtextit{(H4)}} we have 2 subcases, each which induce:
     \item  \ {{\color{\colorMATH}\ensuremath{\begin{array}{rclcl@{\hspace*{1.00em}}l
           } \gamma _{1} &{}\vdash {}& e_{1} &{}\Downarrow _{n_{1}}{}& b_{1} & {{\color{\colorTEXT}\textnormal{{\mtextit{(H4.1)}}}}}
           \end{array}}}}
     \item  By IH ({{\color{\colorMATH}\ensuremath{n = n_{1}}}} decreasing), {\mtextit{(H1)}}, {\mtextit{(H2)}}, {\mtextit{(H3)}} and {\mtextit{(H4.1)}} we have:
        {{\color{\colorMATH}\ensuremath{\gamma _{2} \vdash  e_{1} \Downarrow _{n_{1}} b_{2}}}} (unique) {\mtextit{(IH.1.C1)}}; and
        {{\color{\colorMATH}\ensuremath{b_{1} = b_{2}}}} {\mtextit{(IH.1.C2)}}.
     \item  \begin{itemize}[label=\textbf{-},leftmargin=*]\item  \begin{itemize}[label={},leftmargin=0pt]\item  {\mtextbf{Subcase}} {{\color{\colorMATH}\ensuremath{b_{1} = b_{2} = {\mtext{true}}}}}:
           \item  From prior inversion on {\mtextit{(H4)}} we also have:
           \item  \ {{\color{\colorMATH}\ensuremath{\begin{array}{rclcl@{\hspace*{1.00em}}l
                 } \gamma _{1} &{}\vdash {}& e_{2} &{}\Downarrow _{n_{2}}{}& v_{1}             & {{\color{\colorTEXT}\textnormal{{\mtextit{(H4.2)}}}}}
                 \end{array}}}}
           \item  By IH ({{\color{\colorMATH}\ensuremath{n = n_{2}}}} decreasing), {\mtextit{(H1)}}, {\mtextit{(H2)}}, {\mtextit{(H3)}} and {\mtextit{(H4.2)}} we have:
              {{\color{\colorMATH}\ensuremath{\gamma _{2} \vdash  e_{2} \Downarrow _{n_{2}} v_{2}}}} (unique) {\mtextit{(IH.2.C1)}}; and
              {{\color{\colorMATH}\ensuremath{v_{1} \sim  v_{2} \in  {\mathcal{V}}_{n_{1}^{\Sigma } + n_{2}}\llbracket \tau \rrbracket }}} {\mtextit{(IH.2.C2)}}.
              Instantiate {{\color{\colorMATH}\ensuremath{v_{2} = v_{2}}}}.
              To show:
              {\mtextit{(C1)}}: {{\color{\colorMATH}\ensuremath{\gamma _{2} \vdash  {{\color{\colorSYNTAX}\mtexttt{if}}}(e_{1})\{ e_{2}\} \{ e_{3}\}  \Downarrow _{n_{1}+n_{2}} v_{2}}}}; and
              {\mtextit{(C2)}}: {{\color{\colorMATH}\ensuremath{v_{1} \sim  v_{2} \in  {\mathcal{V}}_{n_{1}^{\Sigma } + n_{2}}\llbracket \tau \rrbracket }}}.
              {\mtextit{(C1)}} is by {\mtextit{(IH.1.C1)}}, {\mtextit{(IH.2.C2)}} and {\mtextsc{ e-if-true}} application and inversion.
              {\mtextit{(C2)}} is by {\mtextit{(IH.2.C2)}}.
           \end{itemize}
        \item  \begin{itemize}[label={},leftmargin=0pt]\item  {\mtextbf{Subcase}} {{\color{\colorMATH}\ensuremath{b_{1} = b_{2} = {\mtext{false}}}}}:
           \item  Analogous to case {{\color{\colorMATH}\ensuremath{b_{1} = b_{2} = {\mtext{true}}}}}.
           \end{itemize}
        \end{itemize}
     \end{itemize} 
  \item  \begin{itemize}[label={},leftmargin=0pt]\item  {\mtextbf{Case}} {{\color{\colorMATH}\ensuremath{n=n^{\prime}+1}}} {\mtextbf{and}} either {{\color{\colorMATH}\ensuremath{e=\langle e_{1},e_{2}\rangle }}} {\mtextbf{and}} {{\color{\colorMATH}\ensuremath{\tau  = \tau _{1} \times  \tau _{2}}}} {\mtextbf{or}} 
        {{\color{\colorMATH}\ensuremath{e = \hat \langle e_{1},e_{2}\hat \rangle }}} {\mtextbf{and}} {{\color{\colorMATH}\ensuremath{\tau  = (\sigma _{1} \otimes  \sigma _{2})@\Sigma ^{\prime}}}}:
     \item  Analogous to cases for {{\color{\colorMATH}\ensuremath{e = e_{1} + e_{2}}}} where {{\color{\colorMATH}\ensuremath{\tau  = {{\color{\colorSYNTAX}\mtexttt{real}}}}}} or {{\color{\colorMATH}\ensuremath{\tau  =
        {{\color{\colorSYNTAX}\mtexttt{sreal}}}@\Sigma ^{\prime}}}}, and instead of appealing to \nameref{thm:triangle},
        appealing to the definition of the logical relation.
     \end{itemize} 
  \item  \begin{itemize}[label={},leftmargin=0pt]\item  {\mtextbf{Case}} {{\color{\colorMATH}\ensuremath{n=n^{\prime}+1}}} {\mtextbf{and}} either {{\color{\colorMATH}\ensuremath{e=\pi _{i}(e)}}} {\mtextbf{or}} {{\color{\colorMATH}\ensuremath{e=\hat \pi _{i}(e)}}}): 
     \item  Analogous to cases for {{\color{\colorMATH}\ensuremath{e = e_{1} + e_{2}}}} where {{\color{\colorMATH}\ensuremath{\tau  = {{\color{\colorSYNTAX}\mtexttt{real}}}}}} or {{\color{\colorMATH}\ensuremath{\tau  =
        {{\color{\colorSYNTAX}\mtexttt{sreal}}}@\Sigma ^{\prime}}}}, and instead of appealing to \nameref{thm:triangle},
        appealing to the definition of the logical relation.
     \end{itemize} 
  \item  \begin{itemize}[label={},leftmargin=0pt]\item  {\mtextbf{Case}} {{\color{\colorMATH}\ensuremath{n=n^{\prime}+1}}} {\mtextbf{and}} either {{\color{\colorMATH}\ensuremath{e=e_{1}\mathrel{:: }e_{2}}}} {\mtextbf{and}} {{\color{\colorMATH}\ensuremath{\tau  = {{\color{\colorSYNTAX}\mtexttt{list}}}(\tau )}}} {\mtextbf{or}} 
        {{\color{\colorMATH}\ensuremath{e = e_{1}\mathrel{\hat {\mathrel{:: }}}e_{2}}}} {\mtextbf{and}} {{\color{\colorMATH}\ensuremath{\tau  = {{\color{\colorSYNTAX}\mtexttt{slist}}}(\sigma )@\Sigma ^{\prime}}}}:
     \item  Analogous to cases for {{\color{\colorMATH}\ensuremath{e = e_{1} + e_{2}}}} where {{\color{\colorMATH}\ensuremath{\tau  = {{\color{\colorSYNTAX}\mtexttt{real}}}}}} or {{\color{\colorMATH}\ensuremath{\tau  =
        {{\color{\colorSYNTAX}\mtexttt{sreal}}}@\Sigma ^{\prime}}}}, and instead of appealing to \nameref{thm:triangle},
        appealing to the definition of the logical relation.
     \end{itemize} 
  \item  \begin{itemize}[label={},leftmargin=0pt]\item  {\mtextbf{Case}} {{\color{\colorMATH}\ensuremath{n=n^{\prime}+1}}} {\mtextbf{and}} either {{\color{\colorMATH}\ensuremath{e={{\color{\colorSYNTAX}\mtexttt{case}}}(e_{1})\{ [].e_{2}\} \{ x_{1}\mathrel{:: }x_{2}.e_{3}\} }}} {\mtextbf{or}} 
        {{\color{\colorMATH}\ensuremath{e = {{\color{\colorSYNTAX}\mtexttt{case}}}(e_{1})\{ \hat {[}\hat {]}.e_{2}\} \{ x_{1}\mathrel{\hat {\mathrel{:: }}}x_{2}.e_{3}\} }}}:
     \item  Analogous to cases for {{\color{\colorMATH}\ensuremath{e = {{\color{\colorSYNTAX}\mtexttt{if}}}(e_{1})\{ e_{2}\} \{ e_{3}\} }}}.
     \end{itemize} 
  \item  \begin{itemize}[label={},leftmargin=0pt]\item  {\mtextbf{Case}} {{\color{\colorMATH}\ensuremath{n=n^{\prime}+1}}} {\mtextbf{and}} {{\color{\colorMATH}\ensuremath{e=\lambda _{z}x.\hspace*{0.33em}e}}} {\mtextbf{and}} {{\color{\colorMATH}\ensuremath{\tau  = \tau _{1} \rightarrow  \tau _{2}}}}: 
        By inversion on {\mtextit{(H4)}} we have:
        {{\color{\colorMATH}\ensuremath{n^{\prime} = 0}}}, and
        {{\color{\colorMATH}\ensuremath{v_{1} = \langle \lambda x_{z}.\hspace*{0.33em}e\mathrel{|}\gamma _{1}\rangle }}}.
        Instantiate {{\color{\colorMATH}\ensuremath{v_{2} = \langle \lambda _{z}x.\hspace*{0.33em}e\mathrel{|}\gamma _{2}\rangle }}}.
        To show:
        {\mtextit{(C1)}}: {{\color{\colorMATH}\ensuremath{\gamma _{2} \vdash  \lambda _{z}x.\hspace*{0.33em} e \Downarrow  \langle \lambda _{z}x.\hspace*{0.33em}e\mathrel{|}\gamma _{2}\rangle }}} unique; and
        {\mtextit{(C2)}}: {{\color{\colorMATH}\ensuremath{\langle \lambda _{z}x.\hspace*{0.33em}e\mathrel{|}\gamma _{1}\rangle  \sim  \langle \lambda _{z}x.\hspace*{0.33em}e\mathrel{|}\gamma _{2}\rangle  \in  {\mathcal{V}}_{n^{\prime}}^{\Sigma }\llbracket \tau _{1} \rightarrow  \tau _{2}\rrbracket }}}.
        Unfolding the definition, we must show: {{\color{\colorMATH}\ensuremath{\forall  n^{\prime \prime}\leq n^{\prime},v_{1},v_{2},.\hspace*{0.33em} v_{1} \sim  v_{2} \in 
        {\mathcal{V}}_{n^{\prime \prime}}^{\Sigma }\llbracket \tau _{1}\rrbracket  \Rightarrow \{ x\mapsto v_{1},z\mapsto \langle \lambda _{z}x.\hspace*{0.33em}e\mathrel{|}\gamma _{1}\rangle \} \uplus \gamma _{1},e \sim  \{ x\mapsto v_{2},z\mapsto \langle \lambda _{z}x.\hspace*{0.33em}e\mathrel{|}\gamma _{2}\rangle \} \uplus \gamma _{2},e \in 
        {\mathcal{E}}_{n^{\prime \prime}}^{\Sigma }\llbracket \tau _{2}\rrbracket }}}.
        To show, we assume:
        {{\color{\colorMATH}\ensuremath{v_{1} \sim  v_{2} \in  {\mathcal{V}}_{n^{\prime}}^{\Sigma }\llbracket \tau _{1}\rrbracket }}} {\mtextit{(C2.H1)}}.
        Note the following facts:
        {{\color{\colorMATH}\ensuremath{\gamma _{1} \sim  \gamma _{2} \in  {\mathcal{G}}_{n^{\prime}}^{\Sigma }\llbracket \Gamma \rrbracket }}}                             {\mtextit{(F1)}}; and
        {{\color{\colorMATH}\ensuremath{\{ x\mapsto v_{1}\} \uplus \gamma _{1} \sim  \{ x\mapsto v_{2}\} \uplus \gamma _{2} \in  {\mathcal{G}}_{n^{\prime}}^{\Sigma }\llbracket \{ x\mapsto \tau _{1},z\mapsto \tau _{1}\rightarrow \tau _{2}\} \uplus \Gamma \rrbracket }}} {\mtextit{(F2)}}.
        {\mtextit{(F1)}} holds from {\mtextit{H1}} and {\nameref{thm:step-index-weakening}.1}.
        {\mtextit{(F2)}} holds from {\mtextit{(F1)}}, {\mtextit{(C2.H1)}} and the definition of {{\color{\colorMATH}\ensuremath{\gamma  \sim  \gamma  \in  {\mathcal{G}}_{n}^{\Sigma }\llbracket \Gamma \rrbracket }}}.
        Conclusion holds by IH ({{\color{\colorMATH}\ensuremath{n = n^{\prime}}}} decreasing), {{\color{\colorMATH}\ensuremath{F2}}} and {{\color{\colorMATH}\ensuremath{C2.H1}}}.
     \end{itemize} 
  \item  \begin{itemize}[label={},leftmargin=0pt]\item  {\mtextbf{Case}} {{\color{\colorMATH}\ensuremath{n=n^{\prime}+1}}} {\mtextbf{and}} {{\color{\colorMATH}\ensuremath{e=e_{1}(e_{2})}}}: 
     \item  By inversion on {\mtextit{(H4)}} we have:
     \item  \ {{\color{\colorMATH}\ensuremath{\begin{array}{rclcl@{\hspace*{1.00em}}l
           } \gamma _{1}         &{}\vdash {}& e_{1}  &{}\Downarrow _{n_{1}}{}& \langle \lambda _{z}x.\hspace*{0.33em}e_{1}^{\prime}\mathrel{|}\gamma _{1}^{\prime}\rangle        & {{\color{\colorTEXT}\textnormal{{\mtextit{(H4.1)}}}}}
           \cr  \gamma _{1}         &{}\vdash {}& e_{2}  &{}\Downarrow _{n_{2}}{}& v_{1}                    & {{\color{\colorTEXT}\textnormal{{\mtextit{(H4.2)}}}}}
           \cr  \{ x\mapsto v_{1},z\mapsto \langle \lambda _{z}x.\hspace*{0.33em}e_{1}^{\prime}\mathrel{|}\gamma _{1}^{\prime}\rangle \} \uplus \gamma _{1}^{\prime} &{}\vdash {}& e_{1}^{\prime} &{}\Downarrow _{n_{3}}{}& v_{1}^{\prime} & {{\color{\colorTEXT}\textnormal{{\mtextit{(H4.3)}}}}}
           \end{array}}}}
     \item  and we also have:
        {{\color{\colorMATH}\ensuremath{n^{\prime} = n_{1} + n_{2} + n_{3} + 1}}}, and
        {{\color{\colorMATH}\ensuremath{v_{1} = v_{1}^{\prime}}}}.
        By IH ({{\color{\colorMATH}\ensuremath{n = n^{\prime}}}} decreasing), {\mtextit{(H1)}}, {\mtextit{(H2)}}, {\mtextit{(H3)}}, {\mtextit{(H4.1)}} and {\mtextit{(H4.2)}} we have:
        {{\color{\colorMATH}\ensuremath{\gamma _{2}         \vdash  e_{1}  \Downarrow _{n_{1}} \langle \lambda _{z}x.\hspace*{0.33em}e_{2}^{\prime}\mathrel{|}\gamma _{2}^{\prime}\rangle }}} {\mtextit{(IH.1.C1)}},
        {{\color{\colorMATH}\ensuremath{\gamma _{2}         \vdash  e_{2}  \Downarrow _{n_{2}} v_{2}}}}              {\mtextit{(IH.2.C1)}},
        {{\color{\colorMATH}\ensuremath{\langle \lambda _{z}x.\hspace*{0.33em}e_{1}^{\prime}\mathrel{|}\gamma _{1}^{\prime}\rangle  \sim  \langle \lambda _{z}x.\hspace*{0.33em}e_{2}^{\prime}\mathrel{|}\gamma _{2}^{\prime}\rangle  \in  {\mathcal{V}}_{n^{\prime}-n_{1}^{\Sigma }}\llbracket \tau _{1} \rightarrow  \tau _{2}\rrbracket }}}  {\mtextit{(IH.1.C2)}}, and
        {{\color{\colorMATH}\ensuremath{v_{1} \sim  v_{2} \in  {\mathcal{V}}_{n^{\prime}-n_{2}^{\Sigma }}\llbracket \tau _{1}\rrbracket }}} {\mtextit{(IH.2.C2)}}.
        Note the following facts:
        {{\color{\colorMATH}\ensuremath{n_{3} \leq  n^{\prime}-n_{1}-n_{2}}}}  {\mtextit{(F1)}};
        {{\color{\colorMATH}\ensuremath{\gamma _{1} \sim  \gamma _{2} \in  {\mathcal{G}}_{n-n_{1}^{\Sigma }-n_{2}}\llbracket \Gamma \rrbracket }}} {\mtextit{(F2)}}; and
        {{\color{\colorMATH}\ensuremath{v_{1} \sim  v_{2} \in  {\mathcal{V}}_{n-n_{1}^{\Sigma }-n_{2}}\llbracket \tau _{2}\rrbracket }}} {\mtextit{(F3)}}.
        {\mtextit{(F1)}} follows from {\mtextit{(H3)}} and {{\color{\colorMATH}\ensuremath{n^{\prime} = n_{1} + n_{2} + n_{3} + 1}}}.
        {\mtextit{(F2)}} and {\mtextit{(F3)}} follow from {\mtextit{(H1)}}, {\mtextit{(IH.2.C2)}} and \nameref{thm:step-index-weakening}.
        By IH ({{\color{\colorMATH}\ensuremath{n = n^{\prime}-n_{1}-n_{2}}}} decreasing), {\mtextit{(H2)}}, {\mtextit{(IH.1.C2)}}, {\mtextit{(IH.2.C2)}}, {\mtextit{(F1)}}, {\mtextit{(F2)}},
        {\mtextit{(F3)}} and {\mtextit{(H4.3)}} we have:
        {{\color{\colorMATH}\ensuremath{\{ x\mapsto v_{2},z\mapsto \langle \lambda _{z}x.\hspace*{0.33em}e_{2}^{\prime}\mathrel{|}\gamma _{2}^{\prime}\rangle \} \uplus \gamma _{2}^{\prime} \vdash  e_{2}^{\prime} \Downarrow _{n_{3}} v_{2}^{\prime}}}}  ({\mtextit{(IH.3.C1)}}) and
        {{\color{\colorMATH}\ensuremath{v_{1}^{\prime} \sim  v_{2}^{\prime} \in  {\mathcal{V}}_{n-n_{1}^{\Sigma }-n_{2}-n_{3}}\llbracket \tau _{2}\rrbracket }}}   {\mtextit{(IH.3.C2)}}.
        Instantiate {{\color{\colorMATH}\ensuremath{v_{2} = v_{2}^{\prime}}}}.
        To show:
        {\mtextit{(C1)}}: {{\color{\colorMATH}\ensuremath{\gamma _{2} \vdash  e_{1}(e_{2}) \Downarrow _{n_{1}+n_{2}+n_{3}+1} v_{2}^{\prime}}}} (unique); and
        {\mtextit{(C2)}}: {{\color{\colorMATH}\ensuremath{v_{1}^{\prime} \sim  v_{2}^{\prime} \in  {\mathcal{V}}_{n-n_{1 1}^{\Sigma }-n_{1 2}-n_{1 3}-1}\llbracket \tau _{2}\rrbracket }}}.
        {\mtextit{(C1)}} is immediate from {\mtextit{(IH.1.C1)}}, {\mtextit{(IH.2.C1)}} and {\mtextit{(IH.3.C1)}}.
        {\mtextit{(C2)}} is immediate from {\mtextit{(IH.3.C2)}} and {\nameref{thm:step-index-weakening}.2}.
     \end{itemize} 
  \item  \begin{itemize}[label={},leftmargin=0pt]\item  {\mtextbf{Case}} {{\color{\colorMATH}\ensuremath{n=n^{\prime}+1}}} {\mtextbf{and}} either {{\color{\colorMATH}\ensuremath{e = {{\color{\colorSYNTAX}\mtexttt{reveal}}}(e^{\prime})}}} {\mtextbf{or}} {{\color{\colorMATH}\ensuremath{e = 
        {{\color{\colorSYNTAX}\mtexttt{return}}}(e^{\prime})}}} {\mtextbf{or}} {{\color{\colorMATH}\ensuremath{e = {{\color{\colorSYNTAX}\mtexttt{laplace}}}[e_{1},e_{2}](e_{3})}}} {\mtextbf{or}} {{\color{\colorMATH}\ensuremath{e = x \leftarrow  e_{1} \mathrel{;} e_{2}}}}:
     \item  Follows from inductive hypothesis, post processing (for {{\color{\colorMATH}\ensuremath{{{\color{\colorSYNTAX}\mtexttt{laplace}}}}}}) and
        sequential composition (for {{\color{\colorMATH}\ensuremath{x \leftarrow  e_{1} \mathrel{;} e_{2}}}}) theorems from the
        differential privacy literature~\cite{privacybook}.
     \end{itemize}
  \end{itemize}
\end{proof}

\endgroup

\end{document}
\endinput